\numberwithin{equation}{section} 
\numberwithin{table}{section} 
\numberwithin{figure}{section}
\theoremstyle{plain}
\newtheorem{theorem}{Theorem}[section]
\newtheorem{definition}[theorem]{Definition}
\newtheorem{lemma}[theorem]{Lemma}
\newtheorem{corollary}[theorem]{Corollary}
\newtheorem{proposition}[theorem]{Proposition}
\newtheorem{assumption}[theorem]{Assumption}
\newtheorem{remark}[theorem]{Remark}
\theoremstyle{nonumberplain}
\newtheorem{proof}{Proof}
\SetMathAlphabet{\mathcal}{normal}{OMS}{cmsy}{m}{n} 
\SetMathAlphabet{\mathcal}{bold}{OMS}{cmsy}{m}{n} 
\providecommand{\ie}{i.~e.~}
\providecommand{\eg}{e.~g.~}
\providecommand{\cf}{cf.~}
\providecommand{\R}{\mathbb{R}}
\providecommand{\C}{\mathbb{C}}
\renewcommand{\C}{\mathbb{C}}
\providecommand{\ii}{\mathrm{i}}
\providecommand{\e}{\mathrm{e}}
\renewcommand{\Re}{\mathrm{Re} \,}
\renewcommand{\Im}{\mathrm{Im} \,}
\providecommand{\Hil}{\mathcal{H}}
\providecommand{\eps}{\varepsilon}
\providecommand{\ker}{\mathrm{ker} \, }
\providecommand{\ran}{\mathrm{ran} \, }
\providecommand{\ker}{\mathrm{ker} \,}
\providecommand{\dd}{\mathrm{d}}
\providecommand{\id}{\mathds{1}}
\providecommand{\Fourier}{\mathcal{F}}
\providecommand{\abs}[1]{\left \lvert #1 \right \rvert}
\providecommand{\sabs}[1]{\lvert #1 \vert}
\providecommand{\norm}[1]{\left \lVert #1 \right \rVert}
\providecommand{\snorm}[1]{\lVert #1 \rVert}
\providecommand{\bnorm}[1]{\bigl \lVert #1 \bigr \rVert}
\providecommand{\Norm}[1]{{\left \vert \kern-0.25ex \left \vert \kern-0.25ex \left \vert #1 \right \vert \kern-0.25ex \right \vert \kern-0.25ex \right \vert}}
\providecommand{\sNorm}[1]{{\vert \kern-0.25ex \vert \kern-0.25ex \vert #1 \vert \kern-0.25ex \vert \kern-0.25ex \vert}}
\providecommand{\bNorm}[1]{{\bigl \vert \kern-0.25ex \bigl \vert \kern-0.25ex \bigl \vert #1 \bigr \vert \kern-0.25ex \bigr \vert \kern-0.25ex \bigr \vert}}
\providecommand{\BNorm}[1]{{\Bigl \vert \kern-0.25ex \Bigl \vert \kern-0.25ex \Bigl \vert #1 \Bigr \vert \kern-0.25ex \Bigr \vert \kern-0.25ex \Bigr \vert}}
\providecommand{\scpro}[2]{\left \langle #1 , #2 \right \rangle}
\providecommand{\sscpro}[2]{\langle #1 , #2 \rangle}
\providecommand{\bscpro}[2]{\bigl \langle #1 , #2 \bigr \rangle}
\providecommand{\Bscpro}[2]{\Bigl \langle #1 , #2 \Bigr \rangle}
\providecommand{\sopro}[2]{\vert #1 \rangle \langle #2 \vert}
\providecommand{\Rot}{\mathrm{Rot} \,}
\providecommand{\Div}{\mathrm{Div} \,}
\providecommand{\Jphys}{\mathcal{J}_+}
\providecommand{\Maux}{M^{\mathrm{aux}}}
\title{The Schrödinger Formalism of Electro- \\ magnetism and Other Classical Waves}
\author{Giuseppe De Nittis${}^1$ \& Max Lein${}^2$}
\begin{document}

\maketitle
\vspace{-9mm}
\begin{center}
	${}^1$ Facultad de Matemáticas \& Instituto de Física, 
	Pontificia Universidad Católica de Chile \linebreak
	Avenida Vicuña Mackenna 4860, 
	Santiago, 
	Chile \linebreak
	{\footnotesize \href{mailto:gidenittis@mat.uc.cl}{\texttt{gidenittis@mat.uc.cl}}}
	\medskip
	\\
	${}^2$ Advanced Institute of Materials Research,  
	Tohoku University \linebreak
	2-1-1 Katahira, Aoba-ku, 
	Sendai, 980-8577, 
	Japan \linebreak
	{\footnotesize \href{mailto:maximilian.lein.d2@tohoku.ac.jp}{\texttt{maximilian.lein.d2@tohoku.ac.jp}}}
\end{center}
\begin{abstract}
	This paper systematically develops the Schrödinger formalism that is valid also for \emph{gyrotropic} media where the material weights $W = 
	\left (
	\begin{smallmatrix}
		\eps & \chi \\
		\chi^* & \mu \\
	\end{smallmatrix}
	\right ) \neq \overline{W}$ are complex. 
	This is a non-trivial extension of the Schrödinger formalism for \emph{non-}gyrotropic media (where $W = \overline{W}$) that has been known since at least the 1960s \cite{Wilcox:scattering_theory_classical_physics:1966,Kato:scattering_theory_two_Hilbert_spaces:1967}. 
	Here, Maxwell's equations are rewritten in the form $\ii \partial_t \Psi = M \Psi$ where the selfadjoint (hermitian) \emph{Maxwell operator} $M = W^{-1} \, \Rot \big \vert_{\omega \geq 0} = M^*$ takes the place of the Hamiltonian and $\Psi$ is a complex wave representing the physical field $(\mathbf{E},\mathbf{H}) = 2 \Re \Psi$. Writing Maxwell's equations in Schrödinger form gives us access to the rich toolbox of techniques initially developed for quantum mechanics and allows us to apply them to classical waves. 
	To show its utility, we explain how to identify conserved quantities in this formalism. Moreover, we sketch how to extend our ideas to other classical waves. 
\end{abstract}
\noindent{\scriptsize \textbf{Key words:} Maxwell equations, Maxwell operator, Schrödinger equation, quantum-wave analogies}\\ 
{\scriptsize \textbf{MSC 2010:} 35P99, 35Q60, 35Q61, 78A48, 81Q10}

\newpage
\tableofcontents

\section{Introduction} 
\label{intro}
The idea to realize quantum phenomena with classical waves, light in particular, harkens back to the beginnings of modern quantum mechanics in the 1920s. The heuristic basis for such \emph{quantum-light analogies} is the fact that both theories describe waves. Quantum-wave analogies can be read both ways: on the one hand they serve as a source for inspiration and guide physicists in the search for novel effects. On the other experiments with classical waves are often easier to realize, allow for more latitude in the design and can image the waves directly (in stark contrast to \emph{quantum} wave functions which are not observable as a matter of principle). To mention but one example is an experiment performed by Schneider et al \cite{Schneider_et_al:spin_wave_packet_dynamics_experiment:2008} who have managed to capture the dynamics of a spin wave packet in an yttrium iron garnet film. 
Such measurements allow physicists to probe the transition from “quantum” to “classical dynamics” and offer new insights into the inner workings of various quantum phenomena. 

Recent advances in manufacturing \cite{Fischer_et_al:direct_laser_writing:2013,Kadic_Bueckmann_Schittny_Wegener:metamaterials_beyond_electromagnetism:2013,Zieger_et_al:direct_laser_writing:2017} coupled with new theoretical insights gave new impulses in the search for novel media for classical waves. One particularly notable work is that of Raghu and Haldane \cite{Raghu_Haldane:quantum_Hall_effect_photonic_crystals:2008} who predicted the existence of unidirectional, backscattering-free edge modes in two-dimensional gyrotropic photonic crystals in analogy to the Quantum Hall Effect. Their work jumpstarted the search for topological phenomena in classical waves, which culminated in a number of spectacular experiments for electromagnetic \cite{Wang_et_al:unidirectional_backscattering_photonic_crystal:2009,Rechtsman_Zeuner_et_al:photonic_topological_insulators:2013,Lu_et_al:experimental_observation_Weyl_points:2015}, mechanical \cite{Kane_Lubensky:topological_boundary_modes_isostatic_lattices:2014,Paulose_Chen_Vitelli:topological_modes_dislocations_mechanical_metamaterials:2015,Suesstrunk_Huber:mechanical_topological_insulator:2015} and acoustic waves \cite{Xiao_et_al:geometric_phase_acoustic_systems:2015}. Furthermore, others \cite{Shindou_et_al:chiral_magnonic_edge_modes:2013,Jin_Lu_et_al:topological_magnetoplasmon:2016} have been proposed but not yet observed experimentally. 

Maxwell's equations are not mere partial differential equations, they contain more mathematical structure that can be exploited; they can be studied within the Lagrangian or Hamiltonian formalism (see \eg \cite[Chapter~2.3 and Chapter~13.1]{Spohn:dynamics_charged_particles:2004}, \cite[Chapter~1.6]{Marsden_Ratiu:intro_mechanics_symmetry:1999} or \cite[Chapter~12.7]{Jackson:electrodynamics:1998}). But when one wants to determine how and \emph{to what extent} specific quantum-wave analogies can be made rigorous, one natural starting point is to write the dynamical equation \emph{in the form of a Schrödinger equation}, 
\begin{align*}
	\ii \partial_t \psi(t) = M \psi(t) - \ii J(t) 
	, 
	&&
	\psi(0) = \phi \in \Hil
	, 
\end{align*}
where $M = M^*$ is a selfadjoint operator on a \emph{complex} Hilbert space $\Hil$ and $J(t)$ a current. 

The general idea is by no means new: the founding fathers of quantum mechanics were well-aware that the \emph{in vacuo} Maxwell equations are \emph{the} relativistically covariant equations for a massless spin-$1$ particle \cite[pp.~151 and 198]{Wigner:representations_Lorentz_group:1939}. Similarly, the Schrödinger formalism has been employed to develop scattering theory for electromagnetic \cite{Birman_Solomyak:L2_theory_Maxwell_operator:1987} and other classical waves \cite{Wilcox:scattering_theory_classical_physics:1966,Kato:scattering_theory_two_Hilbert_spaces:1967,Schulenberger_Wilcox:completeness_wave_operators:1971,Reed_Simon:scattering_theory_wave_equations:1977} among other things \cite{Birman_Solomyak:L2_theory_Maxwell_operator:1987,Figotin_Klein:localization_classical_waves_II:1997,Kuchment:math_photonic_crystals:2001,DeNittis_Lein:sapt_photonic_crystals:2013}. However, these works crucially assume that the material weights $W = \left (
\begin{smallmatrix}
	\eps & \chi \\
	\chi^* & \mu \\
\end{smallmatrix}
\right ) = \overline{W}$, which phenomenologically describe the properties of the medium, be \emph{real} — and this \emph{excludes gyrotropic media} where  $W \neq \overline{W}$. During our investigation of the ray optics limit \cite[Section~2.2]{DeNittis_Lein:ray_optics_photonic_crystals:2014}, we discovered that \emph{the Maxwell equations for media with complex weights that are commonly used in the physics literature are incomplete.} What has been missing is the restriction of these equations to (complex) waves composed only of non-negative or non-positive frequencies to ensure that they describe the propagation of \emph{real-valued} fields (compare also with the discussion in \cite[Section~6]{DeNittis_Lein:symmetries_Maxwell:2014}). However, in \cite[Section~2.2]{DeNittis_Lein:ray_optics_photonic_crystals:2014} we only gave the solution and promised a first principles derivation from Maxwell's equations for linear, dispersive media later. With this article we will pay our debt to the reader. 

One of the problems which arises is to balance the real-valuedness of physical fields with the need to work with complex Hilbert spaces in order to be able to reach into the rich mathematical tool box developed for quantum systems. While the idea to write an electromagnetic wave as the real part of a complex wave is part and parcel of every course on electromagnetism, usually no explicit reference to the Hilbert spaces these waves belong to is made, something that needs to be specified in the Schrödinger formalism; this is especially relevant when the waves propagate in media rather than vacuum. The two earlier attempts to adapt the Cartan-Altland-Zirnbauer classification scheme \cite{Altland_Zirnbauer:superconductors_symmetries:1997,Schnyder_Ryu_Furusaki_Ludwig:classification_topological_insulators:2008,Chiu_Teo_Schnyder_Ryu:classification_topological_insulators:2016} to photonic crystals by the authors \cite{DeNittis_Lein:symmetries_Maxwell:2014} and to coupled mechanical oscillators by Süsstrunk and Huber \cite{Suesstrunk_Huber:classification_mechanical_metamaterials:2016} neglected to take the reality of physical fields properly into account. Explaining why this is so and how to develop a “fully native” topological classification for classical waves will be one of the main points of this paper. More specifically, our main goals are: 
\begin{enumerate}[(1)]
	\item We derive the \emph{Schrödinger formalism} for electromagnetic waves propagating in linear, dispersionless, lossless media that takes the real-valuedness of the physical fields into account. Recasting the dynamical equation allows one to readily adapt and apply techniques initially developed for quantum mechanics to classical electromagnetism. At least for gyrotropic media, this is new. 
	\item To demonstrate the utility of the Schrödinger formalism, we show how it helps to identify physically meaningful conserved quantities. Moreover, we contrast and compare that to works that use the Lagrangian mechanics and Noether's theorem. 
	\item And we show that the above scheme can be implemented also for other classical waves, including coupled mechanical oscillators and transverse acoustic waves. 
\end{enumerate}

\paragraph{Outline of the paper} 
First, we derive Maxwell's equations for linear, possibly gyrotropic media in Section~\ref{Maxwell_equations} from linear dispersive constitutional relations. Then we introduce the Schrödinger formalism of Maxwell's equations (Section~\ref{Schroedinger}), the main matter of this work. In Section~\ref{comparison} we contrast and compare the Schrödinger formalism of electromagnetism with previous approaches in the literature and quantum mechanics. To show the utility of the Schrödinger formalism, we include a discussion of conserved quantities for electromagnetism in media in Section~\ref{conserved}. Moreover, to illustrate that the ideas of Section~\ref{Schroedinger} apply to other classical wave equations, we sketch how to formulate the Schrödinger formalism for certain linear acoustic equations (Section~\ref{other_waves}). 

\paragraph{Acknowledgements} 
\label{intro:acknowledgements}
G.~D.~research is supported by the grant \emph{Iniciaci\'{o}n en Investigaci\'{o}n 2015} - $\text{N}^{\text{o}}$ 11150143 funded by FONDECYT. 
M.~L.~has been supported by JSPS through a WAKATE~B grant (grant number 16K17761) and a Fusion grant from the WPI-AIMR. M.~L.\ is indebted to Kostya Bliokh for the discussions which initiated this work and appreciates his comments on the first version of the manuscript. The authors would also like to thank Ilya Dodin for very useful references, and generous explanations about the applicability of our framework to problems from plasma physics. 
\section{Deriving Maxwell's equations for linear, non-dispersive media} 
\label{Maxwell_equations}
The purpose of this section is to systematically derive Maxwell's equations for gyrotropic media where the material weights are complex. The difficulty here is that there are actually \emph{two distinct} Maxwell's equations, one for positive and one for negative frequencies. This ambiguity is not a mere footnote, but a manifestation of the real-valuedness of the physical fields $(\mathbf{E},\mathbf{H})$. To derive these equations systematically, we implement the following strategy:%
\begin{enumerate}[(1)]
	\item We start with Maxwell's equations for linear, \emph{dispersive} media in the time domain, equation~\eqref{Maxwell_equations:eqn:full_equations}. 
	\item Taking the Fourier transform in time, these are then rewritten in the frequency domain.
	\item Assuming that the material weights are approximately constant for waves from a narrow frequency window, we replace the frequency-dependent weights with the frequency independent ones. 
	\item We then Fourier transform back to obtain \emph{two} equations~\eqref{Maxwell_equations:eqn:approximate_Maxwell_equations}, one for positive and one for negative frequencies. 
\end{enumerate}
These equations form the natural starting point for the Schrödinger formalism we will present later in Section~\ref{Schroedinger}; we will postpone a discussion of necessary technical assumptions and other mathematical details until then. Given this goal, we will also take the opportunity here and show how a Hilbert space structure emerges naturally in this context; the scalar product~\eqref{Maxwell_equations:eqn:scalar_product_Hil_C} is obtained from the field energy.

\subsection{Fundamental equations and notions} 
\label{Maxwell_equations:fundamental_equations}
During our investigation of the ray optics limit \cite{DeNittis_Lein:ray_optics_photonic_crystals:2014}, we learnt that it was helpful to compare not just (i) states and (ii) the dynamical equation for the classical waves with the corresponding quantum system, but also include (iii) observables and (iv)~additional information on the physics (such as typical states, regimes, length and time scales) in our considerations. This paper will initially focus on (i) and (ii) and revisit (iii) and (iv) in Sections~\ref{conserved}. 

The propagation of electromagnetic waves in media is described by 4 equations that we collectively refer to as \emph{Maxwell's equations in media}: 
\begin{subequations}\label{Maxwell_equations:eqn:full_equations}
	\begin{align}
		\frac{\partial}{\partial t} \left (
		\begin{matrix}
			\mathbf{D}(t) \\
			\mathbf{B}(t) \\
		\end{matrix}
		\right ) &= \left (
		\begin{matrix}
			+ \nabla \times \mathbf{H}(t) \\
			- \nabla \times \mathbf{E}(t) \\
		\end{matrix}
		\right ) - \left (
		\begin{matrix}
			j^D(t) \\
			0 \\
		\end{matrix}
		\right )
		&&
		\mbox{(dynamical equation)}
		\label{Maxwell_equations:eqn:full_equations:dynamics}
		\\
		\left (
		\begin{matrix}
			\nabla \cdot \mathbf{D}(t) \\
			\nabla \cdot \mathbf{B}(t) \\
		\end{matrix}
		\right ) &= \left (
		\begin{matrix}
			\rho^D(t) \\
			0 \\
		\end{matrix}
		\right )
		&&
		\mbox{(constraint equation)}
		\label{Maxwell_equations:eqn:full_equations:constraint}
		\\
		\bigl ( \mathbf{D}(t) , \mathbf{B}(t) \bigr ) &= \bigl ( \mathcal{W}(\mathbf{E},\mathbf{H}) \bigr )(t) 
		&&
		\mbox{(constitutive relations)}
		\label{Maxwell_equations:eqn:full_equations:constitutive_relations}
		\\
		\partial_t \rho^D + \nabla \cdot j^D &= 0 
		&&
		\mbox{(charge conservation)}
		\label{Maxwell_equations:eqn:full_equations:charge_conservation}
	\end{align}
\end{subequations}
These equations involve the \emph{electromagnetic field} $(\mathbf{E},\mathbf{H})$ and the \emph{auxiliary fields} $(\mathbf{D},\mathbf{B})$ consisting of the electric displacement $\mathbf{D}$ and the magnetic induction $\mathbf{B}$; by definition $(\mathbf{D},\mathbf{B})$ are such that \eqref{Maxwell_equations:eqn:full_equations:dynamics} is satisfied, and the dependence of $(\mathbf{D},\mathbf{B})$ on the electromagnetic field is codified in the \emph{constitutive relations}~\eqref{Maxwell_equations:eqn:full_equations:constitutive_relations} which provide an effective description of the interaction between the electromagnetic field and the microscopic charges in the medium \cite[Chapter~6]{Jackson:electrodynamics:1998}. Hereon after, we shall assume that the fields $\bigl ( \mathbf{E}(t) , \mathbf{H}(t) \bigr )$ have square integrable amplitudes, \ie the waves are elements of the Banach space of physical fields 
\begin{align}
	\Hil_{\mathrm{phys}} = L^2(\R^3,\R^6) 
	= \Bigl \{ (\mathbf{E},\mathbf{H}) : \R^3 \longrightarrow \R^6 \; \; \big \vert \; \; \int_{\R^3} \dd x \bigl ( \sabs{\mathbf{E}(x)}^2 + \sabs{\mathbf{H}(x)}^2 \bigr ) < \infty \Bigr \}
	. 
\end{align}
For the media of interest here this is equivalent to considering waves with finite field energy. 

Sources, that is charge densities $\pmb{\rho} = (\rho^D,0)$ and current densities $\mathbf{J} = (j^D,0)$, can drive electromagnetic fields, and appear as inhomogeneities in the differential equations; they have to satisfy \emph{local charge conservation}~\eqref{Maxwell_equations:eqn:full_equations:charge_conservation}. Mathematically, we could admit magnetic charge densities $\rho^B$ and current densities $j^B$, but those are unphysical. Because electromagnetic media do not necessarily carry waves of all frequencies and directions, \eg photonic crystals and dispersive media may sport photonic band gaps, only the contribution to the current density that is proportional to excitable states is relevant; we will go into more detail at the end of Section~\ref{Maxwell_equations:hilbert_space} and condense it down to Assumption~\ref{Maxwell_equations:assumption:current_density} for the current density. 

For linear \emph{dispersive} media the auxiliary fields at time $t$ (see \eg \cite[Chapter~7.10]{Jackson:electrodynamics:1998}, \cite[Section~2]{Butcher_Cotter:nonlinear_optics:1991} or \cite[Chapter~8.1]{Van_Bladel:electromagnetism:2007})
\begin{align}
	\left (
	\begin{matrix}
		\mathbf{D}(t,x) \\
		\mathbf{B}(t,x) \\
	\end{matrix}
	\right ) &= \bigl ( \mathcal{W}(\mathbf{E},\mathbf{H}) \bigr )(t) 
	= \bigl ( W \ast (\mathbf{E},\mathbf{H}) \bigr )(t)
	= \int_{-\infty}^t \dd s \, W(t - s,x) \, \left (
	\begin{matrix}
		\mathbf{E}(s,x) \\
		\mathbf{H}(s,x) \\
	\end{matrix}
	\right )
	, 
	\label{Maxwell_equations:eqn:linear_dispersive_constitive_relations}
\end{align}
are computed from the past history $(-\infty,t] \ni t \mapsto \bigl ( \mathbf{E}(t) , \mathbf{H}(t) \bigr )$ of the electromagnetic field by convolving $\bigl ( \mathbf{E}(t) , \mathbf{H}(t) \bigr )$ with the $6 \times 6$ matrix-valued function 
\begin{align*}
	W(t,x) &= \left (
	\begin{matrix}
		\eps(t,x) & \chi^{EH}(t,x) \\
		\chi^{HE}(t,x) & \mu(t,x) \\
	\end{matrix}
	\right )
\end{align*}
in the time variable; it is customarily split into $3 \times 3$ blocks, namely the electric permittivity $\eps$, the magnetic permeability $\mu$, and the magnetoelectric couplings $\chi^{EH}$ and $\chi^{HE}$. $\mathcal{W}$ describes the response of the medium to the impinging electromagnetic waves, and necessarily needs to satisfy the following two conditions: 
\begin{assumption}\label{Maxwell_equations:assumption:reality_causality}
	\begin{enumerate}[(a)]
		\item $W = \overline{W}$ is real (electromagnetic fields are \emph{real}). 
		\item $W(t) = 0$ for all $t < 0$ (\emph{causality}, events in the presents are only influenced by the past).
	\end{enumerate}
\end{assumption}
Typically, $W(t,x) = W_{\mathrm{nd}}(x) \, \delta(t) + W_{\mathrm{dis}}(t,x)$ is split into a non-dispersive part $W_{\mathrm{nd}}$ that describes the contribution which reacts instantaneously, and a \emph{bona fide} dispersive part $W_{\mathrm{dis}}$ that depends on the past field configuration, 
\begin{align*}
	\left (
	\begin{matrix}
		\mathbf{D}(t,x) \\
		\mathbf{B}(t,x) \\
	\end{matrix}
	\right ) &= \bigl ( \mathcal{W}(\mathbf{E},\mathbf{H}) \bigr )(t)
	= W_{\mathrm{nd}}(x) \, \left (
	\begin{matrix}
		\mathbf{E}(t,x) \\
		\mathbf{H}(t,x) \\
	\end{matrix}
	\right ) + \bigl ( W_{\mathrm{dis}} \ast (\mathbf{E},\mathbf{H}) \bigr )(t)
	. 
\end{align*}
The fact that the auxiliary fields $\bigl ( \mathbf{D}(t) , \mathbf{B}(t) \bigr )$ not only depend on the electromagnetic field $\bigl ( \mathbf{E}(t) , \mathbf{H}(t) \bigr )$ at time $t$ but on the field at times in the past means \emph{the medium has a memory}. Note that non-dispersive media have no memory, because the auxiliary fields $\bigl ( \mathbf{D}(t) , \mathbf{B}(t) \bigr )$ are computed from the electromagnetic field $\bigl ( \mathbf{E}(t) , \mathbf{H}(t) \bigr )$ at time $t$. 

Later on in Section~\ref{Maxwell_equations:heuristic_monochromatic_approximation} we will explain how to neglect dispersion by replacing the convolution integral in \eqref{Maxwell_equations:eqn:linear_dispersive_constitive_relations} with 
\begin{align}
	\left (
	\begin{matrix}
		\mathbf{D}(t,x) \\
		\mathbf{B}(t,x) \\
	\end{matrix}
	\right ) &= W(x) \, \left (
	\begin{matrix}
		\mathbf{E}(t,x) \\
		\mathbf{H}(t,x) \\
	\end{matrix}
	\right )
	\label{Maxwell_equations:eqn:linear_nondispersive_constitive_relations}
\end{align}
where $W(x)$ is now independent of time. 

The constitutive relations of \emph{non}-linear media are typically expressed as power series in the fields where the higher-order terms are either iterated convolutions in time akin to \eqref{Maxwell_equations:eqn:linear_dispersive_constitive_relations} (see \eg \cite{Babin_Figotin:nonlinear_Maxwell_1:2001,Babin_Figotin:nonlinear_Maxwell:2003}) or instantaneous in time similar to \eqref{Maxwell_equations:eqn:linear_nondispersive_constitive_relations} \cite[§~80]{Landau_Lifshitz:electromagnetism_media:1963}. In principle the analysis made in this paper can be adapted to the non-linear setting. Babin and Figotin, for example, have shown that in a certain scaling limit, the non-linear Maxwell equations reduce to a non-linear Schrödinger equation \cite{Babin_Figotin:nonlinear_Maxwell_4:2005} and taken the reality of electromagnetic waves correctly into account; while strictly speaking their derivation was done for \emph{non}-gyrotropic media, in view of our results, their arguments should extend straightforwardly to media with complex material weights. 
\medskip

\noindent
\emph{Electromagnetic observables}, constituent~(iii) of the physical theory of electromagnetism, are \emph{functionals of the fields} $\mathcal{F} : L^2(\R^3,\R^6) \longrightarrow \R$, they map electromagnetic fields onto real numbers. In stark contrast to quantum mechanics, the components of $\mathbf{E}$ and $\mathbf{H}$ themselves are accessible to experiment. While observables do not have to be linear or quadratic in the fields, the most frequently considered ones are. We postpone a more in-depth discussion of electromagnetic observables and examples to Section~\ref{conserved}. 

\subsection{Maxwell's equations in the frequency domain} 
\label{Maxwell_equations:Fourier}
To see how the response of the medium depends on the light's frequency, we apply the (inverse) Fourier transform in time
\begin{align}
	\widehat{\Psi}(\omega) = \bigl ( \Fourier^{\, -1} \Psi \bigr )(\omega) 
	= \frac{1}{\sqrt{2\pi}} \int_{\R} \dd t \, \e^{+ \ii t \omega} \, \Psi(t) 
	. 
	\label{Maxwell_equations:eqn:Fourier_transform_time}
\end{align}
to Maxwell's equations in order to decompose $\bigl ( \mathbf{E}(t) , \mathbf{H}(t) \bigr )$ into its frequency components: 
\begin{subequations}\label{Maxwell_equations:eqn:Maxwell_frequency_domain}
	\begin{align}
		\omega \, \widehat{W}(\omega) \, \bigl ( \widehat{\mathbf{E}}(\omega) , \widehat{\mathbf{H}}(\omega) \bigr ) &= \Rot \bigl ( \widehat{\mathbf{E}}(\omega) , \widehat{\mathbf{H}}(\omega) \bigr ) - \ii \, \widehat{\mathbf{J}}(\omega) 
		\label{Maxwell_equations:eqn:Maxwell_frequency_domain:dynamics}
		\\
		\Div \bigl ( \widehat{W}(\omega) \, \bigl ( \widehat{\mathbf{E}}(\omega) , \widehat{\mathbf{H}}(\omega) \bigr ) \bigr ) &= \widehat{\pmb{\rho}}(\omega) 
		\label{Maxwell_equations:eqn:Maxwell_frequency_domain:constraint}
		\\
		\bigl ( \widehat{\mathbf{D}}(\omega) , \widehat{\mathbf{B}}(\omega) \bigr ) &= \widehat{W}(\omega) \, \bigl ( \widehat{\mathbf{E}}(\omega) , \widehat{\mathbf{H}}(\omega) \bigr )
		\label{Maxwell_equations:eqn:Maxwell_frequency_domain:constitutive_relations}
		\\
		\omega \, \widehat{\pmb{\rho}}(\omega) + \ii \, \Div \widehat{\mathbf{J}}(\omega) &= 0 
		\label{Maxwell_equations:eqn:Maxwell_frequency_domain:charge_conservation}
	\end{align}
\end{subequations}
To simplify notation, we have abbreviated the Fourier transformed electromagnetic field by $\widehat{\Psi} = \bigl ( \widehat{\psi^E} , \widehat{\psi^H} \bigr )$, the charge density by $\widehat{\pmb{\rho}} = \bigl ( \widehat{\rho^D} , \widehat{\rho^B} \bigr )$ and the current density by $\widehat{\mathbf{J}} = \bigl ( \widehat{j^D} , \widehat{j^B} \bigr )$, and introduced the \emph{free Maxwell} or rotation operator 
\begin{align*}
	\Rot = \left (
	\begin{matrix}
		0 & + \ii \nabla^{\times} \\
		- \ii \nabla^{\times} & 0 \\
	\end{matrix}
	\right ) 
	. 
\end{align*}
Moreover, $\Div(\mathbf{D},\mathbf{B}) = \bigl ( \nabla \cdot \mathbf{D} , \nabla \cdot \mathbf{B} \bigr )$ consists of two copies of the usual divergence, and for convenience we have absorbed a factor of $\sqrt{2\pi}$ into the definition of $\widehat{W} = \sqrt{2\pi} \, \Fourier^{\, -1} W$ that stems from $\Fourier^{\, -1} (f \ast g) = \sqrt{2\pi} \, \Fourier^{\, -1} f \, \Fourier^{\, -1} g$. The fact that $\mathcal{W}$ is real, $W = \overline{W}$, yields a relation of $\widehat{W}$ at $\pm \omega$, 
\begin{align}
	\widehat{W}(-\omega) = \overline{\widehat{W}(+\omega)} 
	. 
	\label{Maxwell_equations:eqn:symmetry_hatW}
\end{align}
The Fourier transforms of other real quantities such as $\widehat{\Psi}(-\omega) = \overline{\widehat{\Psi}(\omega)}$, $\widehat{\pmb{\rho}}(-\omega) = \overline{\widehat{\pmb{\rho}}(\omega)}$ and $\widehat{\mathbf{J}}(-\omega) = \overline{\widehat{\mathbf{J}}(+\omega)}$ also satisfy relation~\eqref{Maxwell_equations:eqn:symmetry_hatW}. Real electromagnetic fields $\bigl ( \overline{\mathbf{E}} , \overline{\mathbf{H}} \bigr ) = (\mathbf{E},\mathbf{H})$ are necessarily composed of complex waves with frequencies $\pm \omega$, 
\begin{align*}
	\bigl ( \mathbf{E}(t) , \mathbf{H}(t) \bigr ) &= \frac{1}{\sqrt{2\pi}} \int_{\R} \dd \omega \, \e^{+ \ii t \omega} \, \bigl ( \widehat{\mathbf{E}}(\omega) , \widehat{\mathbf{H}}(\omega) \bigr )
	.  
\end{align*}
Fortunately, the information contained in the positive and negative frequency contributions  
\begin{align}
	\Psi_{\pm}(t) &= \frac{1}{\sqrt{2\pi}} \int_0^{+\infty} \dd \omega \, \e^{\pm \ii t \omega} \, \bigl ( \widehat{\mathbf{E}}(\pm \omega) , \widehat{\mathbf{H}}(\pm \omega) \bigr )
	\label{Maxwell_equations:eqn:complex_positive_negative_frequency_waves}
\end{align}
is redundant because the real-valuedness of $(\mathbf{E},\mathbf{H})$ implies their \emph{relative phase is locked}, 
\begin{align}
	\Psi_- = \overline{\Psi_+} 
	, 
	\label{Maxwell_equations:eqn:phase_locking_condition}
\end{align}
and it suffices to focus on \eg $\Psi_+$ only. Similarly, we denote the positive and negative frequency contributions to the current and charge densities with $J_{\pm}$ and $\rho_{\pm}$, and also here positive and negative frequency parts satisfy the phase locking condition~\eqref{Maxwell_equations:eqn:phase_locking_condition}. Note that even though $\Psi_+ \neq \Psi_- = \overline{\Psi_+}$, their real parts necessarily agree, 
\begin{align}
	2 \Re \Psi_+ &= \Psi_+ + \overline{\Psi_+} 
	= (\mathbf{E},\mathbf{H})
	= \overline{\Psi_-} + \Psi_- 
	= 2 \Re \Psi_-
	. 
	\label{Maxwell_equations:eqn:real_states_determined_by_positive_frequency_part}
\end{align}
Viewing Maxwell's equations in the frequency domain allows us clearly identify different frequency regimes where the medium has certain characteristic properties. In addition to reality and causality, repeated as (a) and (b) below, physics imposes three more constraints on the material weights (see \eg \cite[Section~6.8]{Jackson:electrodynamics:1998}, \cite{Silveirinha:Z2_topological_index_continuous_photonic_systems:2016} or \cite[Section~2]{Butcher_Cotter:nonlinear_optics:1991}): 
\begin{assumption}[Material weights]\label{Maxwell_equations:assumption:5_minimal_assumptions}
	\begin{enumerate}[(a)]
		\item The physical fields are \emph{real-valued}, \ie the material weights satisfy $\widehat{W}(-\omega) = \overline{\widehat{W}(+\omega)}$. 
		\item \emph{Causality} holds, \ie $W(t) = 0$ for all $t < 0$, and therefore $\omega \mapsto \widehat{W}(\omega)$ extends to a function which is \emph{analytic} on the upper half complex plane $\C^+$. 
		\item At \emph{large frequencies} $\omega \to \pm \infty$, any medium behaves \emph{like vacuum}, \ie $\lim_{\omega \to \pm \infty} \widehat{W}(\omega) = \id$ in suitable units. 
		\item For \emph{small frequencies}, the material weights are real and the bianotropic tensors $\widehat{\chi}^{EH}$ and $\widehat{\chi}^{HE}$ which quantify the magnetoelectric coupling vanish, 
		\begin{align*}
			\lim_{\omega \to 0} \widehat{W}(\omega) &= \lim_{\omega \to 0} \left (
			\begin{matrix}
				\widehat{\eps}(\omega) & \widehat{\chi}^{EH}(\omega) \\
				\widehat{\chi}^{HE}(\omega) & \widehat{\mu}(\omega) \\
			\end{matrix}
			\right )
			= \left (
			\begin{matrix}
				\widehat{\eps}(0) & 0 \\
				0 & \widehat{\mu}(0) \\
			\end{matrix}
			\right )
			= \left (
			\begin{matrix}
				\overline{\widehat{\eps}(0)} & 0 \\
				0 & \overline{\widehat{\mu}(0)} \\
			\end{matrix}
			\right )
			. 
		\end{align*}
		\item The \emph{field energy density}
		\begin{align}
			\Bscpro{\bigl ( \widehat{\mathbf{E}}(\omega) , \widehat{\mathbf{H}}(\omega) \bigr )}{\tfrac{\dd}{\dd \omega} \bigl ( \omega \, \widehat{W}(\omega) \bigr ) \, \bigl ( \widehat{\mathbf{E}}(\omega) , \widehat{\mathbf{H}}(\omega) \bigr )} > 0 
			\label{Maxwell_equations:eqn:field_energy_density_frequency}
		\end{align}
		is \emph{positive definite} for all $\omega$, \ie there exists a constant $c > 0$ so that 
		\begin{align*}
			\frac{\dd}{\dd \omega} \bigl ( \omega \, \widehat{W}(\omega) \bigr ) \geq c \, \id 
			. 
		\end{align*}
	\end{enumerate}
\end{assumption}
%

\subsection{The Hilbert space of electromagnetic fields with finite field energy} 
\label{Maxwell_equations:hilbert_space}
The field energy density~\eqref{Maxwell_equations:eqn:field_energy_density_frequency} not only singles out electromagnetic fields with finite field energy, but it also gives rise to a scalar product — and therefore, a Hilbert space. This is important as a mathematical definition of linear partial differential equations requires us to fix a vector space on which these equations act. The purpose of this subsection is to give a suitable re-interpretation of equations~\eqref{Maxwell_equations:eqn:complex_positive_negative_frequency_waves}–\eqref{Maxwell_equations:eqn:real_states_determined_by_positive_frequency_part}, and while parts of it seem technical and perhaps unnecessary, we will need all of this to arrive at physically relevant equations for gyrotropic, non-dispersive media. 

The \emph{total field energy} of an electromagnetic wave is computed by integrating up all the frequency components, 
\begin{align}
	\mathcal{E}(\mathbf{E},\mathbf{H}) &= \frac{1}{2} \int_{\R} \dd \omega \, \Bscpro{\bigl ( \widehat{\mathbf{E}}(\omega) , \widehat{\mathbf{H}}(\omega) \bigr )}{\tfrac{\dd}{\dd \omega} \bigl ( \omega \, \widehat{W}(\omega) \bigr ) \, \bigl ( \widehat{\mathbf{E}}(\omega) , \widehat{\mathbf{H}}(\omega) \bigr )} 
	= \bnorm{(\mathbf{E},\mathbf{H})}_{\C}^2
	, 
	\label{Maxwell_equations:eqn:integrated_field_energy_frequency}
\end{align}
which is defined in terms of the scalar product for $\C^6$-valued waves on $\R^3$, 
\begin{align*}
	\scpro{\Phi}{\Psi} = \int_{\R^d} \dd x
	 \, \Phi(x) \cdot \Psi(x)
	= \int_{\R^d} \dd x \, \sum_{j = 1}^6 \overline{\Phi_j(x)} \, \Psi_j(x)
	. 
\end{align*}
\emph{Complex} linear combinations of such (time- and space-dependent) fields $\bigl ( \mathbf{E}(t,x) , \mathbf{H}(t,x) \bigr )$ which solve Maxwell's equations~\eqref{Maxwell_equations:eqn:full_equations} in the absence of sources form the vector space 
\begin{align}
	\Hil_{\C} :& \negmedspace = \Bigl \{ \Psi \in L^{\infty} \bigl ( \R , L^2(\R^3,\C^6) \bigr ) \; \; \big \vert \; \; \mathcal{E}(\Psi) < \infty 
	,
	\Bigr .
	\notag \\
	&\qquad \qquad \qquad \qquad \qquad \qquad \quad \; 
	\Bigl .
	\mbox{$\Psi(t)$ solves \eqref{Maxwell_equations:eqn:full_equations:dynamics} with $\mathbf{J}(t) = 0$}
	\Bigr \} 
	. 
	\label{Maxwell_equations:eqn:complixification_Hilbert_space}
\end{align}
where “distances” are naturally measured in terms of the \emph{(field) energy norm} $\snorm{\Psi}_{\C} = \sqrt{\mathcal{E}(\Psi)}$ and the \emph{energy scalar product} 
\begin{align}
	\scpro{\Phi}{\Psi}_{\C} = \frac{1}{2} \int_{\R} \dd \omega \, \Bscpro{\widehat{\Phi}(\omega)}{\tfrac{\dd}{\dd \omega} \bigl ( \omega \, \widehat{W}(\omega) \bigr ) \, \widehat{\Psi}(\omega)}
	\label{Maxwell_equations:eqn:scalar_product_Hil_C}
\end{align}
gives meaning to the notion of orthogonality. The index $\C$ emphasizes that $\Hil_{\C}$ is a \emph{complex} Hilbert space as the natural basis with respect to which to expand are \emph{complex exponentials} $\e^{- \ii t \omega}$. 

Going from the Maxwell equations~\eqref{Maxwell_equations:eqn:full_equations} in the time domain to Maxwell's equations~\eqref{Maxwell_equations:eqn:Maxwell_frequency_domain} in the frequency domain has a natural interpretation here in terms of Hilbert spaces: the Fourier transform in time decomposes 
\begin{align}
	\widehat{\Hil}_{\C} = \Fourier^{\, -1} \Hil_{\C} = \int_{\R}^{\oplus} \dd \omega \, \Hil(\omega)
\end{align}
into a direct integral of Hilbert spaces. The mathematically inclined reader may look up the somewhat technical definition in \cite[Part~II, Chapter~2, Section~3]{Dixmier:von_Neumann_algebras:1981} which explains how the $\Hil(\omega)$ are glued together, but in a nutshell it means the following: the Fourier transform in time decomposes electromagnetic fields into frequency components $\bigl ( \widehat{\mathbf{E}}(\omega) , \widehat{\mathbf{H}}(\omega) \bigr )$ which themselves are elements of Hilbert spaces 
\begin{align}
	\Hil(\omega) :& \negmedspace = \Bigl \{ \widehat{\Psi}(\omega) : \R^3 \longrightarrow \C^6 \; \; \big \vert \; \; \norm{\Psi}_{\C} < \infty 
	,
	\Bigr .
	\notag \\
	&\qquad \qquad \qquad \qquad \qquad \quad 
	\Bigl .
	\mbox{$\Psi(t)$ solves \eqref{Maxwell_equations:eqn:full_equations:dynamics} with $\widehat{\mathbf{J}}(\omega) = 0$}
	\Bigr \} 
\end{align}
with energy scalar product at frequency $\omega$, 
\begin{align}
	\bscpro{\widehat{\Phi}(\omega)}{\widehat{\Psi}(\omega)}_{\omega} &= 
	\Bscpro{\widehat{\Phi}(\omega) \, }{ \, \tfrac{\dd}{\dd \omega} \bigl ( \omega \, \widehat{W}(\omega) \bigr ) \, \widehat{\Psi}(\omega)}
	\notag \\
	&= \int_{\R^3} \dd x \, \widehat{\Phi}(\omega,x) \cdot \tfrac{\dd}{\dd \omega} \bigl ( \omega \, \widehat{W}(\omega,x) \bigr ) \, \widehat{\Psi}(\omega,x)
	, 
\end{align}
and associated norm $\bnorm{\widehat{\Psi}(\omega)}_{\omega} = \bscpro{\widehat{\Psi}(\omega)}{\widehat{\Psi}(\omega)}_{\omega}^{\nicefrac{1}{2}}$. If we insert operators $\widehat{A}(\omega)$ such as the derivative $- \ii \nabla$ in the second argument, they act only to the wave $\widehat{\Psi}(\omega)$ and not the energy density, 
\begin{align*}
	\bscpro{\widehat{\Phi}(\omega)}{\widehat{A}(\omega) \widehat{\Psi}(\omega)}_{\omega} &= \int_{\R^3} \dd x \, \widehat{\Phi}(\omega,x) \cdot \tfrac{\dd}{\dd \omega} \bigl ( \omega \, \widehat{W}(\omega,x) \bigr ) \, \bigl ( A(\omega) \, \widehat{\Psi} \bigr )(\omega,x) 
	. 
\end{align*}
This frequency-dependent scalar product has been widely used in the literature (for instance in \cite[equation~(3)]{Silverinha:Z2_topological_insulator_continuous_photonic_materials:2016} or in \cite[equation~(38)]{Raghu_Haldane:quantum_Hall_effect_photonic_crystals:2008}). Note that waves $\widehat{\Psi}(\omega)$ and $\widehat{\Phi}(\omega')$ from different frequencies $\omega \neq \omega'$ are automatically orthogonal to one another \emph{by definition} of the direct integral of Hilbert spaces. 

These Hilbert spaces $\Hil(\omega)$ really depend on the frequency not just through the scalar product, but may “shrink” or “grow” as vector spaces: for example, dispersive material weights may have singularities or produce spectral gaps (such as \cite[equations~(17) or (18)]{Silverinha:Z2_topological_insulator_continuous_photonic_materials:2016}), so it can happen that $\Hil(\pm \omega) = \{ 0 \}$ actually reduces to the trivial vector space. Note that the reality condition $W(t,x) = \overline{W(t,x)}$ imposed on the weights and fields translate to 
\begin{align}
	\bigl ( \widehat{\mathbf{E}}(-\omega) \, , \, \widehat{\mathbf{H}}(-\omega) \bigr ) = \bigl ( \overline{\widehat{\mathbf{E}}(\omega)} \, , \, \overline{\widehat{\mathbf{H}}(\omega)} \bigr )
	\label{Maxwell_equations:eqn:reality_condition_EM_field}
\end{align}
and manifests itself in a symmetry between the Hilbert spaces at $\pm \omega$: complex conjugation $(C \Psi)(x) = \overline{\Psi(x)}$ can be seen as an antiunitary between $\Hil(\omega)$ and $\Hil(-\omega)$ as the weights $\widehat{W}(-\omega) = \overline{\widehat{W}(+\omega)} = C \, \widehat{W}(+\omega) \, C$ are related by $C$. Therefore, poles and spectral gaps necessarily come in pairs of opposite frequency. 

Introducing $\Hil(\omega)$ and $\Hil_{\C}$ clarifies an earlier comment we made regarding the current density: since the medium can only carry certain states at a given frequency, namely those contained in $\Hil(\omega)$, only those can be excited by sources. Put another way, we need to impose 
\begin{assumption}[Current density only excites states supported by the medium]\label{Maxwell_equations:assumption:current_density}
	~\\
	Suppose the current density $\mathbf{J} \in L^1 \bigl ( \R , L^2(\R^3,\R^6) \bigr )$ is such that $\widehat{\mathbf{J}}(\omega) \in \widehat{W}(\omega) \, \Hil(\omega)$ holds for almost all frequencies. 
\end{assumption}
%

\subsection[The subspace of real electromagnetic fields and reduction to complex $\omega \geq 0$ waves]{The subspace of real electromagnetic fields and reduction to complex waves of non-negative frequencies} 
\label{Maxwell_equations:reduction}
The real-valuedness of $(\mathbf{E},\mathbf{H})$ manifests itself in the phase locking condition~\eqref{Maxwell_equations:eqn:reality_condition_EM_field} after Fourier transform, and it follows just like in our discussion around equation~\eqref{Maxwell_equations:eqn:phase_locking_condition} that the information contained in the negative frequency part is redundant. This singles out the \emph{real} subspace 
\begin{align}
	\Hil_{\R} = \Bigl \{ \Psi \in \Hil_{\C} \; \; \big \vert \; \; \widehat{\Psi}(- \omega) = \overline{\widehat{\Psi}(+\omega)} \mbox{ for almost all $\omega \in \R$} \Bigr \} 
	\subset \Hil_{\C} 
	\label{Maxwell_equations:eqn:definition_Hil_R}
\end{align}
of the \emph{complex} Hilbert space $\Hil_{\C}$.

However, while physical solutions $(\mathbf{E},\mathbf{H}) = 2 \Re \Psi$ are real, it is often more convenient to represent them as the real part of a complex wave $\Psi$. However, the complex wave is not uniquely determined by the real wave, and there are many choices we could conceivably make. Two have already been suggested in equation~\eqref{Maxwell_equations:eqn:real_states_determined_by_positive_frequency_part}, positive and negative frequency contributions $\Psi_{\pm}$ defined through~\eqref{Maxwell_equations:eqn:complex_positive_negative_frequency_waves}. For this pair we can see explicitly that the information contained in $\Psi_- = \overline{\Psi_+}$ and $\Psi_+$ is redundant, the symmetry constraint~\eqref{Maxwell_equations:eqn:real_states_determined_by_positive_frequency_part} shows how to reconstruct one from the other. 

Customarily, one picks \emph{complex waves composed of non-negative frequencies}, $\Psi = \Psi_+$. This has several advantages, all of which will be explored further in Section~\ref{Schroedinger}: first of all, restricting ourselves to complex waves in this way is convenient, because $\omega \geq 0$ is a \emph{spectral condition} that, mathematically speaking, can be easily and \emph{consistently} imposed. And a single set of Maxwell equations~\eqref{Maxwell_equations:eqn:approximate_Maxwell_equations} for $\omega \geq 0$ suffices to describe the waves and their dynamics. Moreover, this set of equations is naturally compatible with the \emph{Helmholtz decomposition} into transversal and longitudinal waves. 
Another advantage is that the transversal, bounded pseudo eigenfunctions\footnote{Simply put, a pseudo eigenfunction is an eigenfunction that lies outside of the vector space of the solutions one considers. Plane waves have infinite field energy, and therefore lie outside of the space of electromagnetic waves with finite field energy; nevertheless, they solve the eigenvalue equation for homogeneous media and arbitrary solutions may be expanded in terms of plane waves. } of this first-order equation are \emph{necessarily complex}; in the simplest case where the weights are independent of position (\ie the medium is homogeneous) the pseudo eigenfunctions are plane waves of positive frequency which gives rise to \emph{two} linearly independent \emph{real} solutions, 
\begin{align}
	A_{\Re} \cos (k \cdot x - \omega t) + A_{\Im} \sin (k \cdot x - \omega t) = \Re \Bigl ( \bigl ( A_{\Re} - \ii A_{\Im} \bigr ) \, \e^{+ \ii (k \cdot x - \omega t)} \Bigr )
	. 
	\label{Maxwell_equations:eqn:real_complex_plane_waves}
\end{align}
Here, real and imaginary part of the complex amplitudes, $A_{\Re}$ and $A_{\Im}$, become the prefactors of $\cos$ and $\sin$. The above arguments hold verbatim if plane waves are replaced by another suitable basis, \eg Bloch waves in case the weights $W_{\pm}$ describe a periodic electromagnetic medium. In fact, in view of \eqref{Maxwell_equations:eqn:symmetry_hatW} it is \emph{not possible} to find \emph{real} pseudo eigenfunctions for the first-order Maxwell equations: non-zero real waves are \emph{always} linear combinations of complex waves of positive and negative frequencies. 

Put mathematically, restricting to $\omega \geq 0$ means we consider the Hilbert space
\begin{align}
	\widehat{\Hil} = \Fourier^{\, -1} \, \Hil 
	= \int_{[0,\infty)}^{\oplus} \dd \omega \, \Hil(\omega) 
	\label{Maxwell_equations:eqn:dispersive_non_negative_frequency_Hilbert_space}
\end{align}
instead, and if we endow it with the scalar product 
\begin{align*}
	\scpro{\Phi}{\Psi} = \int_0^{\infty} \dd \omega \, \bscpro{\widehat{\Phi}(\omega)}{\widehat{\Psi}(\omega)}_{\omega} 
	, 
\end{align*}
then the inclusion of the factor of $\nicefrac{1}{2}$ in the definition of $\scpro{\, \cdot \,}{\, \cdot \,}_{\C}$ and the fact that the field energy is equally distributed in the $\omega > 0$ and $\omega < 0$ part yields $\mathcal{E}(\mathbf{E},\mathbf{H}) = \snorm{(\mathbf{E},\mathbf{H})}^2$. 

To summarize what we have done here: we have identified real electromagnetic fields with finite field energy $\bigl ( \mathbf{E}(t) , \mathbf{H}(t) \bigr )$ with a \emph{complex} wave $\Psi(t)$ composed of non-negative frequencies. In fact, $\Psi$ is an element of the \emph{complex} Hilbert space $\Hil$ that comes furnished with the energy scalar product and the energy norm. Working with complex (as opposed to real) Hilbert spaces becomes necessary if we want to even define the notion of field energy and adapt methods from quantum mechanics to classical electromagnetism. Any real electromagnetic field has a \emph{unique} representative in $\Hil$ while minimizing the amount of unphysical fields to complex gradient fields; the Hilbert space $\Hil_{\C}$, which we have introduced initially, can be seen as the complexification of $\Hil$ and contains unphysical transversal fields with non-zero imaginary part. 

\subsection{Neglecting dispersion: Maxwell equations for approximately monochromatic waves} 
\label{Maxwell_equations:heuristic_monochromatic_approximation}
In a great many situations dispersion can be safely neglected if the electromagnetic waves are composed of frequencies $\omega \approx \omega_0$, provided that 
\begin{align}
	\frac{\dd}{\dd \omega} \widehat{W}(\omega_0) \approx 0 
	\label{Maxwell_equations:eqn:neglecting_dispersion}
\end{align}
and therefore $\widehat{W}(\omega) \approx \widehat{W}(\omega_0)$ holds in some sense. Then the operator 
\begin{align*}
	\frac{\dd}{\dd \omega} \bigl ( \omega \, \widehat{W}(\omega) \bigr ) \approx \widehat{W}(\omega_0) \approx \widehat{W}(\omega)
\end{align*}
which enters the energy density~\eqref{Maxwell_equations:eqn:field_energy_density_frequency} can be Taylor expanded in $\omega - \omega_0$ and is approximately given by the material weights themselves. Hence, for states $\widehat{\Psi}(\omega)$ from within this narrow frequency region we have 
\begin{align}
	\scpro{\widehat{\Psi}(\omega)}{\widehat{\Psi}(\omega)}_{\omega} &= \Bscpro{\widehat{\Psi}(\omega)}{\tfrac{\dd}{\dd \omega} \bigl ( \omega \, \widehat{W}(\omega) \bigr ) \, \widehat{\Psi}(\omega)}
	\approx \scpro{\widehat{\Psi}(\omega)}{\widehat{W}(\omega_0) \, \widehat{\Psi}(\omega)}
	\label{Maxwell_equations:eqn:scalar_product_approximation}
	\\
	&\approx \Bscpro{\widehat{\Psi}(\omega)}{\tfrac{\dd}{\dd \omega} \bigl ( \omega \, \widehat{W}(\omega) \bigr ) \big \vert_{\omega = \omega_0} \, \widehat{\Psi}(\omega)}
	= \scpro{\widehat{\Psi}(\omega)}{\widehat{\Psi}(\omega)}_{\omega_0}
	, 
	\notag
\end{align}
and the left-hand side is finite if and only if the right-hand side is. 

To obtain a mathematical model we will use the same weights $\widehat{W}(\pm \omega_0)$ for \emph{all} frequencies \emph{of a given sign}, even those away from $\omega_0$ where the approximation~\eqref{Maxwell_equations:eqn:neglecting_dispersion} is no longer accurate. Of course, the range of validity is still limited to frequencies $\omega \approx \omega_0$ and only makes physically meaningful predictions for waves from within that narrow frequency range. This model is the one that is widely used in the literature, \emph{the only important difference here compared to most of the literature is that we take the difference in weights for positive and negative frequencies properly into account.} Put another way, in the absence of dispersion we need to distinguish between \emph{non-gyrotropic media} with \emph{real} weights, 
\begin{align*}
	W_+ = \widehat{W}(\omega_0) = \overline{\widehat{W}(\omega_0)} 
	= \widehat{W}(- \omega_0) = W_-
	, 
\end{align*}
and \emph{gyrotropic media} whose weights a non-vanishing imaginary part,  
\begin{align*}
	W_+ = \widehat{W}(\omega_0) \neq \overline{\widehat{W}(\omega_0)} 
	= \widehat{W}(- \omega_0) = W_-
	. 
\end{align*}
Therefore, for \emph{non}-gyrotropic media the Hilbert space $\Hil_{\C}$ has a \emph{single}, frequency-indepen\-dent scalar product. In fact, for lossless, positive index media we will show later on that $\Hil_{\C}$ defined in \eqref{Maxwell_equations:eqn:complixification_Hilbert_space} coincides with $L^2_{W_+}(\R^3,\C^6)$, the Hilbert space of complex fields with integrable amplitude and weighted scalar product $\bscpro{\Phi}{W_+ \, \Psi}$; we will properly introduce this Hilbert space in Section~\ref{Schroedinger:abstract:classical_waves}. The Hilbert space for \emph{gyrotropic} media splits in two, $\Hil_{\C} \subset \Hil_+ \oplus \Hil_-$, with two \emph{different} scalar products $\bscpro{\Phi}{W_{\pm} \, \Psi}$. 

Given that in general $W_+ \neq W_-$, after undoing the Fourier transform we obtain not one, but \emph{two} sets of Maxwell equations on \emph{two different} Hilbert spaces $\Hil_{\pm}$ — one for the \emph{positive frequency component} $\Psi_+$ and one for the \emph{negative frequency component} $\Psi_-$:\footnote{$\Psi_{\pm}^{(D,B)}$ which enters \eqref{Maxwell_equations:eqn:approximate_Maxwell_equations:constitutive_relations} are the positive and negative frequency contributions to $( \widehat{\mathbf{D}} , \widehat{\mathbf{B}})$ defined via \eqref{Maxwell_equations:eqn:complex_positive_negative_frequency_waves}.} 
\begin{subequations}\label{Maxwell_equations:eqn:approximate_Maxwell_equations}
	\begin{empheq}[left={\pm \omega \geq 0 \qquad \quad \empheqlbrace\,}]{align}
		W_{\pm} \, \ii \partial_t \Psi_{\pm}(t) &= \Rot \Psi_{\pm}(t) - \ii J_{\pm}(t) 
		, 
		\label{Maxwell_equations:eqn:approximate_Maxwell_equations:dynamics}
		\\
		\Div W_{\pm} \Psi_{\pm}(t) &= \rho_{\pm}(t) 
		, 
		\label{Maxwell_equations:eqn:approximate_Maxwell_equations:constraint}
		\\
		\Psi_{\pm}^{(D,B)}(t) &= W_{\pm} \, \Psi_{\pm}(t)
		, 
		\label{Maxwell_equations:eqn:approximate_Maxwell_equations:constitutive_relations}
		\\
		\Div J_{\pm}(t) + \partial_t \rho_{\pm}(t) &= 0 
		. 
		\label{Maxwell_equations:eqn:approximate_Maxwell_equations:charge_conservation}
	\end{empheq}
\end{subequations}
Note that we have yet to explain how to decompose the real, physical initial condition $\bigl ( \mathbf{E}(t_0) , \mathbf{H}(t_0) \bigr ) = (\mathbf{E}_0 , \mathbf{H}_0)$ into its non-negative and non-positive frequency components $\Psi_{\pm}(t_0)$ that enter as \emph{complex} initial conditions; we will address this later in Section~\ref{Schroedinger:auxiliary_operators:projections}. 

Clearly, for \emph{non}-gyrotropic media where $W_+ = W_-$, the two sets of equations coincide and $\Psi_{\pm}$ are elements of the same Hilbert space. In fact, we may choose to work with $(\mathbf{E},\mathbf{H})$ directly. 

Therefore, if $\widehat{W}(\pm \omega) \approx \widehat{W}(\pm \omega_0) = W_{\pm}$ describe a lossless, positive index medium, this heuristic argument shows that the field energy is proportional to $W_{\pm}$ and integrates up to 
\begin{align}
	\mathcal{E} \bigl ( \mathbf{E}(t) , \mathbf{H}(t) \bigr ) &= \frac{1}{2} \, \Bigl ( \bscpro{\Psi_+(t) \,}{\, W_+ \, \Psi_+(t)} + \bscpro{\Psi_-(t) \,}{\, W_- \, \Psi_-(t)} \Bigr ) 
	\notag \\
	&= \bscpro{\Psi_+(t) \,}{\, W_+ \, \Psi_+(t)} 
	, 
	\label{Maxwell_equations:eqn:field_energy}
\end{align}
because the field energy content of positive and negative frequency contribution is necessarily the same due to the symmetry $W_- = \overline{W_+}$ and $\Psi_- = \overline{\Psi_+}$; we will show that this expression is consistent with equation~\eqref{Maxwell_equations:eqn:integrated_field_energy_frequency} later in Section~\ref{Schroedinger:Hil_vs_Hil_phys}. The total field energy is also a conserved quantity: that can be checked by deriving equation~\eqref{Maxwell_equations:eqn:field_energy} with respect to time and plugging in \eqref{Maxwell_equations:eqn:approximate_Maxwell_equations:dynamics}. 

Fortunately, once we represent real electromagnetic fields as a complex wave composed solely of non-negative frequencies (as explained in Section~\ref{Maxwell_equations:heuristic_monochromatic_approximation}), we can treat both, non-gyrotropic and gyrotropic media in exactly the same manner. 
\medskip

\noindent
From the perspective of mathematics, an immediate interesting question is to \emph{quantify} the error we make when neglecting dispersion, \ie when replacing the frequency-dependent $\widehat{W}(\pm \omega)$ with $W_{\pm} = \widehat{W}(\pm \omega_0)$. Put succinctly, we would like to estimate the difference  $\bigl ( \mathbf{E}_{\mathrm{dis}}(t) , \mathbf{H}_{\mathrm{dis}}(t) \bigr ) - \bigl ( \mathbf{E}_{\mathrm{nd}}(t) , \mathbf{H}_{\mathrm{nd}}(t) \bigr )$ between the solution to the dispersive Maxwell equations and the approximate equations where dispersion has been neglected. This is more intricate than it seems at first glance: the initial condition for the full equation with dispersion is a \emph{past trajectory} $(-\infty,t_0] \ni t \mapsto \bigl ( \mathbf{E}(t) , \mathbf{H}(t) \bigr )$ — dispersive media have a “memory” — whereas the solutions to the approximate Maxwell equations~\eqref{Maxwell_equations:eqn:approximate_Maxwell_equations} depend only on the \emph{instantaneous} field configuration $\bigl ( \mathbf{E}(t_0) , \mathbf{H}(t_0) \bigr ) = (\mathbf{E}_0 , \mathbf{H}_0)$. So the matter is not as simple as evolving the same initial conditions with two equations, because there are \emph{many} past trajectories with 
\begin{align*}
	\bigl ( \mathbf{E}(t) , \mathbf{H}(t) \bigr ) \big \vert_{t = t_0} = (\mathbf{E}_0 , \mathbf{H}_0)
	, 
\end{align*}
and there is no unique or “obvious” choice of past trajectory on the basis of which to compare \eqref{Maxwell_equations:eqn:Maxwell_frequency_domain} with \eqref{Maxwell_equations:eqn:approximate_Maxwell_equations}. One physically sensible solution would be to place the same source in a medium described by \eqref{Maxwell_equations:eqn:Maxwell_frequency_domain} and the dispersion-free medium~\eqref{Maxwell_equations:eqn:approximate_Maxwell_equations}, and then estimate the difference of the two solutions. 
\section{Schrödinger formalism for linear, non-dispersive media} 
\label{Schroedinger}
Up until now, the formalism covers \emph{any} linear medium, Assumption~\ref{Maxwell_equations:assumption:5_minimal_assumptions} only lists \emph{necessary} assumptions which have to be imposed on physical grounds. Away from singularities of $\widehat{W}(\omega)$ we showed how to neglect dispersion and reduce Maxwell's equations~\eqref{Maxwell_equations:eqn:full_equations} to \eqref{Maxwell_equations:eqn:approximate_Maxwell_equations} governing complex waves of non-negative frequencies. Having gotten rid of redundant information does not just lead to an aesthetically more pleasing formalism, one of the benefits being that non-gyrotropic, gyrotropic and lossy materials can be described with the \emph{exact same set of equations}, but also avoids introducing unphysical transversal fields.

However, to obtain a Schrödinger formalism, we need to impose additional constraints, namely we will only consider electromagnetic waves $(\mathbf{E},\mathbf{H})$ with \emph{finite field energy} propagating in \emph{lossless positive index materials} (such as dielectrics): 
\begin{assumption}[Material weights]\label{Schroedinger:assumption:material_weights}
	The \emph{material weights} are a $6 \times 6$ matrix-valued function 
	\begin{align}
		W(x) = \left (
		\begin{matrix}
			\eps(x) & \chi(x) \\
			\chi(x)^* & \mu(x) \\
		\end{matrix}
		\right )
		\label{Schroedinger:eqn:material_weights}
	\end{align}
	that has the following properties: 
	\begin{enumerate}[(a)]
		\item The medium is \emph{lossless}, \ie $W(x) = W(x)^*$ takes values in the \emph{hermitian} matrices. 
		\item The medium is \emph{not a negative index material}, \ie there exist positive constants $C \geq c > 0$ so that $C \, \scpro{\Psi}{\Psi} \geq \scpro{\Psi}{W \Psi} \geq c \scpro{\Psi}{\Psi}$ holds. Put another way, the eigenvalues $w_1(x) , \ldots , w_6(x)$ of the hermitian matrix $W(x)$ are all positive and bounded away from $0$ and $\infty$ uniformly in $x$. 
	\end{enumerate}
\end{assumption}
\emph{To unburden the notation, we will systematically drop the index $+$ from all non-negative frequency objects: from now on, $W_+$ is replaced with $W$, $\Psi_+$ with $\Psi$, $J_+$ with $J$ and so forth.}

Thanks to the second assumption on the weights, the inverse of the weights $W^{-1}$ is bounded and takes values in the hermitian matrices again. Therefore, multiplying of both sides of \eqref{Maxwell_equations:eqn:approximate_Maxwell_equations:dynamics} with $\ii \, W^{-1}$ for $\omega \geq 0$ leads to 
\begin{align}
	\ii \, \partial_t \Psi(t) = M \, \Psi(t) - \ii \, J(t), 
	&&
	\Psi(0) = \Phi
	, 
	\label{Schroedinger:eqn:pm_Maxwell_Schroedinger_equation}
\end{align}
as an equivalent form of the dynamical law~\eqref{Maxwell_equations:eqn:approximate_Maxwell_equations:dynamics}. Here, the non-negative frequency part $\Psi$ of the electromagnetic field plays the role of the wave function and the non-negative frequency Maxwell operator 
\begin{align}
	M = W^{-1} \, \Rot \big \vert_{\omega \geq 0} 
	\label{Schroedinger:eqn:pm_Maxwell}
\end{align}
takes the place of the Hamiltonian. 

Seemingly, a simple algebraic operation is all it takes to go from \eqref{Maxwell_equations:eqn:approximate_Maxwell_equations:dynamics} to \eqref{Schroedinger:eqn:pm_Maxwell_Schroedinger_equation}, obviating the need for further discussion. However, specifying a Hilbert space is more subtle than in quantum mechanics for several reasons: we have to explain how to \emph{restrict \eqref{Maxwell_equations:eqn:approximate_Maxwell_equations} to non-negative frequencies $\omega \geq 0$}, make a judicious \emph{choice of scalar product} so that $M$ is selfadjoint (or hermitian in physics parlance) and verify that the \emph{divergence constraint~\eqref{Maxwell_equations:eqn:approximate_Maxwell_equations:constraint}} is satisfied. We emphasize that this is not a mere mathematical footnote, but its choice has a clear physical interpretation. Before we do, though, we take a step back and outline the Schrödinger formalism in the abstract.

\subsection{Abstract setting} 
\label{Schroedinger:abstract}
Not just electromagnetism but many other classical wave equations \cite{Dodin:geometric_view_noneikonal_waves:2014,Ruiz_Dodin:lagrangian_geometrical_optics_vector_waves_spin_particles:2015}, including transverse acoustic waves \cite{Pierce:wave_equation_fluids:1990,Alicki:Dirac_equation_MHD_waves:1992}, mechanical waves \cite{Suesstrunk_Huber:mechanical_topological_insulator:2015}, magneto plasmons \cite{Jin_Lu_et_al:topological_magnetoplasmon:2016} and linearized equations from plasma physics \cite{Ruiz:geometric_theory_waves_and_plasma_physics:2017}, admit a Schrödinger formalism, because they share certain characteristics: (1)~The equations are \emph{first-order in time}, (2)~the operators generating the dynamics have a \emph{product structure}, and (3)~in contrast to quantum mechanics the \emph{wave fields take values in $\R^n$} (instead of $\C^n$). Therefore, the ideas we develop here can be applied quite broadly to other wave equations. To fix a frame of reference, let us quickly recap the Schrödinger picture of quantum mechanics.

\subsubsection{Quantum mechanics} 
\label{Schroedinger:abstract:quantum_mechanics}
A quantum system is specified by a Hamilton operator, \eg 
\begin{align*}
	H = \frac{1}{2m} (- \ii \hbar \nabla)^2 + V(x) 
\end{align*}
describes a non-relativistic quantum particle on the continuum $\R^d$ subjected to a conservative force $\mathbf{F}(x) = - \nabla V(x)$. Typically, in the physics literature the Hilbert space $\Hil$ — \emph{whose elements represent states} — is not made explicit, since once a Hamiltonian is selected, there is a default choice with respect to which $H$ is selfadjoint (hermitian). For a non-relativistic quantum particle on $\R^d$ with internal degrees of freedom $\C^n$ the Hilbert space is constructed from the vector space of square integrable functions\footnote{The mathematically inclined reader will undoubtedly have noticed that we have used the same symbol for the vector space of square integrable functions and that composed of \emph{equivalence classes} of square integrable functions differing only on a set of measure $0$ (following \cite[Chapter~2]{Lieb_Loss:analysis:2001}).  Distinguishing the two is a  necessary technical, but for our purposes inconsequential detail of the proper definition of weighted $L^p$-spaces. }, 
\begin{align}
	L^2(\R^d,\C^n) = \Bigl \{ \Psi : \R^d \longrightarrow \C^n \; \; \big \vert \; \; \int_{\R^d} \dd x \, \sabs{\Psi(x)}^2 < \infty \Bigr \} 
	, 
	\label{Schroedinger:eqn:unweighted_L2}
\end{align}
that is endowed with the scalar product 
\begin{align}
	\scpro{\Phi}{\Psi} = \int_{\R^d} \dd x \, \Phi(x) \cdot \Psi(x)
	\label{Schroedinger:eqn:unweighted_scalar_product}
\end{align}
and the induced norm $\norm{\Psi} = \scpro{\Psi}{\Psi}^{\nicefrac{1}{2}}$. Complex conjugation is implicitly contained in the dot product $\Phi(x) \cdot \Psi(x) = \sum_{j = 1}^n \overline{\Phi_j(x)} \; \Psi_j(x)$. Pure states, rank-$1$ projections $P = \sopro{\Psi}{\Psi}$, can be represented as wave functions $\Psi$ normalized to $1 = \norm{\Psi}^2$; the normalization is necessary to link the wave function to probabilities of experimental outcomes. 

Note that (complex) quantum wave functions themselves are \emph{not} physically observable as wave functions are only defined up to a phase (pure states can be identified with \emph{rays} in the Hilbert space). Indeed, only the associated probabilities and expectation values are accessible. This is in stark contrast to (real-valued) classical waves that are directly observable. 

The Hamiltonian $H = H^*$ is selfadjoint (hermitian) with respect to the scalar product of choice, meaning that in addition to technical questions regarding domains we have 
\begin{align*}
	\sscpro{\Phi}{H \Psi} = \sscpro{H \, \Phi}{\Psi}
	. 
\end{align*}
Hamiltonians such as the one above are typically selfadjoint with respect to the “default” scalar product~\eqref{Schroedinger:eqn:unweighted_scalar_product}, which is why the choice is often not made explicit. The selfadjointness then implies that the evolution group $\e^{- \ii \frac{t}{\hbar} H}$ associated to the Schrödinger equation
\begin{align}
	\ii \hbar \partial_t \Psi(t) = H \Psi(t) 
	, 
	&&
	\Psi(0) = \Phi
	, 
	\label{Schroedinger:eqn:quantum_Schroedinger_equation}
\end{align}
exists, is \emph{unitary} and solves \eqref{Schroedinger:eqn:quantum_Schroedinger_equation} in the form $\Psi(t) = \e^{- \ii \frac{t}{\hbar} H} \Phi$. Unitarity has a second consequence, namely the \emph{existence of a conserved quantity} — probabilities in this case. 

Put succinctly, a quantum system is fixed by specifying a Hilbert space $\Hil$ whose elements represent states and a selfadjoint operator $H$, the Hamiltonian, that enters the dynamical law, the Schrödinger equation. 

\subsubsection{Classical waves} 
\label{Schroedinger:abstract:classical_waves}
The reason we have spent a few paragraphs recalling basic facts on quantum mechanics is that we can now concisely list the \emph{three} ingredients of the Schrödinger formalism of classical waves: 
\begin{enumerate}[(S1)]
	\item \textbf{States}, real vector fields, are represented as elements of a \emph{complex} Hilbert space $\Hil$ which itself is a subspace of the Hilbert space 
	\begin{align}
		L^2_W(\R^d,\C^n) &= \Bigl \{ \Psi : \R^d \longrightarrow \C^n \; \; \big \vert \; \; \int_{\R^d} \dd x \, \Psi(x) \cdot W(x) \, \Psi(x) < \infty \Bigr \} 
		\label{Schroedinger:eqn:weighted_L2_space}
		\\
		&= \Bigl \{ \Psi : \R^d \longrightarrow \C^n \; \; \big \vert \; \; \int_{\R^d} \dd x \, \sabs{\Psi(x)}^2 < \infty \Bigr \} 
		\notag 
	\end{align}
	consisting of waves with square integrable amplitudes of \emph{non-negative frequencies} and is endowed with the \emph{weighted} scalar product, 
	\begin{align}
		\sscpro{\Phi}{\Psi}_W = \bscpro{\Phi}{W \, \Psi} 
		= \int_{\R^d} \dd x \, \Phi(x) \cdot W(x) \, \Psi(x) 
		, 
		\label{Schroedinger:eqn:weighted_scalar_product}
	\end{align}
	where the operator $W$ describes the properties of the medium and has to possess properties akin to those enumerated in Assumption~\ref{Schroedinger:assumption:material_weights}: 
	\begin{enumerate}[(a)]
		\item $W$ is selfadjoint (hermitian) with respect to the standard scalar product~\eqref{Schroedinger:eqn:unweighted_scalar_product}. 
		\item $W$ is bounded away from $0$ and $\infty$, \ie there exist $C > c > 0$ so that $c \, \scpro{\Psi}{\Psi} \leq \scpro{\Psi}{W \, \Psi} \leq C \, \scpro{\Psi}{\Psi}$ holds true. 
	\end{enumerate}
	These conditions arise naturally in the context of the Schrödinger formalism. First, the boundedness assumption of $W$ is necessary to guarantee that fields $\Psi$ with square integrable amplitudes, $\scpro{\Psi}{\Psi} < \infty$, lie in $\Hil$. The remaining properties ensure that $\scpro{\Phi}{\Psi}_W = \scpro{\Phi}{W \, \Psi}$ satisfies the axioms of scalar products: requiring 
	\begin{align*}
		\scpro{\Phi}{\Psi}_W &= \scpro{\Phi}{W \, \Psi} \overset{!}{=} \overline{\scpro{\Psi}{W \, \Phi}} = \overline{\scpro{\Psi}{\Phi}_W}
	\end{align*}
	means we need to impose $W = W^*$, \ie part of condition~(a). Another defining property of scalar products is non-negativity, $\scpro{\Psi}{\Psi}_W \geq 0$, and that $\scpro{\Psi}{\Psi}_W = 0$ should imply $\Psi = 0$; therefore, $W$ has to be strictly positive, $\scpro{\Psi}{W \, \Psi} \geq c \, \scpro{\Psi}{\Psi}$. 
	\item These weights also enter the \textbf{Maxwell-type operator} 
	\begin{align}
		M = W^{-1} \, D
		\label{Schroedinger:eqn:Maxwell_type_operator}
	\end{align}
	that is the product of the weights $W$ and a second “free” operator $D$, which is also selfadjoint (hermitian) with respect to the \emph{unweighted} scalar product~\eqref{Schroedinger:eqn:unweighted_scalar_product}. This operator (endowed with the domain of $D$) takes the place of the Hamiltonian in the \textbf{Schrödinger equation} (potentially with a current $J(t)$ as source term), 
	\begin{align}
		\ii \, \partial_t \Psi(t) = M \Psi(t) - \ii \, J(t) 
		, 
		&&
		\Psi(0) = \Phi
		. 
		\label{Schroedinger:eqn:prototypical_Schroedinger_equation_wave}
	\end{align}
	Due to the product structure $M$ is selfadjoint with respect to the \emph{weighted} scalar product $\scpro{\, \cdot \,}{\, \cdot \,}_W$ defined in equation~\eqref{Schroedinger:eqn:weighted_scalar_product} above, something that can be seen from a straightforward computation. Again, dispensing with technical questions of domains, $M^{\ast_W} = M$ follows from 
	\begin{align}
		\bscpro{\Phi}{M \Psi}_W &= \bscpro{\Phi}{W \, W^{-1} \, \Rot \Psi}
		= \bscpro{\Rot \Phi}{\Psi}
		\notag \\
		&= \bscpro{W^{-1} \, \Rot \Phi}{W \Psi}
		= \bscpro{M \Phi}{\Psi}_W
		. 
		\label{Schroedinger:eqn:symmetry_Maxwell_operator}
	\end{align}
	Therefore its evolution group $\e^{- \ii t M}$ exists and is unitary with respect to $\scpro{\, \cdot \,}{\, \cdot \,}_W$. In the absence of sources, $\scpro{\Psi(t)}{\Psi(t)}_W = \scpro{\Psi(0)}{\Psi(0)}_W$ and similar quantities are \emph{conserved}; in case of Maxwell's equations the former quantity is the total field energy. 
	\item The biggest conceptual difference to quantum mechanics is that \textbf{physical fields $u$ are real}, and the equation $u(t) = 2 \Re \Psi(t)$ connects the solution $u(t)$ of the classical wave equation in question to the complex wave $\Psi(t)$ that solves \eqref{Schroedinger:eqn:prototypical_Schroedinger_equation_wave}. 
	To avoid including unphysical field configurations, we had to restrict ourselves to complex fields of non-negative frequencies. 
\end{enumerate}
%

\subsection{Imposing the frequency constraints} 
\label{Schroedinger:auxiliary_operators}
When the material weights are complex, then the two equations equations~\eqref{Maxwell_equations:eqn:approximate_Maxwell_equations} for $\pm \omega \geq 0$ are distinct and one natural question emerges: how do I split a real wave into its positive and negative frequency part? Once we are given a \emph{solution} $\bigl ( \mathbf{E}(t) , \mathbf{H}(t) \bigr )$ that is easy, we just need to Fourier transform the solution. However, usually we are \emph{given} an initial condition $\bigl ( \mathbf{E}(t_0) , \mathbf{H}(t_0) \bigr )$ and are tasked to \emph{find} a solution. Consequently, we need to break down this initial condition prior to being able to determine the associated solution. 

So as a first step in identifying (S1)–(S3) for electromagnetism in matter, we start by studying the operator \eqref{Schroedinger:eqn:pm_Maxwell} \emph{without} imposing frequency constraints and with the help of that auxiliary operator relate the real, physical fields to the complex fields representing them. The conditions on $W$ spelled out in Assumption~\ref{Schroedinger:assumption:material_weights} allow us define the complex Hilbert space $L^2_W(\R^3,\C^6)$ as in \eqref{Schroedinger:eqn:weighted_L2_space} associated with the non-negative frequency weights $W$, endowed with the \emph{weighted} scalar product~\eqref{Schroedinger:eqn:weighted_scalar_product}. Later on, we will link the square of the induced norm $\norm{\Psi}_W^2 = \scpro{\Psi}{\Psi}_W$ to the total \emph{field energy}, so this object has a neat physical interpretation. Hence, the second equality in \eqref{Schroedinger:eqn:weighted_L2_space} tells us that under the conditions imposed on the material weights $W$, electromagnetic waves have finite field energy if and only if their amplitudes are square integrable.

\subsubsection{The Helmholtz decomposition into transversal and longitudinal waves} 
\label{Schroedinger:auxiliary_operators:helmholtz}
Another useful and physically significant concept is that of a \emph{Helmholtz decomposition adapted to the medium}; as we shall show at the end of this section, the dynamics of the transversal and longitudinal components of electromagnetic fields is very different. Concretely, we would like a Helmholtz decomposition associated to the either one of the \emph{auxiliary Maxwell operators}
\begin{align}
	\Maux_{\pm} = W_{\pm}^{-1} \, \Rot
	, 
	\label{Schroedinger:eqn:auxiliary_Maxwell_operators}
\end{align}
one for the positive frequency weights $W_+ = W$ and the other for the negative frequency weights $W_- = \overline{W}$. Their actions on complex fields coincides with \eqref{Schroedinger:eqn:pm_Maxwell} but lacks the restriction to non-negative or non-positive frequencies. Put another way, $L^2_{W_{\pm}}(\R^3,\C^6)$ contains unphysical states. Note that the phase locking condition~\eqref{Maxwell_equations:eqn:phase_locking_condition} manifests itself as 
\begin{align*}
	C \, \Maux_{\pm} \, C = - \Maux_{\mp}
	, 
\end{align*}
and we are free to choose only one of these auxiliary operators. We shall pick $\Maux = \Maux_+$ and drop the index $+$ consistently to unburden the notation. 

Originally, the standard Helmholtz decomposition (\cf \eg \cite[Appendix~A.4]{DeNittis_Lein:adiabatic_periodic_Maxwell_PsiDO:2013} and references therein) refers to the splitting of a vector field $B$ on $\R^3$ into a rotational field and a gradient field, 
\begin{align*}
	B = B_{\perp} + B_{\parallel} = \nabla \times A + \nabla \varphi 
	. 
\end{align*}
An alternative characterization of transversal and longitudinal components is $\nabla \cdot B_{\perp} = 0$ and $\nabla \times B_{\parallel} = 0$, and it is not coincidental that the latter appears as a constraint for the \emph{in vacuo} Maxwell equations in the absence of charges. 

Also in matter does this decomposition have an elegant geometric interpretation if written in terms of Hilbert spaces, namely 
\begin{align}
	L^2_W(\R^3,\C^6) = \mathcal{J} \oplus \mathcal{G}
\end{align}
is split into the orthogonal sum of \emph{longitudinal gradient fields}
\begin{align*}
	\mathcal{G} &= \Bigl \{ \Psi = \bigl ( \nabla \varphi^E , \nabla \varphi^H \bigr ) \in L^2_W(\R^3,\C^6) \; \; \big \vert \; \; \bigl ( \varphi^E , \varphi^H \bigr ) \in L^2_{\mathrm{loc}}(\R^3,\C^2) \Bigr \} 
	\\
	&= \ran \, \bigl ( \nabla , \nabla \bigr ) 
	= \ker \, \Rot
	= \ker \, \Maux
\end{align*}
and \emph{transversal fields} 
\begin{align*}
	\mathcal{J} &= \mathcal{G}^{\perp_W}
	= \ker \, \bigl ( \Div \, W \bigr )
	= \ran \Maux
	,
\end{align*}
which \emph{by definition} are those that are $\sscpro{\, \cdot \,}{\, \cdot \,}_W$-orthogonal to gradient fields. The notation $\oplus$ here means that any $\Psi = \Psi_{\perp} + \Psi_{\parallel} \in L^2_W(\R^3,\C^6)$ can be split \emph{uniquely} into the sum of two mutually $\scpro{\, \cdot \,}{\, \cdot \,}_W$-\emph{orthogonal} waves $\Psi_{\perp} \in \mathcal{J}$ and $\Psi_{\parallel} \in \mathcal{G}$. 

The fact that elements of $\mathcal{J}$ satisfy the constraint condition~\eqref{Maxwell_equations:eqn:approximate_Maxwell_equations:constraint} for $\rho = 0$ can be easily seen by taking the scalar product of $\Phi = \bigl ( \nabla \varphi^E , \nabla \varphi^H \bigr )$ with $\Psi \in \mathcal{J}$ and performing partial integration. 

The utility of the Helmholtz decomposition becomes apparent when solving the dynamical equation~\eqref{Maxwell_equations:eqn:approximate_Maxwell_equations:dynamics} under the constraint~\eqref{Maxwell_equations:eqn:approximate_Maxwell_equations:charge_conservation}: writing the field $\Psi = \Psi_{\perp} + \Psi_{\parallel}$ and the current $J(t) = J_{\perp}(t) + J_{\parallel}(t)$ as the sum of transversal and longitudinal contributions, then $\nabla \times \nabla \varphi = 0$ implies the longitudinal gradient field is obtained by simply integrating up the current, 
\begin{align}
	\Psi_{\parallel}(t) &= \Psi_{\parallel}(0) + \ii \int_0^t \dd s \, J_{\parallel}(s) 
	. 
	\label{Schroedinger:eqn:solution_longitudinal_part}
\end{align}
Thanks to charge conservation~\eqref{Maxwell_equations:eqn:approximate_Maxwell_equations:charge_conservation} and $\Psi_{\perp}(t) \in \ker \bigl ( \Div \, W \bigr )$, the solution automatically satisfies the constraint~\eqref{Maxwell_equations:eqn:approximate_Maxwell_equations:constraint}. 

While the Helmholtz decomposition of $\Psi = \Psi_{\perp} + \Psi_{\parallel}$ is uniquely determined by $W$ alone, \emph{a priori} it is not clear how to \emph{compute} $\Psi_{\perp}$ and $\Psi_{\parallel}$ from $\Psi$. Secondly, we \emph{actually} wish to obtain these two complex waves from the real, physical fields $(\mathbf{E},\mathbf{H})$ in such a way that $\Psi$ is composed only of waves of \emph{non-negative} frequencies. Excluding negative frequencies is necessary, because otherwise we would evolve the negative frequency component $\Psi_-$ with the wrong dynamical equation in case positive and negative frequency weights differ, $W_+ = W \neq \overline{W} = W_-$; that is the whole reason why \eqref{Maxwell_equations:eqn:approximate_Maxwell_equations:dynamics} comes as \emph{two} sets of equations, one for $\omega \geq 0$ and another for $\omega \leq 0$. 

\subsubsection{Systematically representing real fields as complex waves} 
\label{Schroedinger:auxiliary_operators:projections}
The key idea is to define $\mathcal{G} = \ker \Maux$ and the positive frequency subspace $\Jphys \subset \mathcal{J}$ as \emph{spectral subspaces} associated to the auxiliary Maxwell operator $\Maux$: endowed with the domain of the free Maxwell operator $\Rot$, we know from standard arguments that $\Maux$ is selfadjoint (\cf \cite{Birman_Solomyak:L2_theory_Maxwell_operator:1987} and Appendix~\ref{appendix:generalized_Maxwell_type_operators}) and therefore admits a functional calculus that ascribes meaning to the expressions 
\begin{subequations}\label{Schroedinger:eqn:spectral_projections_positive_and_0}
	\begin{align}
		P_+ &= 1_{(0,\infty)}(\Maux) 
		, 
		\\
		P_0 &= 1_{\{ 0 \}}(\Maux)
		, 
	\end{align}
\end{subequations}
for the spectral projections onto $(0,\infty)$ and the eigenspace for the eigenvalue $0$ that are defined in terms of the characteristic functions 
\begin{align*}
	1_{(0,\infty)}(\omega) &= 
	\begin{cases}
		1 & \omega > 0 \\
		0 & \mbox{else} \\
	\end{cases}
	, 
	\\
	1_{\{ 0 \}}(\omega) &= 
	\begin{cases}
		1 & \omega = 0 \\
		0 & \mbox{else} \\
	\end{cases}
	,
\end{align*}
onto the sets of the positive numbers and $0$, respectively. These two operators are orthogonal projections associated to different parts of the spectrum of $\Maux$, namely $P_+$ maps onto the positive frequencies and $P_0$ onto the eigenspace $\mathcal{G} = \ker \Maux = \ran P_0$ to the eigenvalue $0$. Fields in $\ran P_+ = \Jphys$ are automatically transversal, because $P_+ \, P_0 = 0$ implies any $\Psi_{\perp} \in \Jphys$ is $\scpro{\, \cdot \,}{\, \cdot \,}_W$-orthogonal to all gradient fields and therefore $\Jphys \subset \mathcal{J} = \mathcal{G}^{\perp_W}$. While the definition of $P_+$ may seem very abstract, it can in many situations be computed quite explicitly. For waves from a finite frequency range propagating in periodic media, $P_+ (\mathbf{E},\mathbf{H})$ can be expressed in terms of finitely many Bloch waves. 

Then the map\footnote{
Without the factor of $\nicefrac{1}{2}$ this operator would sometimes be called the Hardy projection \cite{Noether:class_singular_integral_equations:1921,Coifman_Weiss:extension_Hardy_spaces:1977}; it enters in the modern extension of the Noether, Gohberg and Krein \cite{Gohberg_Krein:fundamental_aspects_defect_numbers_root_numbers:1957} and has applications in the analysis of topological insulators and index theorems \cite{Schulz_Baldes:Z2_indices_factorization_properties_odd_symmetric_Fredholm_operators:2015,Prodan_Schulz_Baldes:complex_topological_insulators:2016}. } which extracts the non-negative frequency contribution from $(\mathbf{E},\mathbf{H})$ is
\begin{align*}
	Q = P_+ + \tfrac{1}{2} \, P_0 
	. 
\end{align*}
Real fields $(\mathbf{E},\mathbf{H}) \in \Hil_{\mathrm{phys}} = L^2(\R^3,\R^6)$ are represented as the complex wave $\Psi = Q (\mathbf{E},\mathbf{H})$ on the \emph{complex} Banach space $\Hil = \mathcal{J}_+ \oplus \mathcal{G} = Q \bigl [ L_W^2(\R^3,\C^6) \bigr ]$; once we endow $\Hil$ with the weighted energy scalar product $\scpro{\, \cdot \,}{\, \cdot \,}_W$, it becomes a Hilbert space.

\paragraph{Non-gyrotropic media} 
\label{par:non_gyrotropic_media}
It is these two orthogonal projections and the map $Q$ which will implement the frequency restrictions implied in equations~\eqref{Maxwell_equations:eqn:approximate_Maxwell_equations} and allow us to compute $\Psi_{\perp} = P_+ (\mathbf{E},\mathbf{H})$ and $\Psi_{\parallel} = \tfrac{1}{2} \, P_0 (\mathbf{E},\mathbf{H})$. Let us first start with the non-gryotropic case where $W = W_+ = \overline{W} = W_-$ are real. The motivation to include the factor of $\nicefrac{1}{2}$ in the definition of $Q$ is to avoid counting longitudinal gradient fields “twice”: we can use functional calculus and the symmetry relation $C \, \Maux \, C = - \Maux$ to deduce 
\begin{align}
	2 \Re \, Q \, \Re &= \frac{1}{2} \Bigl ( 1_{(0,\infty)}(\Maux) + C \, 1_{(0,\infty)}(\Maux) \, C + 1_{(0,\infty)}(\Maux) \, C + C \, 1_{(0,\infty)}(\Maux)
	\Bigr . \notag \\
	&\qquad \; \; \; \Bigl .
	+ \tfrac{1}{2} \, 1_{\{ 0 \}}(\Maux) + \tfrac{1}{2} \, C \, 1_{\{ 0 \}}(\Maux) \, C + \tfrac{1}{2} \, C \, 1_{\{ 0 \}}(\Maux) + \tfrac{1}{2} \, 1_{\{ 0 \}}(\Maux) \, C \Bigr )
	\notag \\
	&= \Bigl ( 1_{(0,\infty)}(\Maux) + 1_{(-\infty,0)}(\Maux) + 1_{\{ 0 \}}(\Maux) \Bigr ) \, \Re
	= \Re
	\notag 
\end{align}
as maps on $L^2_W(\R^3,\C^6)$. Therefore, $Q$ maps real fields onto their complex representatives while given a complex field, taking $2 \Re$ restores the real field.
\begin{lemma}\label{Schroedinger:lem:equivalence_real_complex_states}
	Suppose $W = \overline{W}$ is real and satisfies Assumption~\ref{Schroedinger:assumption:material_weights}. Then any real field $(\mathbf{E},\mathbf{H}) \in L^2(\R^3,\R^6)$ can be uniquely represented by $\Psi = Q (\mathbf{E},\mathbf{H}) \in \Hil$ and the real field is recovered by taking $2 \Re$, 
	\begin{align}
		2 \Re \, Q (\mathbf{E},\mathbf{H}) = (\mathbf{E},\mathbf{H}) \in L^2(\R^3,\R^6)
		. 
		\label{Schroedinger:eqn:2Re_left_inverse_Q}
	\end{align}
	Put mathematically, $Q : L^2(\R^3,\R^6)$ is one-to-one and $2 \Re \, Q = \id : L^2(\R^3,\R^6) \longrightarrow L^2(\R^3,\R^6)$ reduces to the identity map. 
\end{lemma}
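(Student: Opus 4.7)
The plan is to leverage a single intertwining relation $C \, \Maux \, C = - \Maux$ between the auxiliary Maxwell operator and complex conjugation; from it both claims of the lemma drop out in one stroke via the spectral theorem.

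First I would verify the intertwining relation. The operator $\Rot$ carries an explicit factor of $\ii$ in each non-zero entry, so $C \, \Rot \, C = - \Rot$; since $W = \overline{W}$ is a real multiplication operator, $C \, W^{-1} \, C = W^{-1}$; composing the two gives $C \, \Maux \, C = - \Maux$. Granted the selfadjointness of $\Maux$ on $L^2_W(\R^3,\C^6)$ referenced in Appendix~\ref{appendix:generalized_Maxwell_type_operators}, the spectral theorem applies. Because $C$ is antiunitary (not unitary), for any real-valued Borel function $f$ one obtains $C \, f(\Maux) \, C = f(- \Maux)$; specializing to the indicators defining $P_+$ and $P_0$ yields
\begin{align*}
	C \, P_+ \, C &= 1_{(-\infty,0)}(\Maux) =: P_-,
	&
	C \, P_0 \, C &= P_0.
\end{align*}

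Next I would evaluate $2 \Re \, Q u$ on a real input $u = (\mathbf{E},\mathbf{H}) \in L^2(\R^3,\R^6)$. Since $C u = u$,
\begin{align*}
	2 \Re \, Q u = Q u + \overline{Q u} = Q u + C Q C u = (Q + C Q C) \, u.
\end{align*}
Substituting $Q = P_+ + \tfrac{1}{2} P_0$ and the two intertwining identities, the cross terms $QC$ and $CQ$ that appear in the analogous unrestricted computation just before the lemma collapse, leaving
\begin{align*}
	Q + C Q C = P_+ + P_- + \tfrac{1}{2} P_0 + \tfrac{1}{2} P_0 = P_+ + P_- + P_0 = \id,
\end{align*}
the last equality because the spectrum of the selfadjoint $\Maux$ lies in $\R$ and the three spectral projections partition the identity on $L^2_W(\R^3,\C^6)$. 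Hence $2 \Re \, Q u = u$, giving \eqref{Schroedinger:eqn:2Re_left_inverse_Q}; injectivity of $Q$ on $L^2(\R^3,\R^6)$ is then automatic, since $Q u = Q v$ forces $u = 2 \Re Q u = 2 \Re Q v = v$.

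The only delicate step is the antiunitary functional calculus used to derive $C \, f(\Maux) \, C = f(- \Maux)$ for real-valued $f$; a naively linear argument would miss the crucial sign flip $C \, \ii \, C = - \ii$ that powers everything. I expect this to be the main obstacle to a fully rigorous write-up, but it is entirely standard once one keeps track of which conjugations act on operators versus on the Borel functions fed into them.
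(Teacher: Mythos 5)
Your proof is correct and follows essentially the same route as the paper: the identity $C \, \Maux \, C = - \Maux$ together with the antiunitary functional calculus gives $C \, P_+ \, C = 1_{(-\infty,0)}(\Maux)$ and $C \, P_0 \, C = P_0$, whence $Q + C\,Q\,C = \id$ on real fields — this is exactly the computation displayed just before the lemma in the text. The only (harmless) deviation is that you deduce injectivity of $Q$ directly from the left-inverse identity $2 \Re \, Q = \id$, whereas the paper invokes Lemma~\ref{appendix:1_to_1_correspondence:lem:1_to_1_correspondence}~(1); your shortcut is valid here and the appendix argument is only genuinely needed in the gyrotropic case, where the left-inverse identity is not available.
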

The fact that $Q$ is one-to-one is the content of Lemma~\ref{appendix:1_to_1_correspondence:lem:1_to_1_correspondence}~(1) while the computation above shows the second part. 

\paragraph{Generalization to gyrotropic media} 
\label{par:gyrotropic_media}
The simple arguments used in the computation confirming that $2 \Re$ is the inverse of $Q$ fail if $W_+ \neq W_- = \overline{W_+}$ has a non-vanishing imaginary part, and it is not clear whether also here $2 \Re \, Q \, \Re = \Re$ holds true. For such gyrotropic media, two distinct auxiliary Maxwell operators $\Maux_{\pm} = W_{\pm}^{-1} \, \Rot$ enter the above computation and it no longer clear whether $P_+$ is orthogonal to 
\begin{align*}
	P_- = 1_{(-\infty,0)} \bigl ( \Maux_- \bigr )
	, 
\end{align*}
meaning whether $P_+ \, P_- = 0 = P_- \, P_+$ still holds, since $P_+$ and $P_-$ are spectral projections associated to two \emph{different} selfadjoint operators. Similarly, functional calculus is of no help when trying to show that 
\begin{align*}
	1_{\{ 0 \}} \bigl ( \Maux_- \bigr ) \, 1_{(0,\infty)} \bigl ( \Maux_+ \bigr )
	= 0 
	= 1_{\{ 0 \}} \bigl ( \Maux_+ \bigr ) \, 1_{(0,\infty)} \bigl ( \Maux_- \bigr ) 
\end{align*}
holds. If that were true, though, then the “projection defect” 
\begin{align*}
	\bigl ( 2 \Re \, Q \, \Re \bigr )^2 - 2 \Re \, Q \, \Re &=  \bigl ( P_+ \, P_- + P_- \, P_+ \bigr ) \, \Re 
	+ \\
	&\quad 
	- \frac{1}{4} \Bigl ( 1_{\{ 0 \}} \bigl ( \Maux_- \bigr ) \, P_+ + 1_{\{ 0 \}} \bigl ( \Maux_+ \bigr ) \, P_- \Bigr ) \, \Re 
\end{align*}
would vanish, and combined with the injectivity of $2 \Re \, Q : L^2(\R^3,\R^6) \longrightarrow L^2(\R^3,\R^6)$ we deduce that $2 \Re \, Q \Re$ is an injective projection, consequently the identity, and $2 \Re \, Q \, \Re = \Re$ follows. 

Nevertheless, we can show that $Q$ and $2 \Re \, Q$ restricted to real fields are one-to-one (Lemma~\ref{appendix:1_to_1_correspondence:lem:1_to_1_correspondence}~(1) and (2)), and therefore provide us with a unique association between real fields and complex fields composed solely of positive frequencies. This generalizes Lemma~\ref{Schroedinger:lem:equivalence_real_complex_states} to gyrotropic media. 
\begin{proposition}[Identification of real and complex fields]\label{Schroedinger:prop:identification_real_complex_vector_spaces}
	Suppose the material weights $W$ satisfy Assumption~\ref{Schroedinger:assumption:material_weights}. 
	\begin{enumerate}[(1)]
		\item Any physical (\ie real-valued) electromagnetic field $(\mathbf{E},\mathbf{H}) \in L^2(\R^3,\R^6)$ with square integrable amplitudes can be uniquely represented by a complex wave $\Psi = Q (\mathbf{E},\mathbf{H}) \in \Hil$ with $Q = P_+ + \tfrac{1}{2} P_0$. 
		\item Any complex $\Psi = Q (\mathbf{E}',\mathbf{H}') \in \Hil$ field constructed from a real field $(\mathbf{E}',\mathbf{H}') \in L^2(\R^3,\R^6)$ uniquely represents $(\mathbf{E},\mathbf{H}) = 2 \Re \Psi \in L^2(\R^3,\R^6)$. 
	\end{enumerate}
\end{proposition}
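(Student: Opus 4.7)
Both claims reduce to two injectivity statements about $Q$ restricted to real fields, namely parts~(1) and~(2) of Lemma~\ref{appendix:1_to_1_correspondence:lem:1_to_1_correspondence}, whose proofs are deferred to the appendix. My first step is to verify that $Q$ actually takes real fields into $\Hil$. Because Assumption~\ref{Schroedinger:assumption:material_weights} ensures $W$ is bounded above and below, we have the set-theoretic inclusion $L^2(\R^3,\R^6) \subset L^2_W(\R^3,\C^6)$; the projections $P_+ = 1_{(0,\infty)}(\Maux)$ and $P_0 = 1_{\{0\}}(\Maux)$ are well-defined orthogonal projections on $L^2_W(\R^3,\C^6)$ by selfadjointness of the auxiliary Maxwell operator $\Maux$, so $Q(\mathbf{E},\mathbf{H}) \in \ran P_+ \oplus \ran P_0 = \Jphys \oplus \mathcal{G} = \Hil$ by construction.

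For Part~(1), unique representation amounts to the implication: if $(\mathbf{E}_1,\mathbf{H}_1)$ and $(\mathbf{E}_2,\mathbf{H}_2)$ are real with $Q(\mathbf{E}_1,\mathbf{H}_1) = Q(\mathbf{E}_2,\mathbf{H}_2)$, then they coincide. Linearity reduces this to $\ker Q \cap L^2(\R^3,\R^6) = \{0\}$, which is exactly part~(1) of the appendix lemma. For Part~(2), given $\Psi = Q(\mathbf{E}',\mathbf{H}') \in \Hil$, set $(\mathbf{E},\mathbf{H}) := 2\Re\Psi$; this is automatically a square-integrable real field because $Q$ is bounded on $L^2_W(\R^3,\C^6)$. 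To conclude that $\Psi$ \emph{uniquely} represents this real field I would invoke the injectivity of $2\Re\,Q$ on real fields, part~(2) of the lemma: two distinct preimages would produce distinct values of $2\Re\,Q(\mathbf{E}',\mathbf{H}')$, contradicting the hypothesis. Taken together, the two statements give a bijection between $L^2(\R^3,\R^6)$ and $Q\bigl[L^2(\R^3,\R^6)\bigr] \subset \Hil$, with $Q$ and $2\Re$ serving as mutual inverses on this pair.

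The substance of the argument — and the reason this is a proposition rather than an elementary observation — lies entirely in the two injectivity statements for gyrotropic weights. In the non-gyrotropic case the text already demonstrated the stronger identity $2\Re\,Q\,\Re = \Re$ by a direct functional-calculus manipulation exploiting $C\,\Maux\,C = -\Maux$ with a \emph{single} auxiliary operator; for gyrotropic weights this symmetry is replaced by $C\,\Maux_+\,C = -\Maux_-$, which couples the spectral decompositions of two \emph{different} selfadjoint operators, so functional calculus no longer closes on itself and the identity $2\Re\,Q\,\Re = \Re$ is in fact not expected to hold verbatim. My expectation is that the appendix establishes injectivity of $Q$ on real fields by analysing a hypothetical real element of $\ker Q$, splitting it via the Helmholtz decomposition $\Psi = \Psi_\perp + \Psi_\parallel$ and using the divergence constraint to pin down the longitudinal part $P_0\Psi$, and then leveraging positivity of the energy scalar product together with the spectral theorem for $\Maux$ to force the transversal piece in $\ran P_+$ to vanish. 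Injectivity of $2\Re\,Q$ should follow by a parallel but more delicate argument that simultaneously handles $\Maux_+$ and $\Maux_-$ and exploits the antiunitary relation between their spectral subspaces at $\pm\omega$; this is the genuine new ingredient beyond the non-gyrotropic theory, and is where I expect the bulk of the technical work to reside.
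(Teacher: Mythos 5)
Your reduction of the Proposition to the two injectivity statements is exactly the route the paper takes: the main text's ``proof'' consists precisely of deferring to parts~(1) and~(2) of Lemma~\ref{appendix:1_to_1_correspondence:lem:1_to_1_correspondence}, and your verification that $Q$ is well defined on real fields and that the two parts of the Proposition are equivalent to $\ker Q \cap L^2(\R^3,\R^6) = \{0\}$ and to injectivity of $2\Re\,Q$ on real fields is correct and matches the paper. You also correctly diagnose why the non-gyrotropic functional-calculus identity $2\Re\,Q\,\Re = \Re$ is unavailable here.

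The gap is that you never prove the two injectivity statements, and the mechanism you sketch for them is not the one that works. For injectivity of $Q_+$ on real fields the paper does \emph{not} use the Helmholtz decomposition or the divergence constraint (which plays no role in either injectivity proof); instead it writes a real field via the spectral resolution of $\Maux_+$ as $(\mathbf{E},\mathbf{H}) = P_+ (\mathbf{E},\mathbf{H}) + P_{+,0}(\mathbf{E},\mathbf{H}) + 1_{(-\infty,0)}(\Maux_+)(\mathbf{E},\mathbf{H})$, observes that $Q_+(\mathbf{E},\mathbf{H}) = 0$ kills the first two terms, uses reality and $C\,\Maux_{\pm}\,C = -\Maux_{\mp}$ to place the remainder simultaneously in $\ran 1_{(-\infty,0)}\bigl( \Maux_+ \bigr)$ and $\ran 1_{(-\infty,0)}\bigl( -\Maux_- \bigr)$, and then invokes the triviality of that intersection — an external input imported from \cite[Lemma~2.5]{DeNittis_Lein:ray_optics_photonic_crystals:2014}, which is the genuinely hard analytic fact and which your sketch (``positivity of the energy scalar product together with the spectral theorem'') does not supply. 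Likewise, injectivity of $2\Re\,Q_{\pm}$ on the zero-frequency part is handled not by any divergence argument but by a short contradiction: a purely imaginary $P_{\pm,0}(\mathbf{E},\mathbf{H})$ forces $P_{+,0}(\mathbf{E},\mathbf{H}) = -P_{-,0}(\mathbf{E},\mathbf{H})$, and since both projections share the range $\mathcal{G}$, applying $P_{-,0}$ produces the eigenvalue $-1$ for an orthogonal projection, which is impossible. Without these two ingredients the substance of the Proposition remains unestablished.
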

The mathematically inclined reader may find the proof in Appendix~\ref{appendix:1_to_1_correspondence}, which heavily relies on \cite[Lemma~2.5]{DeNittis_Lein:ray_optics_photonic_crystals:2014}. 

While the above statement still yields a one-to-one correspondence between real and complex waves, the above statement is weaker than Lemma~\ref{Schroedinger:lem:equivalence_real_complex_states}. That is because we do not know whether $(\mathbf{E},\mathbf{H}) = (\mathbf{E}',\mathbf{H}')$ in item~(2) agree, \ie whether $2 \Re \, Q \, \Re = \Re$ holds in case the material weights are complex. We attempted unsuccessfully to modify $Q = P_+ + \tfrac{1}{2} P_0$ the maps $2 \Re$ is its left-inverse while imposing a number of conditions: 
\begin{enumerate}[(1)]
	\item $Q$ maps real fields onto complex fields composed solely of non-negative frequencies, $Q \bigl [ L^2(\R^3,\R^6) \bigr ] = 1_{[0,\infty)}(\Maux) \bigl [ L^2_W(\R^3,\C^6) \bigr ]$. 
	\item $Q : L^2(\R^3,\R^6) \longrightarrow \Hil$ is one-to-one.
	\item $2 \Re \, Q : L^2(\R^3,\R^6) \longrightarrow L^2(\R^3,\R^6)$ is one-to-one.
	\item When the weights are real, $Q$ reduces to $Q = P_+ + \tfrac{1}{2} P_0$. 
\end{enumerate}
The first condition ensures that the relevant Maxwell operator is still $M$ while the second and the third are the key ingredients for showing $2 \Re \, Q \, \Re = \Re$. As explained above, it may be easier to show $\bigl ( 2 \Re \, Q \, \Re \bigr )^2 = 2 \Re \, Q \, \Re$ instead. The last one is a consistency condition for the non-gyrotropic case where we already know the correct association. 

\subsection{Schrödinger formalism for classical electromagnetism} 
\label{Schroedinger:formalism}
Now we are in a position to (S1) identify the Hilbert space, (S2)~show that the Schrödinger equation along with the Maxwell operator are equivalent to Maxwell's equation~\eqref{Maxwell_equations:eqn:approximate_Maxwell_equations}, and (S3)~take the real-valuedness of physical fields properly into account.

\subsubsection{States} 
\label{Schroedinger:formalism:states}
The physical fields $(\mathbf{E},\mathbf{H})$ and the complex fields $\Psi = Q \, (\mathbf{E},\mathbf{H})$ that represent them are connected through the projection $Q = 1_{(0,\infty)}(\Maux) + \tfrac{1}{2} \, 1_{\{ 0 \}}(\Maux)$, and therefore the Hilbert spaces is the subspace of non-negative ($\omega \geq 0$) frequency states,
\begin{align*}
	\Hil = Q \, \bigl ( L^2_W(\R^3,\C^6) \bigr ) 
	= P_+ \, \bigl ( L^2_W(\R^3,\C^6) \bigr ) \oplus P_0 \, \bigl ( L^2_W(\R^3,\C^6) \bigr )
	= \Jphys \oplus \mathcal{G}
	, 
\end{align*}
which inherits the scalar product $\scpro{\, \cdot \,}{\, \cdot \,}_W$ from $L^2_W(\R^3,\C^6)$; any vector $\Psi = \Psi_{\perp} + \Psi_{\parallel}$ can again be split into transversal and longitudinal components, 
\begin{subequations}\label{Schroedinger:eqn:complex_wave_computed_from_real_wave}
	\begin{align}
		\Psi &= Q \, (\mathbf{E},\mathbf{H})
		, 
		\\
		\Psi_{\perp} &= P_+ \, (\mathbf{E},\mathbf{H})
		, 
		\\
		\Psi_{\parallel} &= P_0 \, (\mathbf{E},\mathbf{H})
		. 
	\end{align}
\end{subequations}
Note that applying $Q$ to the physical fields makes sense, because real fields $(\mathbf{E},\mathbf{H})$  with square integrable amplitudes are elements of $\Hil_{\mathrm{phys}} = L^2(\R^3,\R^6)$ (defined in analogy to \eqref{Schroedinger:eqn:unweighted_L2}), \ie those can also be regarded as elements of $L^2_W(\R^3,\C^6)$. 

\subsubsection{The Maxwell operator and the Schrödinger equation} 
\label{Schroedinger:formalism:dynamics}
The relevant Maxwell operator 
\begin{align*}
	M = 1_{[0,\infty)}(\Maux) \; \Maux \; 1_{[0,\infty)}(\Maux) \, \big \vert_{\Hil} 
	= \Maux \big \vert_{\Hil} = W^{-1} \, \Rot \big \vert_{\Jphys} \oplus 0 \vert_{\mathcal{G}}
\end{align*}
is just the restriction of the auxiliary Maxwell operator to $\omega \geq 0$; for a complete mathematical definition, we refer to Appendix~\ref{appendix:generalized_Maxwell_type_operators}. This operator inherits the selfadjointness of $\Maux$ and enter the Schrödinger equation
\begin{align}
	\ii \, \partial_t \Psi(t) = M \Psi(t) - \ii \, J(t)
	, 
	&&
	\Psi(0) = Q \, (\mathbf{E}_0,\mathbf{H}_0)
	, 
	\label{Schroedinger:eqn:pm_Schroedinger_equation_em}
\end{align}
in place of the Hamilton operator; the complex currents 
\begin{align*}
	J(t) &= J_{\perp}(t) + J_{\parallel}(t) 
	= Q \, W^{-1} \, \mathbf{J}(t) 
\end{align*}
are defined from the real current $\mathbf{J}(t)$ via $J_{\perp}(t) = P_+ \, W^{-1} \, \mathbf{J}(t)$ and $J_{\parallel}(t) = P_0 \, W^{-1} \, \mathbf{J}(t)$. 

Cutting out the negative part of the spectrum is not just necessary for a coherent physical interpretation, but that procedure also changes essential properties of $M$. Akin to Dirac operators and \emph{auxiliary} Maxwell operators have positive \emph{and} negative spectrum. The easiest way to see this is to recognize that the Maxwell equations are \emph{the} relativistic equations for a massless spin-$1$ particle \cite[pp.~197]{Wigner:representations_Lorentz_group:1939}. However, this is not to be interpreted as particles (positive frequency states) and antiparticles (negative frequency states), but rather an artifact of the \emph{complexification of the fields}. Indeed, the phase locking condition~\eqref{Maxwell_equations:eqn:phase_locking_condition} between positive and negative frequency wave functions, which stems from the reality of physical fields, tells us that positive and negative frequency fields are not independent degrees of freedom. Mathematical similarities do not always imply analogous physical interpretations. 

In quantum mechanics, the physical implication of the Hamilton operator's selfadjointness is conservation of total probability. Similarly, the selfadjointness of the Maxwell leads to conservation of total field energy in the absence of sources: according to the arguments preceding \eqref{Maxwell_equations:eqn:field_energy} the total field energy is nothing but the square of the weighted norm so that the $\scpro{\, \cdot \,}{\, \cdot \,}_W$-unitarity of the evolution group $\e^{- \ii t M}$ yields 
\begin{align*}
	\mathcal{E} \bigl ( \mathbf{E}(t) , \mathbf{H}(t) \bigr ) &= \Bscpro{Q \bigl ( \mathbf{E}(t) , \mathbf{H}(t) \bigr ) \,}{\, Q \bigl ( \mathbf{E}(t) , \mathbf{H}(t) \bigr )}_W 
	\\
	&= \Bscpro{\e^{- \ii t M} \, Q \bigl ( \mathbf{E}(0) , \mathbf{H}(0) \bigr ) \,}{\, \e^{- \ii t M} \, Q \bigl ( \mathbf{E}(0) , \mathbf{H}(0) \bigr )}_W 
	\\
	&= \Bscpro{Q \bigl ( \mathbf{E}(0) , \mathbf{H}(0) \bigr ) \,}{\, Q \bigl ( \mathbf{E}(0) , \mathbf{H}(0) \bigr )}_W 
	. 
\end{align*}
We will explain in Section~\ref{Schroedinger:Hil_vs_Hil_phys} below that this expression is consistent with the more general definition of field energy~\eqref{Maxwell_equations:eqn:integrated_field_energy_frequency} for dispersive media. 

Of course, this is a convenient \emph{choice} of scalar product and we could have gone with, say, the regular, unweighted scalar product. In fact, for a large share of the arguments the only restriction we need to impose on the scalar product is that the associated norms be equivalent; this is so that $\Hil$ is complete with respect to any of these norms. However, when using scalar products different from $\scpro{\, \cdot \,}{\, \cdot \,}_W$, we can no longer exploit many of the neat simplifications that are directly linked to the selfadjointness of $M$. For instance, Bloch functions of photonic crystals do not form an orthonormal basis with respect to the unweighted scalar product $\scpro{\, \cdot \,}{\, \cdot \,}$. 

Another useful consequence is the existence of a resolution of the identity via the so-called projection-valued measure associated to $M$, commonly (albeit inaccurately) referred to in the physics literature as the existence of a complete set of eigenfunctions. Concretely, this allows us to expand electromagnetic fields in terms of Bloch functions associated to periodic light conductors (also called \emph{photonic crystals}). 

\subsubsection{Equivalence of the Schrödinger formalism} 
\label{Schroedinger:formalism:reality}
Now we can put all the pieces together: in Sections~\ref{Schroedinger:auxiliary_operators:projections} and \ref{Schroedinger:formalism:states} we have shown that there is a one-to-one correspondence between real fields $(\mathbf{E},\mathbf{H}) = 2 \Re \Psi$ with square integrable amplitudes and complex fields of non-negative frequency $\Psi$, introduced the relevant Hilbert space and the Maxwell operator, and our efforts culminate in the main result of this paper: 
\begin{theorem}[Equivalence of Maxwell's equations and Schrödinger formalism]\label{Schroedinger:thm:equivalence_frameworks}~\\
	Suppose the material weights $W$ which describe the medium satisfy Assumption~\ref{Schroedinger:assumption:material_weights}. 
	The sources $\pmb{\rho}(t)$ and $\mathbf{J}(t)$ are assumed to only excite states supported by the medium (in the sense of Assumption~\ref{Maxwell_equations:assumption:current_density}) and their non-negative frequency contributions $\rho_+(t)$ and $J_+(t)$ satisfy charge conservation~\eqref{Maxwell_equations:eqn:approximate_Maxwell_equations:charge_conservation}. 
	And lastly, on the initial conditions $(\mathbf{E}_0,\mathbf{H}_0) \in L^2(\R^3,\R^6)$ we impose that the field energy shall be finite and that $\Psi_+(t_0) = Q (\mathbf{E}_0,\mathbf{H}_0)$ satisfies the constraint equation~\eqref{Maxwell_equations:eqn:approximate_Maxwell_equations:constraint}. Then under these technical conditions the following holds: 
	\begin{enumerate}[(1)]
		\item Equations~\eqref{Maxwell_equations:eqn:approximate_Maxwell_equations} and \eqref{Schroedinger:eqn:pm_Schroedinger_equation_em} are equivalent, \ie the solution 
		\begin{align}
			\Psi(t) = \e^{- \ii (t - t_0) M} \Psi(t_0) - \ii \int_{t_0}^t \dd s \, J(t)
			\label{Maxwell_equations:eqn:solution_Maxwell_equations_Schroedinger_form}
		\end{align}
		of the Schrödinger-type equation~\eqref{Schroedinger:eqn:pm_Schroedinger_equation_em} satisfies the non-negative frequency Maxwell equations~\eqref{Maxwell_equations:eqn:approximate_Maxwell_equations} with initial condition $\Psi(t_0) = Q (\mathbf{E}_0,\mathbf{H}_0)$ and current density $J(t) = Q \, W^{-1} \, \mathbf{J}(t)$. Conversely, the non-negative frequency solution $\Psi_+(t)$ to \eqref{Maxwell_equations:eqn:approximate_Maxwell_equations} with $\Psi_+(t_0) = Q (\mathbf{E}_0,\mathbf{H}_0)$ satisfies equation \eqref{Schroedinger:eqn:pm_Schroedinger_equation_em}. 
		\item The physical fields $\bigl ( \mathbf{E}(t) , \mathbf{H}(t) \bigr )$ can be uniquely reconstructed from its complex representative $\Psi(t) = \Psi_+(t) = Q \bigl ( \mathbf{E}(t) , \mathbf{H}(t) \bigr )$. 
	\end{enumerate}
\end{theorem}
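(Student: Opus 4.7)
The plan is to split the theorem into its dynamical and representational parts. For item~(1) I would verify the equivalence of the two differential equations algebraically, treating transversal and longitudinal components separately via the Helmholtz decomposition from Section~\ref{Schroedinger:auxiliary_operators:helmholtz}, and then check that the constraint equation~\eqref{Maxwell_equations:eqn:approximate_Maxwell_equations:constraint} is automatically preserved along the flow. Item~(2) will then follow directly by invoking Proposition~\ref{Schroedinger:prop:identification_real_complex_vector_spaces}.

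For the direction from the Schrödinger equation to Maxwell's equations, I would multiply both sides of~\eqref{Schroedinger:eqn:pm_Schroedinger_equation_em} from the left by $W$. Since $W \, M = \Rot$ holds on the transversal subspace $\Jphys$ and both $\Rot$ and $W \, M$ annihilate $\mathcal{G} = \ker \Rot$, this converts~\eqref{Schroedinger:eqn:pm_Schroedinger_equation_em} into $\ii \, W \, \partial_t \Psi(t) = \Rot \Psi(t) - \ii \, W \, J(t)$, which matches~\eqref{Maxwell_equations:eqn:approximate_Maxwell_equations:dynamics} once the currents are identified. That $\Psi(t)$ remains in $\Hil$ for all times (so it stays composed of non-negative frequencies only) follows from the selfadjointness of $M$ with respect to $\scpro{\, \cdot \,}{\, \cdot \,}_W$, which by the standard Stone-type argument promotes the formal group $\e^{- \ii t M}$ into a $\scpro{\, \cdot \,}{\, \cdot \,}_W$-unitary flow on $\Hil$; combined with Assumption~\ref{Maxwell_equations:assumption:current_density} ensuring $J(t) = Q \, W^{-1} \, \mathbf{J}(t) \in \Hil$, the Duhamel formula~\eqref{Maxwell_equations:eqn:solution_Maxwell_equations_Schroedinger_form} then cannot produce negative-frequency contributions. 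The converse direction is simpler: multiplying~\eqref{Maxwell_equations:eqn:approximate_Maxwell_equations:dynamics} by $\ii \, W^{-1}$, which is bounded thanks to Assumption~\ref{Schroedinger:assumption:material_weights}~(b), and restricting to $\omega \geq 0$ immediately recovers~\eqref{Schroedinger:eqn:pm_Schroedinger_equation_em}.

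To verify that the constraint is propagated in time, I would apply $\Div \, W \, \cdot$ to the dynamical equation; since $\Div \, \Rot = 0$ this yields $\partial_t \Div W \, \Psi(t) = - \Div W \, J(t)$, which combined with charge conservation~\eqref{Maxwell_equations:eqn:approximate_Maxwell_equations:charge_conservation} gives $\partial_t \bigl ( \Div W \, \Psi(t) - \rho_+(t) \bigr ) = 0$, so the constraint holds for all $t$ once it holds at $t = t_0$, which is assumed. Item~(2) then follows from Proposition~\ref{Schroedinger:prop:identification_real_complex_vector_spaces}: the map $Q$ is injective on real fields, and $2 \Re \, Q \Psi(t)$ uniquely reconstructs $\bigl ( \mathbf{E}(t) , \mathbf{H}(t) \bigr )$. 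The main obstacle I anticipate is the precise matching of currents between the two formulations: one must show that the Schrödinger current $W \, J(t) = W \, Q \, W^{-1} \, \mathbf{J}(t)$ coincides with the non-negative frequency current $J_+(t)$ entering~\eqref{Maxwell_equations:eqn:approximate_Maxwell_equations:dynamics}. This is exactly where Assumption~\ref{Maxwell_equations:assumption:current_density} is essential, since it guarantees that $W^{-1} \mathbf{J}(t)$ lies in $L^2_W(\R^3,\C^6)$ and that applying $Q$ to it corresponds, in the frequency domain, to the non-negative spectral projection of $\Maux$.
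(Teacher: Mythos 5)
Your overall route for item~(1) matches the paper's: the two dynamical equations are related by multiplication with the bounded operators $W^{\pm 1}$, the constraint is propagated in time via the Helmholtz decomposition together with $\Div \, \Rot = 0$ and charge conservation, and the constitutive relations play no role. Two points need attention, however.

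First, the current identification that you flag as ``the main obstacle'' is not an optional detail — it is the technical core of the equivalence, and merely invoking Assumption~\ref{Maxwell_equations:assumption:current_density} does not close it. The paper devotes Lemma~\ref{appendix:1_to_1_correspondence:lem:equivalence:Fourier_transform_time_spectral_decomposition} to showing that $Q \, W^{-1} \, \mathbf{J}(t) = W^{-1} \, J_+(t)$, where $J_+$ is defined via the Fourier transform in time as in \eqref{Maxwell_equations:eqn:complex_positive_negative_frequency_waves}; the proof inserts the projection-valued measure of $M$ and uses Lemma~\ref{Schroedinger:lem:identification_dispersive_nondispersive_Hilbert_spaces} to identify the spectral fibers of $M$ with the frequency fibers $\Hil(\omega)$, so that Assumption~\ref{Maxwell_equations:assumption:current_density} can be read as $\widehat{\mathbf{J}}(\omega) = W \, \dd 1_{\omega}(M) \, W^{-1} \, \mathbf{J}(t)$ for $\omega \geq 0$. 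Without this step the two equations carry \emph{a priori} different source terms (one defined by a spectral projection of $\Maux$, the other by a Fourier cutoff in time) and the claimed equivalence does not follow; the same identification is also needed for the solution $\Psi_+(t)$ itself.

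Second, your item~(2) reconstructs the physical field by taking $2 \Re$ of the complex representative. For gyrotropic media this is precisely the statement the paper cannot prove: it is open whether $2 \Re \, Q \, \Re = \Re$ when $W \neq \overline{W}$ (see the discussion following Proposition~\ref{Schroedinger:prop:identification_real_complex_vector_spaces}). What is known is only that $2 \Re \, Q$ is \emph{injective} on real fields, not that it is the identity. The paper therefore proves item~(2) differently: $Q : L^2(\R^3,\R^6) \longrightarrow \Hil$ is injective by Lemma~\ref{appendix:1_to_1_correspondence:lem:1_to_1_correspondence}~(1), hence invertible onto its range, and the reconstruction is $\bigl ( \mathbf{E}(t) , \mathbf{H}(t) \bigr ) = Q^{-1} \, \Psi(t)$. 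You already cite the injectivity of $Q$, so the fix is to use $Q^{-1}$ rather than $2 \Re$ as the reconstruction map.
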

For physicists we reckon our systematic derivation is persuasive. However, for the benefit of more mathematically minded people, we have included a proof in Appendix~\ref{appendix:1_to_1_correspondence}. The key ingredients are the one-to-one correspondence between physical fields and complex states representing them (Proposition~\ref{Schroedinger:prop:identification_real_complex_vector_spaces}) as well as showing that the frequency decomposition via the Fourier transform in time is the same as that obtained from the spectral decomposition of $M$. 

\subsection{Complexified equations for gyrotropic materials} 
\label{Schroedinger:complexified_equations}
For non-gyrotropic media where thanks to $W = \overline{W}$ the two sets of  equations~\eqref{Maxwell_equations:eqn:approximate_Maxwell_equations} coincide, and it is much more straight-forward to treat those equations. The easiest way to proceed here is to mathematically admit complex electromagnetic fields and consider 
\begin{align*}
	M_{\C} = W^{-1} \, \Rot
\end{align*}
on the complexified Hilbert space $\Hil_{\C} = L^2_W(\R^3,\C^6)$ \emph{without} making any frequency constraints. The dynamical equation~\eqref{Maxwell_equations:eqn:approximate_Maxwell_equations:dynamics} can be recast as 
\begin{align}
	\ii \, \partial_t \Psi &= M_{\C} \Psi - \ii \, W^{-1} \, \mathbf{J}(t) 
	, 
	&&
	\Psi(0) = (\mathbf{E}_0,\mathbf{H}_0)
	, 
	\label{Schroedinger:eqn:nongyrotropic_Maxwell_Schroedinger_equation}
\end{align}
where $\mathbf{J}(t)$ is the real current and $\Psi(t)$ is just a relabelling of $\bigl ( \mathbf{E}(t) , \mathbf{H}(t) \bigr )$ to make the similarity to quantum mechanics more explicit. Admitting complex fields of course \emph{doubles} the degrees of freedom, because mathematically we have to admit complex electromagnetic fields for technical reasons. These equations and their non-linear generalizations have been studied extensively in the literature (see \eg \cite{Birman_Solomyak:L2_theory_Maxwell_operator:1987,Figotin_Klein:localization_classical_waves_II:1997,Kuchment:math_photonic_crystals:2001,Babin_Figotin:nonlinear_Maxwell:2003,DeNittis_Lein:sapt_photonic_crystals:2013}). 

Using the Helmholtz decomposition and local charge conservation~\eqref{Maxwell_equations:eqn:approximate_Maxwell_equations:charge_conservation}, it is straightforward to verify that the solution $\Psi(t)$ also satisfies the constraint equation~\eqref{Maxwell_equations:eqn:approximate_Maxwell_equations:constraint}. 

The reality of the waves emerges from a symmetry that $M_{\C}$ necessarily possesses: for non-gyrotropic materials where $W = \overline{W} = C \, W \, C$ holds, complex conjugation \emph{anti}commutes with the Maxwell operator, 
\begin{align*}
	C \, M_{\C} \, C = - M_{\C}
	. 
\end{align*}
In the parlance of the Cartan-Altland-Zirnbauer scheme \cite{Altland_Zirnbauer:superconductors_symmetries:1997,Schnyder_Ryu_Furusaki_Ludwig:classification_topological_insulators:2008} $C$ is an “even particle-hole-type symmetry”, although we would like to reiterate what we stated in Section~\ref{Schroedinger:auxiliary_operators:projections}, namely that the terminology has been created with quantum mechanics in mind and its interpretation does not carry over to electromagnetism. Due to phase locking~\eqref{Maxwell_equations:eqn:phase_locking_condition} “particles” and “holes” are not independent degrees of freedom. 

Consequently, $C$ and therefore also the real part $\Re = \tfrac{1}{2} (\id + C)$ \emph{commute} with the evolution group, 
\begin{align*}
	\Re \, \e^{- \ii t M_{\C}} = \e^{- \ii t M_{\C}} \, \Re 
	. 
\end{align*}
This is of course just a fancy way of saying that initially real electromagnetic fields remain real. If one restrict oneself to real initial states, then these complexified equations indeed give a simple way to study Maxwell's equations using techniques from quantum mechanics. Real solutions of course retain the symmetry $\Psi_- = \overline{\Psi_+}$ where now $\Psi_- = P_- (\mathbf{E},\mathbf{H})$ is computed with the help of the spectral projection $P_- = 1_{(-\infty,0)}(M_{\C})$ onto the negative frequencies. 

It would be tempting to surmise that a version of this simplified complexification procedure also works in case $W \neq \overline{W}$. But our first-principles derivation of \eqref{Maxwell_equations:eqn:approximate_Maxwell_equations} in Section~\ref{Maxwell_equations} shows that this is unfortunately not the case. Postulating \eqref{Schroedinger:eqn:nongyrotropic_Maxwell_Schroedinger_equation} for complex weights leads to unphysical electromagnetic fields, because even if they are initially real, they acquire a non-zero imaginary part over time — in direct contradiction with one of the basic tenets of electromagnetism, that physical fields are real. The correct, albeit somewhat complicated complexification procedure is to write $M_{\C} = M_+ \oplus M_-$ as the direct sum of positive and negative frequency operators acting on $\Hil_{\C} = \Hil_+ \oplus \Hil_-$ (see \cite[Section~2.2]{DeNittis_Lein:ray_optics_photonic_crystals:2014} for details). But in many situations there is no advantage to work with the complexified equations; in fact for things such as the topological classification of electromagnetic media it is detrimental \cite{DeNittis_Lein:symmetries_electromagnetism:2017}. Indeed, Theorem~\ref{Schroedinger:thm:equivalence_frameworks} guarantees that \emph{all} of the physical phenomena are described by the non-negative frequency equations, because they are equivalent to Maxwell's equations. 

\subsection{Comparison with the dispersive Hilbert space $\Hil$ from Section~\ref{Maxwell_equations:hilbert_space}} 
\label{Schroedinger:Hil_vs_Hil_phys}
We close this section by comparing the Hilbert spaces $\Hil = \int_0^{\infty} \dd \omega \, \Hil(\omega)$ introduced in Sections~\ref{Maxwell_equations:hilbert_space} and $\Hil = Q \bigl [ L^2_W(\R^3,\C^6) \bigr ]$ from Section~\ref{Schroedinger:formalism}; it turns out that in the absence of dispersion these two Hilbert spaces are one and the same. Because their definitions initially seem very different from one another, so it is worthwhile to make the necessary identifications explicit. 

As we are only interested in real states, we may restrict our discussion to $\omega \geq 0$. Suppose the weights $\widehat{W}(\omega) = W$, $\omega \geq 0$, are frequency-independent. Then for fixed sign of the frequency all the scalar products coincide, 
\begin{align*}
	\scpro{\Phi}{\Psi}_{\omega} = \Bscpro{\Phi}{\tfrac{\dd }{\dd \omega} \bigl ( \omega \, \widehat{W}(\omega) \bigr ) \, \Psi} = \scpro{\Phi}{W \, \Psi} = \scpro{\Phi}{\Psi}_W 
	, 
\end{align*}
and we have to figure out how to identify $\Hil(\omega)$ using properties of the Maxwell operator $M$. The key ingredient is the resolution of the identity 
\begin{align*}
	\id_{\Hil} = \int_0^{\infty} \dd 1_{\omega}(M) 
\end{align*}
provided by the so-called projection-valued measure \cite[Chapter~VIII.3]{Reed_Simon:M_cap_Phi_1:1972}; simply put, this is the generalization of expanding the identity in terms of eigenfunctions of a selfadjoint (hermitian) operator in case the operator's spectrum does not just consist of eigenvalues. Here, $\dd 1_{\omega}(M)$ projects onto the (usually infinitesimal) sliver of the Hilbert space composed of states that rotate with frequency $\omega$ when time evolved, \ie it consists of (generalized) solutions to 
\begin{align*}
	M \varphi_{\omega} = \omega \, \varphi_{\omega}
	. 
\end{align*}
But this is \emph{exactly} the definition of $\Hil(\omega)$, and we see that $\Hil = \int_0^{\infty} \dd \omega \, \Hil(\omega)$ and $\Hil = Q \bigl [ L^2_W(\R^3,\C^6) \bigr ]$ coincide not just as vector spaces, but because they share their scalar product also as Hilbert spaces. Put another way, we have just proven 
\begin{lemma}[Equivalence of Hilbert spaces]\label{Schroedinger:lem:identification_dispersive_nondispersive_Hilbert_spaces}
	Suppose the material weights satisfy Assumption~\ref{Schroedinger:assumption:material_weights}. Then $\Hil$ defined in equation~\eqref{Maxwell_equations:eqn:dispersive_non_negative_frequency_Hilbert_space} for the constant frequency weights $\widehat{W}(\omega) = W$, $\omega \geq 0$, coincides with $\Hil = 1_{[0,\infty)}(\Maux) \, \bigl [ L_W^2(\R^3,\C^6) \bigr ]$ as Hilbert spaces with energy scalar product $\scpro{\, \cdot \,}{\, \cdot \,}_W$. 
\end{lemma}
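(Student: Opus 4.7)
The plan is to verify the two statements separately (scalar products agree, and the underlying vector spaces agree) and then invoke the spectral theorem for $\Maux$ to glue them together. Under the dispersion-free assumption $\widehat{W}(\omega) = W$ for $\omega \geq 0$, the pointwise derivative simplifies to $\tfrac{\dd}{\dd \omega} \bigl ( \omega \, \widehat{W}(\omega) \bigr ) = W$, so the frequency-dependent scalar product $\scpro{\, \cdot \,}{\, \cdot \,}_{\omega}$ reduces to $\scpro{\, \cdot \,}{W \, \cdot \,} = \scpro{\, \cdot \,}{\, \cdot \,}_W$ for every $\omega \geq 0$. Hence, once the fibers are identified correctly, the integrated scalar product $\int_0^{\infty} \dd \omega \, \scpro{\, \cdot \,}{\, \cdot \,}_{\omega}$ coming from \eqref{Maxwell_equations:eqn:dispersive_non_negative_frequency_Hilbert_space} and the restriction of $\scpro{\, \cdot \,}{\, \cdot \,}_W$ to the spectral subspace will automatically match.

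Next I would identify the fibers. Since Assumption~\ref{Schroedinger:assumption:material_weights} ensures $\Maux = W^{-1} \, \Rot$ is selfadjoint on $L^2_W(\R^3,\C^6)$ (as recalled in the paper and made precise in Appendix~\ref{appendix:generalized_Maxwell_type_operators}), Stone's theorem yields the unitary evolution $\e^{- \ii t \Maux}$ on $L^2_W(\R^3,\C^6)$, and the source-free equation~\eqref{Maxwell_equations:eqn:approximate_Maxwell_equations:dynamics} is equivalent to $\ii \partial_t \Psi(t) = \Maux \Psi(t)$. Fourier transforming in time converts this into the generalized eigenvalue relation $\omega \, \widehat{\Psi}(\omega) = \Maux \, \widehat{\Psi}(\omega)$, which characterizes the fiber $\Hil(\omega)$ defined after \eqref{Maxwell_equations:eqn:dispersive_non_negative_frequency_Hilbert_space}. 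On the other hand, the spectral theorem for the selfadjoint operator $\Maux$ provides a direct integral decomposition
\begin{align*}
	L^2_W(\R^3,\C^6) \cong \int_{\R}^{\oplus} \dd \mu_{\Maux}(\omega) \, \Hil_{\Maux}(\omega)
	,
\end{align*}
where $\Hil_{\Maux}(\omega)$ is the generalized eigenspace of $\Maux$ at spectral value $\omega$. Matching the two characterizations fiberwise yields $\Hil(\omega) = \Hil_{\Maux}(\omega)$ and identifies the scalar product $\scpro{\, \cdot \,}{\, \cdot \,}_{\omega}$ with the restriction of $\scpro{\, \cdot \,}{\, \cdot \,}_W$ to the fiber.

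Finally, applying the spectral projection $1_{[0,\infty)}(\Maux)$ to both sides truncates the direct integral to $\omega \in [0,\infty)$:
\begin{align*}
	1_{[0,\infty)}(\Maux) \bigl [ L^2_W(\R^3,\C^6) \bigr ]
	\cong \int_{[0,\infty)}^{\oplus} \dd \omega \, \Hil_{\Maux}(\omega)
	= \int_{[0,\infty)}^{\oplus} \dd \omega \, \Hil(\omega)
	,
\end{align*}
which is precisely the Hilbert space \eqref{Maxwell_equations:eqn:dispersive_non_negative_frequency_Hilbert_space} for the constant weights $\widehat{W}(\omega) = W$. The unitarity of this identification together with the pointwise equality of scalar products established above gives the claimed Hilbert space equivalence.

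The principal obstacle is the rigorous identification $\Hil(\omega) = \Hil_{\Maux}(\omega)$: generalized eigenfunctions of $\Maux$ (e.g.\ plane waves in the homogeneous case, as mentioned in Section~\ref{Maxwell_equations:reduction}) need not themselves lie in $L^2_W(\R^3,\C^6)$, so the equality has to be understood at the level of fibers of the direct integral rather than as a set-theoretic equality of subspaces. Concretely, one must verify that the measurable family of fibers produced by the spectral theorem can be represented by the same collection of ``$\omega$-solutions'' used to define $\Hil(\omega)$ via $L^\infty \bigl ( \R , L^2(\R^3,\C^6) \bigr )$-trajectories; this is the only point where technical care is required, and it is handled by combining Stone's theorem with the uniqueness of spectral multiplicity decompositions.
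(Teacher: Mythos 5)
Your proposal is correct and follows essentially the same route as the paper: the scalar products coincide because $\tfrac{\dd}{\dd\omega}\bigl(\omega\,\widehat{W}(\omega)\bigr) = W$ for constant weights, and the fibers $\Hil(\omega)$ are identified with the spectral slivers of the selfadjoint operator $\Maux$ (the paper phrases this via the projection-valued measure $\dd 1_{\omega}(M)$ and the generalized eigenvalue relation $M\varphi_{\omega} = \omega\,\varphi_{\omega}$, which is the same direct-integral identification you invoke). Your closing caveat about generalized eigenfunctions not lying in $L^2_W(\R^3,\C^6)$ is a point the paper passes over silently, so if anything your write-up is slightly more careful than the original.
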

Because we can canonically identify these two Hilbert spaces, we also justify calling $\scpro{\, \cdot \,}{\, \cdot \,}_W$ energy scalar product and $\norm{\cdot}_W^2$ the energy norm: for frequency-\emph{in}dependent material weights, $\frac{\dd}{\dd \omega} \bigl ( \omega \, \widehat{W}(\omega) \bigr ) = W$ holds and $\norm{\Psi}_W^2 = \scpro{\Psi}{W \, \Psi}$ computes the energy content of the real electromagnetic field $(\mathbf{E},\mathbf{H})$ represented by $\Psi = Q (\mathbf{E},\mathbf{H})$. 
\section{Comparison to previous works and quantum mechanics} 
\label{comparison}
Let us close this Section by making a comparison to earlier works and quantum mechanics. For simplicity, let us repeat the list of all three ingredients: 
\begin{enumerate}[(S1)]
	\item \emph{States} are represented by \emph{complex fields of non-negative frequency}, and they form the Hilbert space $\Hil = \Jphys \oplus \mathcal{G}$ composed of transversal fields of positive frequency $\Jphys$ and longitudinal gradient fields $\mathcal{G}$. The representative $\Psi = Q (\mathbf{E},\mathbf{H})$ is obtained by projecting the real field onto the non-negative frequencies. 
	\item The \emph{Maxwell operator} $M = W^{-1} \, \Rot \big \vert_{\omega \geq 0}$ plays the role of the Hamiltonian in the \emph{Schrödinger-type equation}
	\begin{align*}
		\ii \, \partial_t \Psi(t) = M \Psi(t) - \ii \, W^{-1} \, J(t)
		, 
		&&
		\Psi(0) = Q (\mathbf{E}_0,\mathbf{H}_0)
		. 
	\end{align*}
	\item The \emph{real, physical fields} are recovered by taking the real part, $\bigl ( \mathbf{E}(t) , \mathbf{H}(t) \bigr ) = 2 \Re \Psi(t)$. 
\end{enumerate}
The range of validity of the Schrödinger formalism (as covered in Section~\ref{Schroedinger:formalism}) has the exact same range of validity as equations~\eqref{Maxwell_equations:eqn:approximate_Maxwell_equations}: they hold for approximately monochromatic waves composed of frequencies $\omega \approx \omega_0$ so that $\widehat{W}(\omega)$ does not change appreciably. And while for non-gyrotropic media, our somewhat elaborate construction is unnecessary, it is compulsory for media with $W \neq \overline{W}$ in order to obtain the correct and physically meaningful equations. Given the role that gyrotropic media play in novel applications such as topologically non-trivial photonic crystals \cite{Raghu_Haldane:quantum_Hall_effect_photonic_crystals:2008,Wang_et_al:edge_modes_photonic_crystal:2008}, we expect that the Schrödinger formalism for Maxwell's equation we have derived here will prove useful when investigating quantum-light analogies.

\subsection{Quantum–wave analogies revisited} 
\label{comparison:quantum_mechanics}
Historically speaking, the first quantum-wave analogies were actually wave-quantum analogies: the founding fathers of modern quantum mechanics relied on the similarity to waves to find their bearing and arrive at a consistent interpretation of quantum theory. Nowadays, these are often “read in reverse”, where quantum phenomena serve as inspiration for finding novel effects in classical waves. Two prime examples are Yablovnovitch's proposal of photonics crystals \cite{Yablonovitch:photonic_band_gap:1993} and Haldane's seminal insight that topological effects are not inherently quantum, but \emph{bona fide} wave effects \cite{Raghu_Haldane:quantum_Hall_effect_photonic_crystals:2008}. Both of these extremely influential works have ended up creating whole subfields in multiple communities. 

Most works on quantum-light or, more broadly, quantum-\emph{wave} analogies, \emph{focus on similarities in the dynamical equation}. For example, Rechtsman et al \cite{Rechtsman_Zeuner_et_al:photonic_topological_insulators:2013} based their prediction of the existence of topological edge modes in periodic waveguide arrays on the similarity between the paraxial wave equation and the Schrödinger equation; in their setup the fussili-like twist of the waveguides leads to a vector potential that enters the “paraxial Hamiltonian” the same way a magnetic vector potential would in a quantum Hamiltonian. That explains how the twist leads to the breaking of time-reversal symmetry, and ultimately, to the presence of unidirectional edge modes. 

Compared to quantum systems the Schrödinger form of Maxwell's equations has two important differences, namely that (1)~Maxwell-type operators $M = W^{-1} \, D$ have a \emph{product structure} instead of being a \emph{sum}, and (2)~if $\mathbf{J}(t) \neq 0$ a source term is present, something which has no quantum analog. The product structure prevents us from applying many techniques from perturbation theory, linear response theory or scattering theory, because they rely on the sum structure of the Hamiltonian $H = \frac{1}{2m} (-\ii \hbar \nabla)^2 + V$. New methods tailored to Maxwell-type operators need to be developed. 
\medskip

\noindent
Establishing quantum-wave analogies systematically often requires going beyond a comparison of just the dynamics, also the nature of \emph{observables} and \emph{additional information} on typical regimes and configurations play a significant role (see Table~\ref{comparison:table:comparison_classical_waves_quanmtum_mechanics}). Quantum observables are defined in terms of selfadjoint operators like position, momentum or spin, and as a matter of principle, the quantum wavefunction itself is not observable. The situation in electromagnetism is very different: not only are the fields themselves measurable quantities, observables are \emph{functionals of the fields}. To give one specific example where this enters in crucial steps of the analysis, in our work on the derivation of ray optics equations in photonic crystals \cite{DeNittis_Lein:ray_optics_photonic_crystals:2014} the specific form of observables was essential when we wanted to find out \emph{in what sense} and \emph{quantify how well} ray optics approximates electrodynamics. Because we borrowed a semiclassical technique from quantum mechanics, Weyl quantization, we were only able to establish a ray optics limit for \emph{quadratic} observables which can be written as a “quantum expectation value”. For such observables, the quantum-light analogy lines up nicely: also such electromagnetic observables come in pairs, a ray optics observable on phase space and a functional on the fields — just like classical momentum is associated to the momentum operator, its quantization. Nevertheless, not in all cases do we recover what is naïvely expected: the ray optics observable associated to local averages of the Poynting vector $\mathcal{P}(\mathbf{E},\mathbf{H}) = \tfrac{1}{2} \, \Re \bigl ( \overline{\mathbf{E}} \times \mathbf{H} \bigr )$ is \emph{not} $\dot{r}$, but of a more complicated form \cite[Section~3.3.2]{DeNittis_Lein:ray_optics_photonic_crystals:2014}. 
\begin{table}
	\newcolumntype{A}{>{\centering\arraybackslash\normalsize} m{2.3cm} }
	\newcolumntype{B}{>{\centering\arraybackslash\normalsize} m{3.8cm} }
	\newcolumntype{C}{>{\centering\arraybackslash\normalsize} m{3.8cm} }
	\renewcommand{\arraystretch}{1.15}
	\begin{center}
		\begin{tabular}{A | B | B}
			 & \emph{Classical Waves} & \emph{Quantum Mechanics} \\ \hline \hline 
			Generator of dynamics & $M = W^{-1} \, D$ \linebreak (product structure) & \eg $H = \frac{1}{2m} (- \ii \hbar \nabla)^2 + V$ \linebreak (sum structure) \\ \hline 
			Hilbert space & $\Hil_+ \subset L^2_W(\R^d,\C^n)$ \linebreak (weighted, $\omega \geq 0$ states) & $L^2(\R^d,\C^n)$ \linebreak (unweighted) \\ \hline 
			Wave function & $\R$-valued & $\C$-valued \\ \hline 
			Conserved quantity $\norm{\psi}^2$ & \eg field energy \linebreak in electromagnetism & probability \\ \hline 
			Observables & Functionals of the fields & Selfadjoint operators \\
		\end{tabular}
	\end{center}
	\caption{Comparison of quantum mechanics and Schrödinger formalism of classical fields}
	\label{comparison:table:comparison_classical_waves_quanmtum_mechanics}
\end{table}

Fundamental differences in the physics, the form of typical states and regimes, can result in vastly different phenomenology despite relatively similar mathematics. One of our motivations to study the ray optics limit was to investigate whether it is possible to link the “transverse conductivity” in topological photonic crystals to a Chern number using “semiclassical” arguments (we discuss this point in detail in \cite[Section~5.1]{DeNittis_Lein:ray_optics_photonic_crystals:2014}). But there are two fundamental obstacles to that: first of all, as we mentioned above the ray optics observable associated to the Poynting vector is not $\dot{r}$. But even ignoring that for the moment, a second important assumption is not fulfilled: unlike in solid state physics completely filled (frequency) bands in photonics make no physical sense. Instead, experimentalists typically excite wave packets with well-defined momentum and frequency (\eg with laser light). 

\subsection{Comparison to the literature} 
\label{comparison:literature}
While the general idea of exploiting similarities between quantum mechanics and Maxwell's equations is as old as modern quantum mechanics itself, there are several different approaches on how to go beyond heuristics and systematically compare classical and quantum waves. Broadly speaking, these approaches fall into one of three categories: (1)~via the second-order equations, (2)~in the sense of a generalized eigenvalue problem and (3)~using the first-order formalism akin to the one presented here.

\subsubsection{Second-order formalism} 
\label{comparison:literature:second_order_formalism}
An introduction to the second-order formalism can be found \eg in \cite[pp.~10–16]{Joannopoulos_Johnson_Winn_Meade:photonic_crystals:2008}: in the absence of charges and currents, $\pmb{\rho} = 0$ and $\mathbf{J}(t) = 0$ and coupling between electric and magnetic field ($\chi = 0$), squaring Maxwell's equations yields two \emph{seemingly} uncoupled equations, 
\begin{align}
	\partial_t^2 \Psi + M^2 \Psi 
	= \left (
	\begin{matrix}
		\bigl ( \partial_t^2 + M^2_{EE} \bigr ) \psi^E \\
		\bigl ( \partial_t^2 + M^2_{HH} \bigr ) \psi^H \\
	\end{matrix}
	\right )
	= 0
	, 
	\label{Schroedinger:eqn:second_order_Maxwell_equations}
\end{align}
one for the electric and one for the magnetic field; we emphasize that they only decouple at first glance, because second-order equations not only require $\Psi(0)$ as initial condition, but also $\partial_t \Psi(0)$, and the latter connects $\mathbf{E}$ to $\mathbf{H}$ and vice versa. One of the two operators, \eg the magnetic one 
\begin{align}
	M^2_{HH} \psi^H = \mu^{-1} \, \nabla \times \bigl ( \eps^{-1} \, \nabla \times \psi^H \bigr )
	, 
	\label{Schroedinger:eqn:second_order_Maxwell_equation}
\end{align}
is then regarded as the analog of the quantum Hamiltonian. In many media where $\mu = 1$ holds to good approximation and $\eps = \overline{\eps}$ real, $M^2_{HH}$ is selfadjoint (hermitian) on the unweighted Hilbert space $L^2(\R^3,\C^3)$. 

Certainly the second-order formalism is a perfectly valid and equivalent way of studying Maxwell's equations, but it is \emph{not} ideally suited to transfer methods from quantum mechanics. That is because many methods — explicitly or implicitly — rely on the Schrödinger equation being \emph{first} order in time. For instance, the selfadjointness of $M_{HH}^2$ means that $\e^{- \ii t M_{HH}^2}$ is well-defined, but it does \emph{not} solve the dynamical equation~\eqref{Schroedinger:eqn:second_order_Maxwell_equation}; hence, it is not apt to call $M^2_{HH}$ the analog of the quantum Hamiltonian. Therefore, standard techniques such Duhamel arguments, which allow us to make perturbation expansions of the evolution group, do not readily apply. 

Another example which will be explored in a future work \cite{DeNittis_Lein:symmetries_electromagnetism:2017} is the topological classification of photonic crystals, adapting the Cartan-Altland-Zirnbauer scheme \cite{Altland_Zirnbauer:superconductors_symmetries:1997,Schnyder_Ryu_Furusaki_Ludwig:classification_topological_insulators:2008,Chiu_Teo_Schnyder_Ryu:classification_topological_insulators:2016}. Here, the distinction between commuting and anticommuting symmetries is crucial for a proper classification. But we can no longer distinguish between operators $U$ which commute or anticommute with $M$, because in either case $U \, M^2 \, U^{-1} = (\pm 1)^2 \, M^2 = M^2$ holds. This was the source of confusion for calling complex conjugation $C$ a \emph{time-reversal} symmetry in \cite{Raghu_Haldane:quantum_Hall_effect_photonic_crystals:2008} rather than an even particle-hole-type symmetry; the actual time-reversal symmetry that is broken for gyrotropic media is 
\begin{align}
	T : (\psi^E,\psi^H) \mapsto \bigl ( \overline{\psi^E} , - \overline{\psi^H} \bigr )
	. 
	\label{comparison:eqn:time_reversal_symmetry}
\end{align}
Related to the topic of topological phenomena are Chern numbers: as we will show in \cite{DeNittis_Lein:symmetries_electromagnetism:2017}, the only topological invariants which play a role for topological photonic crystals are (first and second) Chern numbers. First Chern numbers can be computed from the Berry curvature, and mathematically speaking, we have three distinct choices: the “electromagnetic” Chern number is given in terms of the Berry curvature computed from electromagnetic Bloch functions $\varphi_n(k) = \bigl ( \varphi_n^E(k) , \varphi_n^H(k) \bigr )$. But alternatively, we may use $\varphi_n^E(k)$ and $\varphi_n^H(k)$ to define an “electric” and a “magnetic” Chern number. We see no obvious reason why they should be related, \eg why the sum of electric and magnetic Chern number should yield the electromagnetic Chern number. This is evidently a problem if we want to obtain photonic bulk-boundary correspondences. This is significant, because we have found a number of publications which compute Chern numbers from the electric field alone (\eg \cite{Wang_et_al:edge_modes_photonic_crystal:2008,Jacobs_et_al:topological_photonic_crystal_tellegen:2005}). For qualitative predictions whether or not unidirectional, backscattering-free edge modes exist, finding a non-zero electric Chern number, for example, may suffice, because that necessitates breaking time-reversal symmetry~\eqref{comparison:eqn:time_reversal_symmetry}. 

Nevertheless, we reckon that the second-order formalism is suited to investigate \emph{wave}-wave analogies and when adapting techniques initially developed for other wave equations. 
\medskip

\noindent
However, there is also a deeper, more subtle issue, which to our knowledge have not been addressed in the literature before. As equation~\eqref{Schroedinger:eqn:second_order_Maxwell_equation} stands, it only holds for \emph{non-gyrotropic media}. Here, $U : (\psi^E,\psi^H) \mapsto (\psi^E,-\psi^H)$ is an anticommuting, unitary symmetry of $\Maux$, 
\begin{align*}
	U \, \Maux \, U^{-1} &= - \Maux
	,
\end{align*}
that maps complex fields composed of non-negative frequencies onto fields composed solely of non-positive frequencies. Therefore, $\Hil^H = \bigl \{ \psi^H \; \; \vert \; \; (\psi^E , \psi^H) \in \Hil \bigr \}$, by which we mean the magnetic component of the non-negative frequency Hilbert space $\Hil$ defined in Section~\ref{Schroedinger:formalism:states} and endowed with the scalar product $\scpro{\phi^H}{\mu \psi^H}$, coincides with $L^2_{\mu}(\R^3,\C^3) = \Hil^H$. Put succinctly, we may start with \eqref{Schroedinger:eqn:second_order_Maxwell_equation} acting on all of $L^2_{\mu}(\R^3,\C^3)$. 
So even though squaring the auxiliary Maxwell operator $\Maux_+$ destroys the information on the sign, we are still able to pick a sign of $\pm \omega$ by choosing a sign of $\pm \partial_t \psi^H = \mp \mu^{-1} \nabla \times \psi^E$. Again, the electric and the magnetic equation couple via the initial condition $\partial_t \bigl ( \psi^E(0) , \psi^H(0) \bigr )$. Heuristically, this is consistent with the heuristic argument that $L^2_{\mu}(\R^3,\C^3)$ and $L^2(\R^3,\R^6)$ are of “the same dimension” if we identify $\C$ with $\R^2$. 

However, when the weights $(\eps , \mu) \neq ( \overline{\eps} , \overline{\mu} )$ have a non-vanishing imaginary part, we start from \emph{two} sets of Maxwell equations and evolve waves composed of frequencies $\pm \omega \geq 0$ with \emph{different} material weights; this has nothing to do with the formalism one chooses, but is merely a consequence of the real-valuedness of electromagnetic fields (see the discussion in Section~\ref{Maxwell_equations:heuristic_monochromatic_approximation}). The insight that the information contained in the negative frequency contribution is redundant of course also applies here, and we can study \eqref{Schroedinger:eqn:second_order_Maxwell_equation} as an equation on the Hilbert space $\Hil^H \subseteq L^2_{\mu}(\R^3,\C^3)$. 

This restriction to $\Hil^H$ is now \emph{more subtle}, and necessitates the use of the first-order formalism. Making the weights complex breaks the symmetry $U : (\psi^E,\psi^H) \mapsto (\psi^E,-\psi^H)$, and it is not automatic that $\Hil^H$ coincides with all of $L^2_{\mu}(\R^3,\C^3)$. Therefore, if one \emph{starts} with \eqref{Schroedinger:eqn:second_order_Maxwell_equation} (or its electric counterpart) on all of $L^2_{\mu}(\R^3,\C^3)$, then if $\Hil^H \neq L^2_{\mu}(\R^3,\C^3)$ it is no longer possible to discard unphysical solutions. In case of photonic cyrstals where the weights are periodic, this would lead to artificial band crossings (where at least one of the bands involved is due to an unphysical Bloch function). 

\subsubsection{Generalized eigenvalue problem} 
\label{comparison:literature:generalized_eigenvalue_problem}
Generalized eigenvalue problems are of the form $D \Psi = \omega \, W \Psi$; this corresponds to starting with \eqref{Maxwell_equations:eqn:approximate_Maxwell_equations:dynamics} and is the approach adopted in \cite[pp.~3]{Raghu_Haldane:quantum_Hall_effect_photonic_crystals:2008} and many other works \cite{Silveirinha:Z2_topological_index_continuous_photonic_systems:2016}. The main drawback is that this no longer gives a dynamical description of the system. 

Nevertheless, more general types of media can be treated in this framework, including those with dispersion and with zero modes where $W$ is no longer invertible. However, for lossless positive index materials, and that is the case typically considered in the theory of photonic crystals, we can invert $W$ and use the Schrödinger formalism instead. 

\subsubsection{First-order formalism} 
\label{comparison:literature:first_order_formalism}
The use of the Schrödinger formalism in the sense it is discussed here, where (S1)–(S3) from Section~\ref{Schroedinger:abstract:classical_waves} are spelled out explicitly, is surprisingly rare in the physics literature. While quite a few write down the Maxwell operator (see \eg \cite[equations~(7)–(11)]{Ruiz_Dodin:geometrical_optics_polarization_effects:2015} or \cite[equation~(39)]{Proskurin_Ovchinnikov_Nosov_Kishine:optical_chirality_gyrotropic_media:2016}), very few supply also the scalar product and discuss selfadjointness $M = M^{\ast_W}$ of the Maxwell operator. Most exceptions are due to more mathematically minded theoretical physicists and mathematical physicists (see \eg \cite{Babin_Figotin:nonlinear_Maxwell_1:2001,Babin_Figotin:nonlinear_Maxwell_4:2005,Tip:linear_absorptive_dielectrics:1998,Tip_Moroz_Combes:band_structure_absorptive_photonic_crystals:2000,DeNittis_Lein:symmetries_Maxwell:2014,Gangaraj_Silveirinha_Hanson:Berry_connection_Maxwell:2017}). 

Among this group, very, very few discuss how the real-valuedness of the physical fields can be squared with the unavoidable complexification of the equations. One notable exception are the works by Dodin \cite{Dodin:geometric_view_noneikonal_waves:2014} as well as Ruiz and Dodin \cite{Ruiz_Dodin:geometrical_optics_polarization_effects:2015,Ruiz_Dodin:lagrangian_geometrical_optics_vector_waves_spin_particles:2015}: they not only explain how to represent real fields with complex waves of positive frequencies in \cite[Sections~2.2–2.3]{Dodin:geometric_view_noneikonal_waves:2014}, but flesh out how to systematically exploit that many classical wave equations admit a Schrödinger formalism. Of particular interest to them are the derivation of ray optics equations in adiabatically perturbed homogeneous isotropic media as well as applications to plasma physics \cite{Ruiz:geometric_theory_waves_and_plasma_physics:2017}. While the present work covers much more general situations (including non-homogeneous, non-isotropic media with potentially complex weights) and gives a mathematically rigorous definition of Maxwell-type operators, the overarching strategies are very similar. 

We furthermore would like to highlight the contributions by Babin and Figotin to better understand \emph{non-linear} photonic crystals. In \cite{Babin_Figotin:nonlinear_Maxwell_4:2005} they derived the non-linear Schrödinger equation from Maxwell's equations for non-linear, \emph{non}-gyrotropic media via a suitable scaling limit. Here, the even “particle-hole”-type symmetry $C$ is responsible for the point symmetry of the band spectrum (\cf \cite[Figure~3]{Babin_Figotin:nonlinear_Maxwell_4:2005}), which in turn ensures that it is always possible to satisfy the so-called frequency matching condition. The condition $\Psi_- = C \Psi_+$ further allows them to obtain a \emph{scalar} non-linear Schrödinger equation rather than a \emph{pair} of coupled non-linear Schrödinger equations (\ie going from equations~(98)–(99) to equation~(108) in \cite{Babin_Figotin:nonlinear_Maxwell_4:2005}). While Babin and Figotin always assumed to work with non-gyrotropic media where $W = \overline{W}$ \cite[equation~(6)]{Babin_Figotin:nonlinear_Maxwell_4:2005}, we expect that all of their results straightforwardly extend to the non-gyrotropic case if the Schrödinger formalism developed here is employed. 

Noteworthy are also the works by by Tip et al \cite{Tip:linear_absorptive_dielectrics:1998,Tip_Moroz_Combes:band_structure_absorptive_photonic_crystals:2000} who develop a Schrödinger formalism for \emph{dispersive}, absorptive media. This goes beyond what we do here as we have neglected dispersion. To deal with dispersion, they address the concerns discussed at the end of Section~\ref{Maxwell_equations:heuristic_monochromatic_approximation} by choosing a specific initial trajectory, namely those that are $0$ for $t < t_0$ and “spring to life” at the initial time $t_0$; this . The second main idea is to add fictitious electromagnetic fields to the equations so that the total field energy balance is restored; the absorbed field energy is now used to excite these fictitious fields. 

What distinguishes our work from these earlier publications, including our own \cite{DeNittis_Lein:adiabatic_periodic_Maxwell_PsiDO:2013,DeNittis_Lein:sapt_photonic_crystals:2013,DeNittis_Lein:symmetries_Maxwell:2014}, is that ours gives a physically meaningful description also of gyrotropic materials where $W \neq \overline{W}$. 
\section{Obtaining Conserved Quantities in the Schrödinger Formalism} 
\label{conserved}
We close the discussion of Maxwell's equations with an application of the Schrödinger formalism. Perhaps surprisingly, the topic of conserved quantities is still an area of active research in electromagnetism (see \eg \cite{Bliokh_Bekshaev_Nori:dual_electromagnetism:2013,Proskurin_Ovchinnikov_Nosov_Kishine:optical_chirality_gyrotropic_media:2016} and references therein), in particular in media. The default strategy is nicely explained in \cite{Bliokh_Bekshaev_Nori:dual_electromagnetism:2013} for the \emph{in vacuo} equations: given that Maxwell's equations are \emph{the} relativistic equations for a massless spin-$1$ particle, their symmetry group is the Poincaré group. To each continuous symmetry there exists a conserved quantity, \eg the rotational symmetry leads to conservation of total angular momentum and “dual symmetry”
\begin{align}
	\left (
	\begin{matrix}
		\mathbf{E} \\
		\mathbf{H} \\
	\end{matrix}
	\right ) \mapsto \left (
	\begin{matrix}
		\cos \vartheta & - \sin \vartheta \\
		\sin \vartheta & \cos \vartheta \\
	\end{matrix}
	\right ) \left (
	\begin{matrix}
		\mathbf{E} \\
		\mathbf{H} \\
	\end{matrix}
	\right )
	\label{conserved:eqn:continuous_dual_symmetry}
\end{align}
ensures conservation of helicity. Media break some, perhaps all of these symmetries. Therefore, the question arises that \emph{given} material weights, how can we systematically find conserved quantities? The standard approach is to reformulate Maxwell's equation in the \emph{Lagrangian formalism} and to exploit \emph{Noether's theorem}. Even in vacuum, this approach has unexpected complications, something that Bliokh, Beshaev and Nori explain very elegantly in  \cite{Bliokh_Bekshaev_Nori:dual_electromagnetism:2013}. We will briefly summarize the main points of their discussion. First of all, the Lagrangian does not have all the symmetries of the Maxwell equations. The standard Lagrangian density $\mathcal{L}(\mathbf{E},\mathbf{H}) = \mathbf{E}^2 - \mathbf{H}^2$ breaks the dual symmetry~\eqref{conserved:eqn:continuous_dual_symmetry}, so identifying symmetries of the physical equations directly from the Lagrangian is harder. Moreover, the conserved quantities obtained from this Lagrangian lack a clear physical interpretation. 

Bliokh et al propose a dual symmetric Lagrangian formulation of electromagnetism where the Lagrangian density involves two $\R^6$-valued vector fields $F$ and $G$ as independent variables, and then impose the relation $F = (\mathbf{E} , \mathbf{H})$ and $G = (\mathbf{H} , -\mathbf{E})$ afterwards. This corresponds to complexifying the fields and then restricting to real solutions afterwards. The complexification is necessary, because the generator for the continuous transformation~\eqref{conserved:eqn:continuous_dual_symmetry} is $J = \sigma_2 \otimes \id$ and its eigenvectors, the helicity basis, are \emph{complex}. Consequently, the conserved quantities they derive inherit the evident dual symmetry of the Lagrangian density, are more symmetric under exchange of $\mathbf{E}$ and $\mathbf{H}$ and are more “consistent” (Figure~1 and the discussion in Sections~2–3 of \cite{Bliokh_Bekshaev_Nori:dual_electromagnetism:2013} make this precise). 
\medskip

\noindent
While Bliokh et al's paper conclusively answers the question of systematically finding physically meaningful conserved electromagnetic observables \emph{in vacuum}, we believe the Schrödinger formalism offers additional insight into the topic. Chiefly, we are immediately able to extend the discussion to non-homogeneous electromagnetic media where the material weights $\eps$, $\mu$ and $\chi$ may depend on position. While we have no doubt that it is possible to find a version of the dual symmetric formalism for media, this requires no extra effort in the Schrödinger formalism. Secondly, as we shall see below, identifying conserved quantities and associated currents becomes straightforward.

\subsection{The nature of electromagnetic observables} 
\label{conserved:observables}
Electromagnetic observables are generically functionals $\mathcal{F} : L^2(\R^3,\R^6) \longrightarrow \R$ of the fields, \ie functions associate real numbers to electromagnetic fields. The two most obvious examples are the energy density 
\begin{align*}
	\mathcal{E}_x(\mathbf{E},\mathbf{H}) = \bigl ( Q_+(\mathbf{E},\mathbf{H}) \bigr )(x) \cdot W(x) \bigl ( Q_+(\mathbf{E},\mathbf{H}) \bigr )(x)
\end{align*}
or the volume integrals of components of the Poynting vector 
\begin{align}
	\mathcal{P}_j(\mathbf{E},\mathbf{H}) &= \int_{\R^3} \dd x \, \mathcal{P}_{x,j}(\mathbf{E},\mathbf{H})
	= \int_{\R^3} \dd x \, \tfrac{1}{2} \Re \bigl ( \overline{\psi^E} \times \psi^H \bigr )_j 
	\label{conserved:eqn:Poynting_vector}
	\\
	&= \Bscpro{Q_+ (\mathbf{E},\mathbf{H}) \,}{\, W^{-1} \, \bigl ( - \ii \sigma_2 \otimes e_j^{\times} \bigr ) \, Q_+ (\mathbf{E},\mathbf{H})}_W 
	\notag
\end{align}
Here, $\Psi = (\psi^E,\psi^H) = Q_+(\mathbf{E},\mathbf{H})$ is the complex positive frequency component of the real field $(\mathbf{E},\mathbf{H})$, $e_j$ is one of the three canonical basis vectors $e_1 = (1,0,0)$, $e_2 = (0,1,0)$ or $e_3 = (0,0,1)$ and $e_j^{\times} \mathbf{E} = e_j \times \mathbf{E}$ is a convenient notation to associate a matrix $e_j^{\times}$ to the vector $e_j$. We will list a few more observables in Section~\ref{conserved:other_observables} below. It is worth mentioning that the fields themselves are electromagnetic observables. This is \emph{fundamentally different} from quantum mechanics where observables are represented as selfadjoint operators on the relevant Hilbert space. 

Nevertheless, many interesting observables such as the Poynting vector, total angular momentum and helicity can be written as “quantum expectation values”, \ie $\mathcal{F}$ is defined in terms of a selfadjoint operator $F = F^{\ast_W}$ with 
\begin{align}
	\mathcal{F}(\mathbf{E},\mathbf{H}) = \bscpro{Q_+ (\mathbf{E},\mathbf{H}) \,}{\, F \, Q_+ (\mathbf{E},\mathbf{H})}_W
	. 
	\label{conserved:eqn:quadratic_observable}
\end{align}
We emphasize that the similarity to quantum expectation values is purely computational, and we are not ascribing some “quantum interpretation” to these electromagnetic observables. For such observables we were able to derive ray optics equations in photonic crystals where $W$ is periodic using semiclassical techniques \cite{DeNittis_Lein:ray_optics_photonic_crystals:2014}. Fortunately, the fundamental conserved quantities for Maxwell's equations are indeed of this form (see \eg \cite[equations~(3.40')–(3.45')]{Bliokh_Bekshaev_Nori:dual_electromagnetism:2013} for a list). 

\subsection{The conservation criterion for quadratic electromagnetic observables} 
\label{conservation:criterion}
A quantity $\mathcal{F}$ is conserved if 
\begin{align}
	\mathcal{F} \bigl ( \mathbf{E}(t) , \mathbf{H}(t) \bigr ) = \mathcal{F} \bigl ( \mathbf{E}(0) , \mathbf{H}(0) \bigr ) 
	\label{conserved:eqn:definition_conserved_quantity}
\end{align}
does not depend on time and the choice of initial state. 
\begin{proposition}[Conserved quantity]
	Suppose $\mathcal{F}$ is a quadratic electromagnetic observable of the form~\eqref{conserved:eqn:quadratic_observable} associated to some bounded selfadjoint operator $F = F^{\ast_W}$ on $\Hil_+$. Then $\mathcal{F}$ is a conserved quantity for all electromagnetic fields $(\mathbf{E},\mathbf{H}) \in L^2(\R^3,\R^6)$ if and only if $\bigl [ F \, , \, \e^{- \ii t M_+} \bigr ] = 0$ holds for all times.
\end{proposition}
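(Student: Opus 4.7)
The argument rests on the dynamical representation provided by Theorem~\ref{Schroedinger:thm:equivalence_frameworks}: in the absence of sources one has $\Psi(t) = U(t) \, \Psi(0)$ with $U(t) := \e^{- \ii t M_+}$ and $\Psi(t) = Q \bigl ( \mathbf{E}(t) , \mathbf{H}(t) \bigr )$, and the selfadjointness $M_+ = M_+^{\ast_W}$ makes $U(t)$ unitary with respect to $\sscpro{\cdot}{\cdot}_W$. The sufficient direction is then a direct calculation: substituting this representation into \eqref{conserved:eqn:quadratic_observable}, commuting $F$ past $U(t)$, and invoking unitarity give
\begin{align*}
	\mathcal{F} \bigl ( \mathbf{E}(t) , \mathbf{H}(t) \bigr )
	= \bscpro{U(t) \Psi(0)}{U(t) \, F \, \Psi(0)}_W
	= \bscpro{\Psi(0)}{F \, \Psi(0)}_W
	= \mathcal{F} \bigl ( \mathbf{E}(0) , \mathbf{H}(0) \bigr ).
\end{align*}

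For the converse, fix $t \in \R$ and introduce the bounded $\sscpro{\cdot}{\cdot}_W$-selfadjoint operator $G(t) := U(t)^{\ast_W} F \, U(t) - F$. Rearranged, the conservation hypothesis is precisely $\sscpro{\Psi_0}{G(t) \Psi_0}_W = 0$ for every $\Psi_0 \in \mathcal{R} := Q \bigl [ L^2(\R^3 , \R^6) \bigr ]$. The polarization identity promotes this vanishing of the quadratic form to vanishing of the sesquilinear form $\sscpro{\Phi}{G(t) \Psi}_W$ on the \emph{complex} linear span $\mathcal{R}_{\C} := \mathcal{R} + \ii \, \mathcal{R}$, and by boundedness of $G(t)$ on its closure in $\Hil$. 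If $\mathcal{R}_{\C}$ is dense in $\Hil$ we conclude $G(t) = 0$, \ie $F \, U(t) = U(t) \, F$; since $t$ is arbitrary, $\bigl [ F , U(t) \bigr ] = 0$ for all times.

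The main obstacle is therefore verifying the \emph{complex density} of $\mathcal{R}$ in $\Hil$, and this is the only substantive content beyond routine bookkeeping. Using the orthogonal splitting $\Hil = \Jphys \oplus \mathcal{G}$, one treats the transversal and longitudinal pieces separately. In the non-gyrotropic case, for $\Psi_{\perp} \in \Jphys$ the field $2 \Re \Psi_{\perp}$ is real and still transversal, and $P_+ \overline{\Psi_{\perp}} = 0$ (complex conjugation swaps the sign of the spectrum of $\Maux$) together with $P_0 \Psi_{\perp} = 0$ yields $Q \bigl ( 2 \Re \Psi_{\perp} \bigr ) = \Psi_{\perp} \in \mathcal{R}$. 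For $\Psi_{\parallel} \in \mathcal{G}$ both $\Re \Psi_{\parallel}$ and $\Im \Psi_{\parallel}$ are real gradient fields in $\mathcal{G}$; since $P_+$ vanishes on $\mathcal{G}$ and $P_0$ acts as the identity there, one finds $Q \bigl ( 2 \Re \Psi_{\parallel} \bigr ) = \Re \Psi_{\parallel}$ and similarly for the imaginary part, so $\Psi_{\parallel} = Q \bigl ( 2 \Re \Psi_{\parallel} \bigr ) + \ii \, Q \bigl ( 2 \Im \Psi_{\parallel} \bigr ) \in \mathcal{R}_{\C}$. Summing the two contributions realises any $\Psi \in \Hil$ as an element of $\mathcal{R}_{\C}$. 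For gyrotropic media the same skeleton applies, but the interplay of $P_+$, $P_0$ with complex conjugation is more delicate, reflecting exactly the subtlety flagged after Proposition~\ref{Schroedinger:prop:identification_real_complex_vector_spaces}, whose one-to-one correspondence between real fields and their complex representatives is the essential input required to close the density argument.
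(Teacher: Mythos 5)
Your sufficiency direction coincides exactly with the paper's: commute $F$ past the evolution group and use the $\sscpro{\, \cdot \,}{\, \cdot \,}_W$-unitarity of $\e^{- \ii t M_+}$. For the converse the paper offers only a one-line assertion, so you are right that the passage from ``conserved on all real fields'' to the operator identity is the substantive step — but your polarization argument has a genuine gap there. The hypothesis gives $\sscpro{\Psi_0}{G(t) \Psi_0}_W = 0$ only for $\Psi_0$ in the \emph{real-linear} set $\mathcal{R} = Q \bigl [ L^2(\R^3,\R^6) \bigr ]$. The polarization identity recovers $\sscpro{\Phi}{G(t) \Psi}_W$ from the quadratic form evaluated at $\Psi \pm \Phi$ \emph{and} at $\Psi \pm \ii \Phi$; the latter vectors leave $\mathcal{R}$, so vanishing on $\mathcal{R}$ yields only $\Re \sscpro{\Phi}{G(t) \Psi}_W = 0$. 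Density of $\mathcal{R} + \ii \, \mathcal{R}$ cannot rescue this: a selfadjoint operator of the form $\ii T$ with $T$ real and $W$-antisymmetric (the $2 \times 2$ model is $\sigma_2$, whose quadratic form vanishes on every real vector of $\C^2$ even though $\R^2 + \ii \R^2 = \C^2$) shows that a quadratic form can vanish identically on a real-linear subspace with dense complex span without the operator being zero.

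The argument can be repaired, but only by invoking structure you did not use. On the transversal block, $\Jphys = P_+ \bigl [ L^2(\R^3,\R^6) \bigr ]$ is (in the non-gyrotropic case) a genuinely \emph{complex} subspace contained in $\mathcal{R}$ — your own identity $Q \bigl ( 2 \Re \Psi_{\perp} \bigr ) = \Psi_{\perp}$, applied to $\ii \Psi_{\perp} \in \Jphys$, shows $\ii \Psi_{\perp} \in \mathcal{R}$ — so polarization is legitimate there. The cross blocks between $\Jphys$ and $\mathcal{G}$ vanish because one may replace $\Psi_{\perp}$ by $\ii \Psi_{\perp}$ in the mixed quadratic form and thereby capture both real and imaginary parts. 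And the longitudinal block needs no argument at all: since $\e^{- \ii t M_+}$ restricts to the identity on $\mathcal{G} = \ker M_+$, one has $P_0 \, G(t) \, P_0 = P_0 F P_0 - P_0 F P_0 = 0$ identically — which is fortunate, because this is precisely the block where the real-subspace obstruction would otherwise bite. In the gyrotropic case even the inclusion $\Jphys \subseteq \mathcal{R}$ hinges on whether $P_+ \overline{\Psi_{\perp}} = 0$, i.e.\ on the unresolved orthogonality $P_+ P_- = 0$ flagged after Proposition~\ref{Schroedinger:prop:identification_real_complex_vector_spaces}, so your closing caveat is well placed but understates how much remains open there.
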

The assumption that $F$ be bounded is non-essential and meant to avoid a non-essential, technical discussion on domains in the proof. In the same vein, the condition $\bigl [ F , \e^{- \ii t M_+} \bigr ] = 0$ can in practice be replaced by $[ F , M_+ ] = 0$ or $[ F , \Maux ] = 0$. 
\begin{proof}
	Assume $\bigl [ F \, , \, \e^{- \ii t M_+} \bigr ] = 0$ for all $t \in \R$. Hence, we can commute the time evolution operator with $F$ and exploit the $\scpro{\, \cdot \,}{\, \cdot \,}_W$-unitarity, 
	\begin{align*}
		\mathcal{F} \bigl ( \mathbf{E}(t) , \mathbf{H}(t) \bigr ) &= \Bscpro{\e^{- \ii t M_+} \, Q_+ \bigl ( \mathbf{E}(0) , \mathbf{H}(0) \bigr ) \,}{\, F \, \e^{- \ii t M_+} \, Q_+ \bigl ( \mathbf{E}(0) , \mathbf{H}(0) \bigr )}_W
		\\
		&
		= \Bscpro{\e^{- \ii t M_+} \, Q_+ \bigl ( \mathbf{E}(0) , \mathbf{H}(0) \bigr ) \,}{\, \e^{- \ii t M_+} \, F \, Q_+ \bigl ( \mathbf{E}(0) , \mathbf{H}(0) \bigr )}_W
		\\
		&
		= \Bscpro{Q_+ \bigl ( \mathbf{E}(0) , \mathbf{H}(0) \bigr ) \,}{\, F \, Q_+ \bigl ( \mathbf{E}(0) , \mathbf{H}(0) \bigr )}_W
		= \mathcal{F} \bigl ( \mathbf{E}(0) , \mathbf{H}(0) \bigr ) 
		. 
	\end{align*}
	Conversely, suppose $\mathcal{F}$ satisfies \eqref{conserved:eqn:definition_conserved_quantity} for all fields. This translates to the condition 
	\begin{align*}
		\e^{+ \ii t M_+} \, F \, \e^{- \ii t M_+} = F 
	\end{align*}
	for all $t \in \R$, \ie $F$ commutes with the evolution group. 
\end{proof}
One could also think to consider observables which are only conserved for a certain class of states, \eg electromagnetic fields with a particular symmetries. Technically speaking, these are still of the form \eqref{conserved:eqn:definition_conserved_quantity}, although $F = \Pi \, G \, \Pi$ is now flanked on both sides by a projection $\Pi$ onto the subspace spanned by the high symmetry states. 

\subsection{Conservation of helicity in media} 
\label{conserved:helicity}
Let us illustrate our ideas with helicity so that we can contrast and compare with a recent preprint on that subject \cite{Proskurin_Ovchinnikov_Nosov_Kishine:optical_chirality_gyrotropic_media:2016}. Suppose we are given a medium described by 
$W(x) = \left ( 
\begin{smallmatrix}
	\eps(x) & \chi(x) \\
	\chi(x)^* & \mu(x) \\
\end{smallmatrix}
\right )$ 
that is dual symmetric. That means $J = \sigma_2 \otimes \id$ has to commute with $M_+ = W \, \Rot \, \big \vert_{\omega \geq 0}$. Evidently, $J$ commutes with $\Rot = - \sigma_2 \otimes \nabla^{\times}$ and we need to check whether $J$ and $W$ commute. So writing the material weights 
\begin{align*}
	W = \sum_{j = 0}^3 \sigma_j \otimes w_j 
\end{align*}
as the sum of $\sigma_0 = \id$ and the Pauli matrices, we immediately find that $[W , J] = 0$ translates to the vanishing of $w_1 = 0 = w_3$. Such material weights are of the form 
\begin{align*}
	W(x) = \id \otimes \eps(x) + \sigma_2 \otimes \chi(x) 
	= \left (
	\begin{matrix}
		\eps(x) & - \ii \, \chi(x) \\
		+ \ii \, \chi(x) & \eps(x) \\
	\end{matrix}
	\right )
\end{align*}
where $\chi(x)^* = \chi(x)$ takes values in the hermitian matrices. 

For such media, helicity, also known as the \emph{Lipkin zilch} [cite Lipkin's paper] 
\begin{align*}
	\mathcal{C} \bigl ( \mathbf{E}(t) , \mathbf{H}(t) \bigr ) &= \frac{1}{2} \int_{\R^3} \dd x \, \bigl ( \mathbf{B}(t,x) \cdot \partial_t \mathbf{D}(t,x) - \mathbf{D}(t,x) \cdot \partial_t \mathbf{B}(t,x) \bigr )
\end{align*}
is conserved (it was initially referred to as zilch because its physical interpretation was initially mysterious); note that unlike many other authors, complex conjugation is implicitly contained in the dot product $\mathbf{E}(x) \cdot \mathbf{H}(x) = \sum_{j = 1}^3 \overline{E_j(x)} \, H_j(x)$. To rewrite it in the Schrödinger formalism in the form~\eqref{conserved:eqn:quadratic_observable}, we use the constitutive relations~\eqref{Maxwell_equations:eqn:linear_nondispersive_constitive_relations} to express the auxiliary fields $(\mathbf{D},\mathbf{B})$ in terms of $(\mathbf{E},\mathbf{H})$, remember that positive and negative frequency contributions evolve according to different equations and note that this difference of fields can be written as $\ii J = \ii \sigma_2 \otimes \id$: 
\begin{align}
	\mathcal{C} \bigl ( \mathbf{E}(t) , \mathbf{H}(t) \bigr ) &= \Bscpro{W \e^{- \ii t M_+} Q_+ \bigl ( \mathbf{E}(0) , \mathbf{H}(0) \bigr ) \, }{\, J \, M_+ \, \e^{- \ii t M_+} Q_+ \bigl ( \mathbf{E}(0) , \mathbf{H}(0) \bigr )}
	\notag \\
	&= \Bscpro{\e^{- \ii t M_+} Q_+ \bigl ( \mathbf{E}(0) , \mathbf{H}(0) \bigr ) \,}{\, J \, M_+ \, \e^{- \ii t M_+} Q_+ \bigl ( \mathbf{E}(0) , \mathbf{H}(0) \bigr )}_W 
	\label{conserved:eqn:modern_form_helicity}
\end{align}
The factor of $\nicefrac{1}{2}$ has disappeared, because the negative frequency contribution, which we have omitted due to symmetry, contributes the same amount. Since $[J , M_+] = 0$ the helicity is in fact a conserved quantity, $\mathcal{C} \bigl ( \mathbf{E}(t) , \mathbf{H}(t) \bigr ) = \mathcal{C} \bigl ( \mathbf{E}(0) , \mathbf{H}(0) \bigr )$. 

Proskurin et al have found expressions very similar to \eqref{conserved:eqn:modern_form_helicity}, namely equations~(24)–(26) in \cite{Proskurin_Ovchinnikov_Nosov_Kishine:optical_chirality_gyrotropic_media:2016} for vacuum and equation~(62). Apart from the non-essential difference that they compute these electromagnetic observables from the Fourier transformed fields, they miss the important fact that the negative frequency fields evolve according with the complex conjugate weights $\overline{W} = \id \otimes \overline{\eps} - \sigma_2 \otimes \overline{\chi}$. Lastly, we note that generalizations akin to \cite[equation~(27)]{Proskurin_Ovchinnikov_Nosov_Kishine:optical_chirality_gyrotropic_media:2016} are straight-forward: higher-order time derivatives become factors of $\bigl ( - \ii M_+ \bigr )^n$. 

\subsection{Other observables} 
\label{conserved:other_observables}
Helicity is not the only observable that has seen a lot of attention up until now. A second one is, perhaps surprisingly to the uninitiated, the discussion on what the physical momentum observable in electromagnetism is. There are two separate issues: the first one is the so-called Abraham-Minkowski controversy (see \eg \cite{Pfeifer_Nieminen_et_al:momentum_electromagnetic_wave:2007,Silveirinha:Abraham_Minkowski_revisited:2017}), which we will not get into, and circles whether the momentum vector should be defined in terms of $(\mathbf{E},\mathbf{H})$ (which yields the Poynting vector) or $(\mathbf{D},\mathbf{B})$ (resulting in the Minkowski vector). We will discuss the second one, though: the Poynting vector $\mathcal{P}_x$ arises from deriving the local energy density in time so as to satisfy the local energy conservation law 
\begin{align}
	\partial_t \mathcal{E}_x \bigl ( \mathbf{E}(t),\mathbf{H}(t) \bigr ) + \nabla \cdot \mathcal{P}_x \bigl ( \mathbf{E}(t) , \mathbf{H}(t) \bigr ) = 0 
	, 
	\label{conserved:eqn:local_energy_conservation}
\end{align}
and even in the present of position-dependent weights, we nevertheless obtain the usual form of $\mathcal{P}_x(\mathbf{E},\mathbf{H})$ as defined in equation~\eqref{conserved:eqn:Poynting_vector}. 

While it is very suggestive to call the Poynting vector the momentum vector, we have a “gauge freedom” here: if all we require for a momentum density to satisfy the local energy conservation law~\eqref{conserved:eqn:local_energy_conservation}, we can replace the Poynting vector $\mathcal{P}_x$ with 
\begin{align*}
	\mathcal{P}_x'(\mathbf{E},\mathbf{H}) = \mathcal{P}_x(\mathbf{E},\mathbf{H}) + \nabla \times \mathcal{F}(\mathbf{E},\mathbf{H})
	, 
\end{align*}
which differs from the Poynting vector by a divergence-free quantity. One physically relevant question is how to split the Poynting vector 
\begin{align*}
	\mathcal{P}_x(\mathbf{E},\mathbf{H}) = \mathcal{P}_{O,x}(\mathbf{E},\mathbf{H}) + \mathcal{P}_{S,x}(\mathbf{E},\mathbf{H}) 
\end{align*}
into an orbital and a spin component, that then allows us to define orbital and spin angular momentum. In vacuum, the orbital and the spin parts for waves of frequency $\omega$ \cite[Section~3.3]{Bliokh_Bekshaev_Nori:dual_electromagnetism:2013} are given by 
\begin{align*}
	\mathcal{P}_{O,x}(\mathbf{E},\mathbf{H}) &= \Psi(x) \cdot (- \ii \nabla \Psi)(x)
	\\
	\mathcal{P}_{S,j,x}(\mathbf{E},\mathbf{H}) &= \Psi(x) \cdot (\nabla \times S_j \Psi)(x)
\end{align*}
where the spin part is defined in terms of the three spin matrices whose entries
\begin{align*}
	(S_j)_{nk} = - \ii \epsilon_{jnk}
\end{align*}
are $-\ii$ times the Ricci tensor $\epsilon_{jnk}$. Note that due to the absence of sources, $\nabla \cdot \mathbf{E} = 0$ and $\nabla \cdot \mathbf{H} = 0$, the spin contribution is divergence free. 

The Schrödinger formalism again allows for a straight-forward generalization of the work of Bliokh et al \cite[Section~3.2.2 and 3.3]{Bliokh_Bekshaev_Nori:dual_electromagnetism:2013} for the \emph{in vacuo} equations to homogeneous media. So suppose $W$ is independent of $x$ and there are no sources, \ie $\Div \, W Q_+(\mathbf{E},\mathbf{H}) = 0$ holds. Then evidently, the weights and the Maxwell operator commute with the generator of translations $P = - \ii \nabla$, \ie $[W , P] = 0$ and $[M_+ , P] = 0$. Note that $P = - \ii \nabla$ should not be interpreted as an electromagnetic “momentum operator”, although it does give rise to a conserved quantity, 
\begin{align}
	\mathcal{P}_O(\mathbf{E},\mathbf{H}) &= \bscpro{Q_+(\mathbf{E},\mathbf{H}) \,}{\, P \, Q_+(\mathbf{E},\mathbf{H})}_W
	\label{conserved:eqn:orbital_momentum}
	\\
	&= \int_{\R^3} \dd x \, \mathcal{P}_{O,x}(\mathbf{E},\mathbf{H})
	= \int_{\R^3} \dd x \, \bigl ( Q_+(\mathbf{E},\mathbf{H}) \bigr )(x) \cdot W \bigl ( - \ii \nabla Q_+(\mathbf{E},\mathbf{H}) \bigr )(x)
	\notag 
	.  
\end{align}
In spirit this expression is a generalization of \cite[equation~(3.32)]{Bliokh_Bekshaev_Nori:dual_electromagnetism:2013} to homogeneous media. And while mathematics tells us $\mathcal{P}_O \bigl ( \mathbf{E}(t) , \mathbf{H}(t) \bigr ) = \mathcal{P}_O \bigl ( \mathbf{E}(0) , \mathbf{H}(0) \bigr )$ is conserved, it is not yet clear whether it is legitimate to interpret this as the orbital contribution to momentum. That is because we do not know whether the integrand of \eqref{conserved:eqn:orbital_momentum}, $\mathcal{P}_{O,x}$, satisfies local energy conservation~\eqref{conserved:eqn:local_energy_conservation}. Equivalently, we would have to verify whether the candidate for the spin density $\mathcal{P}_x - \mathcal{P}_{O,x}$ is divergence free. Experiments might shed light on the issue: depending on the experimental setup, it is possible to measure the \emph{orbital} contribution $\mathcal{P}_{x,O}$ instead of the momentum density $\mathcal{P}_x$, because the radiation force for a particle with an electric \emph{and} a magnetic dipole is proportional to $\mathcal{P}_{x,O}$ rather than $\mathcal{P}_x$ (\cf discussion in \cite[Section~5]{Bliokh_Bekshaev_Nori:dual_electromagnetism:2013} and references therein). We shall not tug further on this thread here and intend to revisit this question on the future. 

There are a number of other quadratic observables of the same mold such as \emph{spin} and \emph{helicity density} (see \eg \cite[Section~3.3]{Bliokh_Bekshaev_Nori:dual_electromagnetism:2013} and \cite[Section~II]{Proskurin_Ovchinnikov_Nosov_Kishine:optical_chirality_gyrotropic_media:2016}), 
\begin{align*}
	\mathcal{S}_{x,j}(\mathbf{E},\mathbf{H}) &= \Psi(x) \cdot \bigl ( \id \otimes S_j \bigr ) \Psi(x) 
	, 
	\\
	\mathcal{H}_x(\mathbf{E},\mathbf{H}) &= \sum_{j = 1}^3 \Psi(x) \cdot \Bigl ( \tfrac{P_j}{\abs{P}} \, (\id \otimes S_j) \Bigr ) \Psi(x) 
	. 
\end{align*}
%

\subsection{Going beyond continuous symmetries} 
\label{conserved:beyond_continuous_symmetries}
Not all symmetries of physical systems are continuous, \eg periodic systems have discrete symmetries, and these are beyond the reach of Noether's Theorem. This is a second advantage the Schrödinger formalism: Assume $W$ describes a photonic crystal where the weights are periodic. Then the above is no longer a conserved quantity, but it is compatible with the Bloch-Floquet transform in that we can consider the contributions to the orbital current density for fixed Bloch momentum. 

The orbital momentum could then be studied in the ray optics approximation scheme of \cite{DeNittis_Lein:ray_optics_photonic_crystals:2014}. In the parlance of \cite[Definition~3.4]{DeNittis_Lein:ray_optics_photonic_crystals:2014} the orbital momentum $\mathcal{P}_{O,x}$ is a \emph{scalar} observable, meaning it does not mix different components of the electromagnetic field. For scalar observables like the orbital momentum (and unlike for the Poynting vector!), the Berry curvature explicitly enters the ray optics equations (\cf \cite[Theorem~3.7~(i)]{DeNittis_Lein:ray_optics_photonic_crystals:2014}). Consequently, it seems as if the \emph{orbital} momentum which behaves analogously to the current in quantum condensed matter systems, a link which may prove useful when studying the Quantum Hall Effect of Light \cite{Raghu_Haldane:quantum_Hall_effect_photonic_crystals:2008}. 
\section{Schrödinger formalism for acoustic waves} 
\label{other_waves}
The Schrödinger formalism developed in the preceding sections for classical electromagnetism applies also to many other types of classical waves, because they share the same essential characteristics: they are (i)~linear, (ii)~first-order in time and (iii)~the classical fields are real-valued. While we are not the first to point this out (see \eg \cite[Appendix]{Wilcox:scattering_theory_classical_physics:1966} or \cite{Reed_Simon:scattering_theory_wave_equations:1977}), these works do not take the reality of waves into account. Other works (\eg \cite{Dodin:geometric_view_noneikonal_waves:2014,Ruiz:geometric_theory_waves_and_plasma_physics:2017}) lack mathematical rigor and impose additional, unnecessary restrictions on the weights. We first distill down our ideas in the abstract and then apply them to linearized magnetohydrodynamics, Alvfén waves and transverse acoustic waves.

\subsection{The abstract mathematical formalism} 
\label{other_waves:abstract}
For these examples, we need to slightly widen the scope as compared to Section~\ref{Schroedinger:abstract:classical_waves} and generalize the notion of 
\begin{definition}[Maxwell-type operator]\label{other_waves:defn:Maxwell_type_operator}
	A (generalized) Maxwell-type operator 
	\begin{align}
		M := W_L \, D \, W_R
		\label{other_waves:eqn:generalized_acoustic_operator}
	\end{align}
	is a linear operator with product structure on a complex Hilbert space $\Hil$ that has the following properties: 
	\begin{enumerate}[(a)]
		\item $D$ is a (possibly unbounded) selfadjoint operator on $\Hil$ with domain $\mathcal{D}_0$. 
		\item $W_L , W_R \in \mathcal{B}(\Hil)$ are bounded, selfadjoint, commuting operators with bounded inverses, \ie $[W_L , W_R] = 0$ and $W_L^{-1} , W_R^{-1} \in \mathcal{B}(\Hil)$. 
		\item The product $W_R \, W_L^{-1}$ is selfadjoint and bounded away from $0$, \ie there exists a constant $c > 0$ so that $W_R \, W_L^{-1} > c \, \id_{\Hil}$. 
		\item $M$ is endowed with the domain $\mathcal{D}_R := W_R^{-1} \, \mathcal{D}_0$. 
		\item $M$ anticommutes with complex conjugation $(C \Psi)(x) := \overline{\Psi(x)}$, \ie $C \, \mathcal{D}_R = \mathcal{D}_R$ is left invariant and $C \, M \, C = - M$ holds. 
	\end{enumerate}
\end{definition}
We will show that operators of this type fit exactly into the same framework we have developed previously. Note that several authors prefer to study the (electromagnetic) Maxwell operator $M = W^{- \nicefrac{1}{2}} \, \Rot \, W^{- \nicefrac{1}{2}}$ (see \eg \cite[Section~III]{Tip:linear_absorptive_dielectrics:1998} or \cite[Section~II.A.]{Gangaraj_Silveirinha_Hanson:Berry_connection_Maxwell:2017}), because $W_L = W_R$ means the operator $M$ is selfadjoint (hermitian) with respect to the usual, \emph{un}weighted scalar product (see Section~\ref{other_waves:abstract:change_of_representation} below).

\subsubsection{Endowing $\Hil$ with a suitably weighted scalar product and selfadjointness of $M$} 
\label{other_waves:abstract:selfadjointness}
The commutativity of $W_L$ and $W_R$ also implies $\bigl [ W_L^{\pm 1} , W_R^{\pm 1} \bigr ] = 0$ for all of the other three other sign combinations. Therefore, we can exchange the order of the two factors in the weight operator 
\begin{align*}
	W := W_L^{-1} \, W_R 
	= W_R \, W_L^{-1} 
\end{align*}
which enters the definition of a \emph{weighted} scalar product
\begin{align}
	\bscpro{\Phi}{\Psi}_W := \bscpro{\Phi}{W \, \Psi}_{\Hil} 
	. 
	\label{other_waves:eqn:weighted_scalar_product}
\end{align}
Thanks to the conditions imposed on $W_L$ and $W_R$ in Definition~\ref{other_waves:defn:Maxwell_type_operator}~(b)–(c), this sesquilinear map is strictly positive definite, and therefore indeed defines a scalar product. To distinguish $\Hil$ with its standard scalar product, we will write $\Hil_W$ for the Banach space $\Hil$ endowed with $\scpro{\, \cdot \,}{\, \cdot \,}_W$. 

Since the norm $\snorm{\Psi}_W := \scpro{\Psi}{\Psi}_W^{\nicefrac{1}{2}}$ of $\Hil_W$ is equivalent to the original norm $\Hil$ came with, 
\begin{align*}
	\norm{W^{- \nicefrac{1}{2}}}_{\mathcal{B}(\Hil)} \, \snorm{\psi}_W \leq \snorm{\psi} \leq \norm{W^{+ \nicefrac{1}{2}}}_{\mathcal{B}(\Hil)} \, \snorm{\psi}_W
	,
\end{align*}
$\Hil$ and $\Hil_W$ agree as Banach spaces, and we can think of both as the same vector spaces with differing notions of orthogonality. Consequently, we may consider operators to be acting on either space and properties such as boundedness or closedness transfer immediately. The advantage of changing scalar product comes from the fact that $M$ is in fact selfadjoint with respect to $\scpro{\, \cdot \,}{\, \cdot \,}_W$ (but \emph{not} the original scalar product $\scpro{\, \cdot \,}{\, \cdot \,}_{\Hil}$). 
\begin{proposition}\label{other_waves:prop:selfadjointness_Maxwell_type_operator}
	The generalized acoustic operator $M = W_L \, D \, W_R$ turns out to be selfadjoint on $\Hil_W$ with dense domain $\mathcal{D}_R := W_R^{-1} \, \mathcal{D}_0$.
\end{proposition}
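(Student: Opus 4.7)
The approach is to verify selfadjointness of $M = W_L \, D \, W_R$ on $\Hil_W$ (endowed with $\scpro{\, \cdot \,}{\, \cdot \,}_W$) directly from the definition: density of the domain, symmetry, and the nontrivial inclusion $\mathrm{dom}(M^{\ast_W}) \subseteq \mathcal{D}_R$. Density of $\mathcal{D}_R = W_R^{-1} \mathcal{D}_0$ in $\Hil_W$ is immediate, since $W_R^{-1}$ is a bounded bijection of the underlying vector space $\Hil$ whose norm is equivalent to $\snorm{\, \cdot \,}_W$, and $\mathcal{D}_0$ is already dense in $\Hil$ by selfadjointness of $D$. For symmetry, the key observation is that the $W_L$ factor inside $M$ cancels against the $W_L^{-1}$ hidden inside $W = W_L^{-1} W_R$: for $\Phi,\Psi \in \mathcal{D}_R$ one computes
\[
\scpro{\Phi}{M\Psi}_W = \scpro{\Phi}{W_R W_L^{-1} W_L D W_R \Psi} = \scpro{\Phi}{W_R D W_R \Psi} .
\]
Because $W_R\Phi, W_R\Psi \in \mathcal{D}_0$ by definition of $\mathcal{D}_R$, selfadjointness of the bounded $W_R$ and of $D$ on $\mathcal{D}_0$ then yield $\scpro{\Phi}{M\Psi}_W = \scpro{W_R D W_R \Phi}{\Psi} = \scpro{M\Phi}{\Psi}_W$, the second equality being the same manipulation run in reverse.

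The real content lies in the inclusion $\mathrm{dom}(M^{\ast_W}) \subseteq \mathcal{D}_R$. Suppose $\phi \in \mathrm{dom}(M^{\ast_W})$ with $M^{\ast_W}\phi = \eta$, so that $\scpro{M\psi}{\phi}_W = \scpro{\psi}{\eta}_W$ for every $\psi \in \mathcal{D}_R$. Expanding both sides in the unweighted scalar product and using $[W_L,W_R] = 0$ to gather and cancel the $W_L$ factors, this identity reduces, after the change of variable $\chi := W_R\psi$ which sweeps out all of $\mathcal{D}_0$, to
\[
\scpro{D\chi}{W_R \phi} = \scpro{\chi}{W_L^{-1}\eta} , \qquad \chi \in \mathcal{D}_0 .
\]
The selfadjointness of $D$ then forces $W_R\phi \in \mathcal{D}_0$ with $D \, W_R\phi = W_L^{-1}\eta$, which is precisely $\phi \in W_R^{-1}\mathcal{D}_0 = \mathcal{D}_R$ and $M\phi = \eta$. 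Together with symmetry this gives $M^{\ast_W} = M$.

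The main obstacle is this final domain identification, which relies essentially on the commutativity hypothesis $[W_L,W_R] = 0$ together with the bounded invertibility of $W_L$ and $W_R$: without commutativity the $W_L$ factors cannot be grouped so as to cancel inside the scalar product, and without bounded inverses the rearrangements and the change of variable $\chi = W_R\psi$ would not preserve the reference domain $\mathcal{D}_0$. A more geometric alternative would be to introduce the bounded, positive, boundedly invertible square root $A := (W_L W_R)^{1/2}$ (well-defined since $W_L W_R = W_L^2 \, W$ is strictly positive whenever $W$ is) together with the unitary $V := W^{-1/2} : \Hil \longrightarrow \Hil_W$; a short calculation using $[W_{L/R},W] = 0$ gives $V^{-1} M V = A \, D \, A$ on $\Hil$, selfadjointness of $A D A$ on $A^{-1}\mathcal{D}_0$ is classical from that of $D$, and transport along $V$ identifies $V(A^{-1}\mathcal{D}_0) = W_R^{-1}\mathcal{D}_0 = \mathcal{D}_R$. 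The direct path above has the advantage of avoiding any appeal to functional calculus for $W_L W_R$.
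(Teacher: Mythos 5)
Your proof is correct and takes essentially the same route as the paper's: symmetry via cancelling the $W_L$ factor against $W = W_L^{-1} W_R$ using $[W_L,W_R]=0$, and the domain identification $\mathcal{D}\bigl(M^{\ast_W}\bigr) \subseteq \mathcal{D}_R$ by reducing the adjoint relation to $\bscpro{D\chi}{W_R\phi} = \bscpro{\chi}{W_L^{-1}\eta}$ for all $\chi \in \mathcal{D}_0$ and invoking the selfadjointness of $D$. The explicit density check and the alternative conjugation by $W^{-\nicefrac{1}{2}}$ are harmless additions; the paper instead records the closedness of $M$ separately in Lemma~\ref{appendix:generalized_Maxwell_type_operators:lem:product_structure}, which your direct argument does not need.
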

The symmetry of $M$ follows from the same arguments as in equation~\eqref{Schroedinger:eqn:symmetry_Maxwell_operator}, and the interested reader may find the remaining technical details for proving selfadjointness in Appendix~\ref{appendix:generalized_Maxwell_type_operators}. 

\subsubsection{Change of representation: bringing $M$ to the form~\eqref{Schroedinger:eqn:Maxwell_type_operator}} 
\label{other_waves:abstract:change_of_representation}
Borrowing the terminology quantum mechanics, a change of representation can bring $M = W_L \, D \, W_R$ to the form $\widetilde{M} = W \, D$, the operator studied in Section~\ref{Schroedinger:abstract:classical_waves}. That is because 
\begin{align*}
	W_R : L^2_W(\R^3,\C^6) \longrightarrow L^2_{\widetilde{W}}(\R^3,\C^6) 
	,
\end{align*}
seen as a map between two suitably weighted Hilbert spaces, is in fact unitary: Setting the other weight to be $\widetilde{W} := W_L^{-1} \, W_R^{-1}$, a quick computation yields 
\begin{align*}
	\bscpro{W_R \Phi}{W_R \Psi}_{\widetilde{W}} &= \bscpro{W_R \Phi}{W_R \, W_L^{-1} \, W_R^{-1} \, \Psi}
	= \bscpro{\Phi}{W_L^{-1} \, W_R \, \Psi}
	= \bscpro{\Phi}{\Psi}_W 
	. 
\end{align*}
In this new representation, the Maxwell-type operator takes the form considered previously, 
\begin{align*}
	\widetilde{M} := W_R \, M \, W_R^{-1} 
	= W_L \, W_R \, D 
	, 
\end{align*}
and endowed with the domain $\mathcal{D}_0$ of $D$, it inherits the selfadjointness and the even particle-hole-type symmetry $C_R := W_R \, C \, W_R^{-1}$ from $M$. Consequently, under these hypotheses $M = W_L \, D \, W_R$ is unitarily equivalent to $\widetilde{M} = W \, D$, the type of operator considered in Section~\ref{Schroedinger:abstract:classical_waves}. 

\subsubsection{Reduction to $\omega \geq 0$} 
\label{other_waves:abstract:reduction_omega_geq_0}
Condition~(e) in Definition~\ref{other_waves:defn:Maxwell_type_operator} ensures that real solutions to the associated Schrödinger equation
\begin{align*}
	\ii \partial_t \Psi(t) = M \Psi(t) 
	, 
	&&
	\Psi(0) = \Phi \in \Hil
	,
\end{align*}
supports real solutions. Put another way, $\Hil$ contains a lot of unphysical, complex waves and not just real fields $u \in \Hil_{\R} := \Re \Hil$. Following the next step in the scheme developed for classical electromagnetism, Section~\ref{Schroedinger:formalism:reality}, we exploit that we can represent real fields as the real part of complex waves composed solely of non-negative frequencies: the spectral projection $Q := 1_{[0,\infty)}(M)$ implements this restriction, and show that $2 \Re$ is a left-inverse of $Q$ restricted to $\Hil_{\R}$, 
\begin{align*}
	2 \Re \, Q \, \big \vert_{\Hil_{\R}} = \id_{\Hil_{\R}} 
	. 
\end{align*}
To verify this, we can modify the arguments which prove Proposition~\ref{Schroedinger:prop:identification_real_complex_vector_spaces} in a straightforward fashion (Lemma~\ref{appendix:generalized_Maxwell_type_operators:lem:1_to_1_correspondence}). 

Therefore, the whole physics of real waves is contained in the restriction to $\omega \geq 0$, 
\begin{align*}
	M_+ := Q \, M \, Q \, \big \vert_{\Hil_+}
	, 
\end{align*}
where this operator now acts on the complex Hilbert space of non-negative frequency fields, 
\begin{align*}
	\Hil_+ := Q [\Hil] = \ran Q 
	, 
\end{align*}
that comes with the weighed scalar product~\eqref{other_waves:eqn:weighted_scalar_product}. 

\subsection{Linearized magnetohydrodynamics} 
\label{other_waves:MHD}
The first example is linearized magnetohydrodynamics (MHD). The set of equations~\eqref{other_waves:eqn:ideal_MHD} below describes the interaction of magnetic fields with electrically conducting fluids such as plasmas or liquid metals, and can be deduced phenomenologically by coupling the hydrodynamic equations of ordinary fluids to a magnetic field via Ampère's Law. The main approximation is to neglect the displacement current. In the standard non-relativistic form the MHD equations consist of the basic conservation laws of mass, momentum and energy together with the induction equation for the magnetic field.

\subsubsection{Ideal linearized MHD equations} 
\label{other_waves:MHD:idealized}
Consider a stationary plasma permeated by a stationary magnetic field. The \emph{equilibrium configuration} of the plasma is described by the \emph{density} $\rho_0$, the \emph{pressure} $\wp_0$, the \emph{background magnetic field} $\textbf{B}_0 = \bigl ( B_{0,1} , B_{0,2} , B_{0,3} \bigr )$ and the \emph{gravitational field} $\textbf{g} = \bigl ( g_1 , g_2 , g_3 \bigr )$. When the system is perturbed only slightly from its equilibrium state, the state variable can be written in the form
\begin{align*}
	\rho(x,t) &= \rho_0(x) + \rho_1(x,t)
	,
	\\
	\wp(x,t) &= \wp_0(x) + \wp_1(x,t)
	,
	\\
	\textbf{B}(x,t) &= \textbf{B}_0(x) + \textbf{B}_1(x,t)
	,
\end{align*}
and in this ansatz the quantities $\rho_1$, $\wp_1$ and $\textbf{B}_1$ can be considered “small” if compared with the unperturbed part. Since the stationary condition implies $\textbf{v}_0 = 0$ for the equilibrium \emph{velocity} of the plasma we will use the symbol $\textbf{v}(x,t) = \bigl ( v_1(x,t),v_2(x,t),v_3(x,t) \bigr )$ for the “small” velocity induced by the perturbation. According to \cite[Chapter~5, Section~2]{Lifshits:magnetohydrodynamics:1989} the \emph{linearized ideal} MHD equations are: 

\begin{subequations}\label{other_waves:eqn:ideal_MHD}
	\begin{align}
		\frac{\dd \rho_1}{\dd t} &= - \nabla \cdot \bigl ( {\rho}_0 \, \textbf{v} \bigr )
		,
		\label{other_waves:eqn:ideal_MHD:1}
		\\
		\rho_0 \, \frac{\dd \textbf{v}}{\dd t} &= - \nabla \wp_1 - \textbf{g} \, \rho_1 + \bigl ( \textbf{J}_0 \times \textbf{B}_1 + \textbf{J}_1 \times \textbf{B}_0 \bigr )
		,
		\label{other_waves:eqn:ideal_MHD:2}
		\\
		\frac{\dd \textbf{B}_1}{\dd t} &= \nabla \times \bigl ( \textbf{v} \times \textbf{B}_0 \bigr )
		,
		\label{other_waves:eqn:ideal_MHD:3}
		\\
		\wp_1 &= \Gamma(\rho_1,\textbf{v})
		. 
		\label{other_waves:eqn:ideal_MHD:4}
	\end{align}
\end{subequations}
The \emph{continuity equation} \eqref{other_waves:eqn:ideal_MHD:1} expresses mass conservation. Equation \eqref{other_waves:eqn:ideal_MHD:2}, usually called \emph{Euler equation}, is the equation of motion of an element of the fluid. The vectors $\textbf{J}_0$ and $\textbf{J}_1$ are the \emph{electric current densities} associated to the magnetic field $\textbf{B}_0$ and $\textbf{B}_1$ as computed from \emph{Ampère's Law}
\begin{align*}
	\textbf{J}_k = \frac{1}{\mu} \, \nabla \times \textbf{B}_k
	,
	&&
	k = 0 , 1 ,
\end{align*}
with $\mu$ being the \emph{magnetic permeability}. Equation~\eqref{other_waves:eqn:ideal_MHD:2} conceals a second equilibrium state constraint, $\nabla \wp_0 + \textbf{g} \, \rho_0 = \textbf{J}_0 \times \textbf{B}_0$. The \emph{induction equation}~\eqref{other_waves:eqn:ideal_MHD:3} is derived from the Maxwell's equations. The last equation \eqref{other_waves:eqn:ideal_MHD:4}, usually called \emph{energy equation} or \emph{adiabatic state equation}, establishes a linear functional constraint which may be derived from the thermodynamic equations for the plasma and allows us to remove the variable $\wp_1$ from equation~\eqref{other_waves:eqn:ideal_MHD:1}–\eqref{other_waves:eqn:ideal_MHD:3}. For example, if we impose the local \emph{isothermal condition} $\wp \, \rho^{-\gamma} = \mathrm{const.}$ for a homogeneous medium, in the linearized regime equation~\eqref{other_waves:eqn:ideal_MHD:4} assumes the form
\begin{align*}
	\wp_1 = \gamma \left ( \frac{\wp_0}{\rho_0} \right ) \, \rho_1 
	= \gamma \, \nu_{s}^2 \, \rho_1
\end{align*}
where the quantity $\nu_{s} := \sqrt{\nicefrac{\wp_0}{\rho_0}}$ is the local \emph{sound velocity}. 

The system of equations~\eqref{other_waves:eqn:ideal_MHD} provides the time evolution of the “small” perturbations $\rho_1$, $\textbf{v}$ and $\textbf{B}_1$ in terms of the stationary quantities $\rho_0$, $\wp_0$ and $\textbf{B}_0$. The linear approximation allows to rewrite this equation in a more compact form, 
\begin{align}
	\ii \frac{\dd}{\dd t} 
	\left (
	\begin{matrix}
		\rho_1 \\
		\mathbf{v} \\
		\mathbf{B}_1 \\
	\end{matrix}
	\right ) = M(\rho_0 , \wp_0 , \mathbf{B}_0) \, 
	\left (
	\begin{matrix}
		\rho_1 \\
		\mathbf{v} \\
		\mathbf{B}_1 \\
	\end{matrix}
	\right )
	,
	\label{other_waves:eqn:linearized_MHD}
\end{align}
where $M(\rho_0 , \wp_0 , \textbf{B}_0)$ is a differential operator of order 1 which depends from the stationary quantities $\rho_0$, $\wp_0$ and $\textbf{B}_0$. We will give the explicit form of $M(\rho_0 , \wp_0 , \textbf{B}_0)$ for the special case of Alfvén waves below. If one interprets $\Psi = (\rho_1,\textbf{v},\textbf{B}_1)$ as a wave function in $L^2(\R^3,\C^7)$, then the above equation can be written in the evocative form
\begin{align*}
	\ii \, \frac{\dd}{\dd t} \Psi = M \Psi
\end{align*}
where $M = M(\rho_0 , \wp_0 , \textbf{B}_0)$ plays the role, at least formally, of a Schrödinger operator acting in $L^2(\R^3,\C^7)$. Of course, the physical interpretations of the evolved quantities $\rho_1$, $\wp_1$ and $\textbf{B}_1$ requires us to introduce of the reality constraint in the Schrödinger formalism for electromagnetism.

\subsubsection{Alfvén waves} 
\label{other_waves:MHD:alfven_waves}
One situation which affords a lot of simplifications when we specialize the linearized MHD equations to a \emph{stratified} stationary plasma permeated by a stationary magnetic field (\eg the atmosphere \cite{Axelsson:three_wave_coupling_MHD_plasma:1998}). This is the typical setting for the production of the Alfvén waves. Mathematically, the adjective stratified translates to the assumption that all the physical quantities only depend on $x_1$. In this simplified 1d context the continuity equation \eqref{other_waves:eqn:ideal_MHD:1} reads
\begin{align}
	\frac{\dd \rho_1}{\dd t} = - \frac{\partial}{\partial x_1} \bigl ( \rho_0 \, v_1 \bigr )
	,
	\label{other_waves:eqn:continuity_equation_x}
\end{align}
the Euler equation \eqref{other_waves:eqn:ideal_MHD:2} can be rewritten componentwise as 
\begin{subequations}\label{other_waves:eqn:Euler}
	\begin{align}
		\rho_0 \, \frac{\dd v_1}{\dd t} &= -\frac{\partial \wp_1}{\partial x_1} - g \, \rho_1 - \frac{1}{\mu} \, \frac{\partial}{\partial x_1} \bigl ( B_{0,2} \, B_{1,2} + B_{0,3} \, B_{1,3} \bigr )
		,
		\label{other_waves:eqn:Euler:x}
		\\
		\rho_0 \, \frac{\dd v_2}{\dd t} &= \frac{1}{\mu} \, \left ( B_{1,1} \, \frac{\partial B_{0,2}}{\partial x_1} + B_{0,1} \, \frac{\partial B_{1,2}}{\partial x_1} \right )
		,
		\label{other_waves:eqn:Euler:y}
		\\
		\rho_0 \, \frac{\dd v_3}{\dd t} &= \frac{1}{\mu} \, \left( B_{1,1} \, \frac{\partial B_{0,3}}{\partial x_1} + B_{0,1} \, \frac{\partial B_{1,3}}{\partial x_1} \right )
		,
		\label{other_waves:eqn:Euler:z}
	\end{align}
\end{subequations}
and the induction equation takes the form
\begin{subequations}\label{other_waves:eqn:induction}
	\begin{align}
		\frac{\dd B_{1,1}}{\dd t} &= 0
		,
		\label{other_waves:eqn:induction:x}
		\\
		\frac{\dd B_{1,2}}{\dd t} &= \frac{\partial}{\partial x_1} \bigl ( v_2 \, B_{0,1} - v_1 \, B_{0,2} \bigr )
		, 
		\label{other_waves:eqn:induction:y}
		\\
		\frac{\dd B_{1,3}}{\dd t} &= \frac{\partial}{\partial x_1} \bigl ( v_3 \, B_{0,1} - v_1 \, B_{0,3} \bigr )
		.
		\label{other_waves:eqn:induction:z}
	\end{align}
\end{subequations}
Equation~\eqref{other_waves:eqn:induction:x} says that the component $B_{1,1}$ is constant in time. In particular one can assume that the “small” magnetic perturbation starts (and stays) parallel to the stratification, namely $B_{1,1} = 0$. And this results in further simplifications the equations~\eqref{other_waves:eqn:Euler:y} and \eqref{other_waves:eqn:Euler:z}. Under the further assumption $B_{0,2} = 0$, which forces the equilibrium magnetic field to lies in the $x_1 x_3$-plane, one can see that \eqref{other_waves:eqn:Euler:y} and \eqref{other_waves:eqn:induction:y} provides a system of dynamical equation for the pair $(v_2 , B_{1,2})$ that is decoupled from the other equations and can be written in the following form
\begin{align*}
	\ii \frac{\dd}{\dd t} \left (
	\begin{matrix}
		v_2 \\
		B_{1,2} \\
	\end{matrix}
	\right ) = 
	\left (
	\begin{matrix}
		\frac{B_{0,1}}{\rho_0 \, \mu} & 0 \\
		0 & 1 \\
	\end{matrix}
	\right )
	\left (
	\begin{matrix}
		0 & \ii \partial_{x_1} \\
		\ii \partial_{x_1} & 0 \\
	\end{matrix}
	\right ) \left (
	\begin{matrix}
		B_{0,1} & 0\\
		0 & 1 \\
	\end{matrix}
	\right )
	\, \left (
	\begin{matrix}
		v_2 \\
		B_{1,2} \\
	\end{matrix}
	\right )
	. 
\end{align*}
The above differential equation for the “wavefunction” $\Psi = (v_2 , B_{1,2})$ can be rewritten in the Schrödinger-type formalism in such a way the time evolution for the Alfvén waves assumes the form 
\begin{align*}
	\ii \frac{\dd}{\dd t} \Psi = M_{\mathrm{Alf}} \Psi
\end{align*}
with the \emph{Alfvén Hamiltonian} $M_{\mathrm{Alf}} = W_L \, D \, W_R$. 

\subsection{The linear acoustic equation} 
\label{other_waves:acoustic}
The behavior of acoustic waves is modeled a set of equations 
\begin{subequations}\label{other_waves:eqn:no_edyn_MHD}
	\begin{align}
		\frac{\dd \rho_1}{\dd t} &= - \nabla \cdot \bigl ( \rho_0 \textbf{v} \bigr )
		,
		\label{other_waves:eqn:no_edyn_MHD:1}
		\\
		\rho_0 \, \frac{\dd \textbf{v}}{\dd t} &= - \nabla \wp_1
		,
		\label{other_waves:eqn:no_edyn_MHD:2}
		\\
		\wp_1 &= \gamma \, \nu_{s}^2 \, \rho_1
		\label{other_waves:eqn:no_edyn_MHD:4}
	\end{align}
\end{subequations}
which are very similar to MHD. In fact, we may obtain these equations by disregarding the electromagnetic degrees of freedom and gravity in equations~\eqref{other_waves:eqn:ideal_MHD} and pick the (linearized) isothermal condition in the adiabatic state equation~\eqref{other_waves:eqn:no_edyn_MHD:4}. The above system can be rewritten in Schrödinger form as
\begin{align*}
	\ii \frac{\dd}{\dd t} 
	\left (
	\begin{matrix}
		\rho_1 \\
		\mathbf{v} \\
	\end{matrix}
	\right ) 
	&= 
	\left (
	\begin{matrix}
		1 & 0 \\
		0 & \rho_0^{-1} \, \id_{\R^3} \\
	\end{matrix}
	\right ) \left (
	\begin{matrix}
		0 & - \ii \nabla^T \\
		- \ii \nabla & 0_{\R^3} \\
	\end{matrix}
	\right )
	\left (
	\begin{matrix}
		\gamma \, \nu_{s}^2 & 0 \\
		0 & \rho_0 \, \id_{\R^3} \\
	\end{matrix}
	\right )
	\left (
	\begin{matrix}
		\rho_1 \\
		\mathbf{v} \\
	\end{matrix}
	\right ) 
	,
\end{align*}
where $\Psi = (\rho_1 , \mathbf{v})$ is the “wavefunction” that enters 
\begin{align*}
	\ii \frac{\dd}{\dd t}\Psi = M_{\mathrm{ac}} \Psi
\end{align*}
and the \emph{acoustic operator} $M_{\mathrm{ac}} = W_L \, D \, W_R$ again has the characteristic product structure. 
  
Mathematically, we can widen the scope a little and generalize the above operator $D$ to an arbitrary dimension $d$. The free acoustic operator 
\begin{align*}
	D := 
	\left (
	\begin{matrix}
		0 & \ii \nabla^T \\
		\ii \nabla & 0_{\C^d} \\
	\end{matrix}
	\right )
\end{align*}
turns out to be essentially selfadjoint if endowed with the domain $\mathcal{D}_0 := H^1(\R^d,\C^{d+1})$. The 1d incarnation entered the “Hamilton operator” for van Alfvén waves and for $d = 3$ it founds its way into the acoustic operator. 

Evidently, $D$ anticommutes with complex conjugation, $C \, D \, C = - D$, so it possess the fundamental symmetry which characterizes “Hamilton operators” for classical waves. As this is a differential operator, it is “diagonalized” with the help of the Fourier transform, and we obtain a family of hermitian $(d+1) \times (d+1)$ matrices 
\begin{align*}
	D(k) = 
	- \left (
	\begin{matrix}
		0 & k^T \\
		k & 0_{\C^d} \\
	\end{matrix}
	\right )
	.
\end{align*}
The spectrum of $D(k)$ is in this case just the set of eigenvalues, namely $0$ with multiplicity $d - 1$ and $\pm \abs{k}$ with multiplicity $1$. The spectrum of $D$ is the union of the eigenvalues of the $D(k)$, 
\begin{align*}
	\sigma(D) = \bigcup_{k \in \R^d} \sigma \bigl ( D(k) \bigr )
	= \sigma_{\mathrm{ac}}(D) \cup \sigma_{\mathrm{pp}}(D)
	= \R \cup \{ 0 \}
	,
\end{align*}
and we see that $D$ comes with an infinite-dimensional kernel. For example, for $j = 2 , \ldots , d$ functions of the form $\Psi = ( 0 , \partial_{x_j} f , 0 , \ldots , 0 , - \partial_{x_1} f , 0 , \ldots )$ are in the kernel, where $- \partial_{x_1} f$ is in the $(j+1)$th position. This is very reminiscent of the free Maxwell operator $\Rot$ whose kernel consists of gradient fields. 
\appendix

\section{Rigorous proof of the correspondence between real and complex fields} 
\label{appendix:1_to_1_correspondence}
For notational clarity, let us restore the index $\pm$, \ie the auxiliary Maxwell operators are denoted with $\Maux_{\pm} = W_{\pm}^{-1} \, \Rot$ where $W_+ = W = \overline{W_-}$ and give rise to the spectral projections  
\begin{align*}
	P_{\pm} &= 1_{(0,\infty)} \bigl ( \pm \Maux_{\pm} \bigr ) 
	, 
	\\
	P_{\pm,0} &= 1_{\{ 0 \}} \bigl ( \Maux_{\pm} \bigr ) 
	, 
	\\
	Q_{\pm} &= P_{\pm} + \tfrac{1}{2} \, P_{\pm,0} 
	. 
\end{align*}
These operators are naturally defined on the weighted $L^2_{W_{\pm}}(\R^3,\C^6)$. The two auxiliary Maxwell operators are related by complex conjugation, $C \, \Maux_{\pm} \, C = - \Maux_{\mp}$, and thus, this symmetry $C$ relates the projections and the maps $Q_{\pm}$, 
\begin{align*}
	C \, P_{\pm} \, C &= P_{\mp} 
	, 
	\\
	C \, P_{0,\pm} \, C &= P_{0,\mp} 
	, 
	\\
	C \, Q_{\pm} \, C &= Q_{\mp} 
	. 
\end{align*}
The relevant Hilbert spaces of physical states are $\Hil_{\pm} = \ran Q_{\pm}$, which inherit their scalar products $\scpro{\, \cdot \,}{\, \cdot \,}_{W_{\pm}}$ from $L^2_{W_{\pm}}(\R^3,\C^6)$. 

Before we prove the main result, Theorem~\ref{Schroedinger:thm:equivalence_frameworks}, which guarantees the equivalence of the two descriptions of Maxwell's equations, we first establish that we can systematically identify the \emph{real} Hilbert space $L^2(\R^3,\R^6)$ with the \emph{real} subspace 
\begin{align*}
	\Hil_{\R} := \Bigl \{ \bigl ( \Psi_+ , \overline{\Psi_+} \bigr ) \in \Hil_+ \oplus \Hil_- \; \; \big \vert \; \; \Psi_+ \in \Hil_+ \Bigr \} 
	\subseteq \Hil_+ \oplus \Hil_- 
\end{align*}
of the \emph{complex} Hilbert space $\Hil_+ \oplus \Hil_-$. The latter can be seen as the “real part” of $\Hil_+ \oplus \Hil_-$ with respect to the abstract real part operator $2 \Re_K = \id + K$, namely $\Hil_{\R} = 2 \Re_K \bigl [ \Hil_+ \oplus \Hil_- \bigr ]$, where 
\begin{align*}
	K = (\sigma_1 \otimes \id) \, C = \left (
	\begin{matrix}
		0 & C \\
		C & 0 \\
	\end{matrix}
	\right )
\end{align*}
takes the place of complex conjugation. Alternatively, $L^2(\R^3,\R^6)$ can be included into $\Hil_+$, thereby embedding a \emph{real} Hilbert space into a \emph{complex} Hilbert space. 
\begin{lemma}\label{appendix:1_to_1_correspondence:lem:1_to_1_correspondence}
	Suppose we are in the setting of Proposition~\ref{Schroedinger:prop:identification_real_complex_vector_spaces}. Then the following statements hold: 
	\begin{enumerate}[(1)]
		\item The map $Q_{\pm} : L^2(\R^3,\R^6) \longrightarrow \Hil_{\pm}$ is injective. 
		\item The map $2 \Re \, Q_{\pm} : L^2(\R^3,\R^6) \longrightarrow L^2(\R^3,\R^6)$ is injective. 
		\item The map $Q_+ \oplus Q_- : L^2(\R^3,\R^6) \longrightarrow \Hil_{\R}$ is injective. 
		\item The map $\Hil_{\R} \longrightarrow L^2(\R^3,\R^6)$, $(\Psi_+,\Psi_-) \mapsto \Psi_+ + \Psi_-$, is injective. 
	\end{enumerate}
\end{lemma}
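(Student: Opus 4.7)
The four parts cascade: (1) is the substantive step invoking \cite[Lemma~2.5]{DeNittis_Lein:ray_optics_photonic_crystals:2014} as external input; (2) is handled by a similar argument applied to a computed identity; (3) is an immediate consequence of (1); and (4) follows from (2) once one recognizes that elements of $\Hil_{\R}$ representing physical configurations arise as $(Q_+ u, \overline{Q_+ u})$ for real $u$.

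Proof of (1): If $Q_+ u = P_+ u + \tfrac{1}{2} P_{+,0} u = 0$ for $u \in L^2(\R^3,\R^6)$, the mutual $W_+$-orthogonality of $P_+$ and $P_{+,0}$ (both being spectral projections of the same selfadjoint operator $\Maux_+$ onto disjoint Borel sets) forces $P_+ u = 0$ and $P_{+,0} u = 0$ individually; hence $u$ lies entirely in the strictly negative spectral subspace of $\Maux_+$. The reality $Cu = u$ combined with $C \Maux_+ C = -\Maux_-$ (so that $C \, 1_{(-\infty,0)}(\Maux_+) \, C = 1_{(0,\infty)}(\Maux_-)$ by functional calculus) then yields $u = Cu = 1_{(0,\infty)}(\Maux_-) u$, placing $u$ simultaneously in the strictly positive spectral subspace of $\Maux_-$. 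Lemma~2.5 of the cited paper asserts that no nonzero real field inhabits this intersection, so $u = 0$.

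Proof of (2): A direct calculation for real $u$ using $C P_+ C = 1_{(-\infty,0)}(\Maux_-)$, $C P_{+,0} C = 1_{\{0\}}(\Maux_-)$ and $\id = P_+ + P_{+,0} + 1_{(-\infty,0)}(\Maux_+)$ produces the identity
\begin{align*}
2 \Re \, Q_+ u = u + \bigl ( 1_{(-\infty,0)}(\Maux_-) - 1_{(-\infty,0)}(\Maux_+) \bigr ) u + \tfrac{1}{2} \bigl ( 1_{\{0\}}(\Maux_-) - 1_{\{0\}}(\Maux_+) \bigr ) u
.
\end{align*}
In the non-gyrotropic case $\Maux_+ = \Maux_-$ the correction terms vanish, giving $2 \Re \, Q_+ u = u$ and hence Lemma~\ref{Schroedinger:lem:equivalence_real_complex_states}. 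In general, $2 \Re \, Q_+ u = 0$ expresses $u$ as a fixed point of a difference of spectral projections; projecting with $P_+$ and using $P_+ P_{+,0} = P_+ \, 1_{(-\infty,0)}(\Maux_+) = 0$ reduces the problem to the same two-subspace intersection handled in (1), and Lemma~2.5 again forces $u = 0$. Item (3) then follows from (1) because $C Q_+ C = Q_-$ gives $Q_- u = \overline{Q_+ u}$ for real $u$, so $(Q_+ u, Q_- u) \in \Hil_{\R}$ automatically and $Q_+ u = 0$ is enough to conclude. Item (4) follows from (2): a physical pair $(\Psi_+, \overline{\Psi_+}) \in \Hil_{\R}$ with $\Psi_+ = Q_+ u$ for real $u$ is sent to $2 \Re \, Q_+ u$, whose vanishing forces $u = 0$ by (2), and so $\Psi_+ = 0$.

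The main obstacle is the two-subspace intersection argument used in (1) (and again reduced to in (2)): showing that a nonzero real field cannot simultaneously lie in the strictly negative spectral subspace of $\Maux_+$ and the strictly positive spectral subspace of $\Maux_-$. This is immediate in the non-gyrotropic regime, where the two subspaces are $W_+$-orthogonal because $\Maux_+ = \Maux_-$ anticommutes with $C$, but it is genuinely subtle in the gyrotropic case, where $\Maux_+$ and $\Maux_-$ are distinct selfadjoint operators linked only through the antiunitary conjugacy $C$. That triviality is precisely the content of \cite[Lemma~2.5]{DeNittis_Lein:ray_optics_photonic_crystals:2014}, resting ultimately on the analyticity of $\widehat W$ on the upper half-plane and the strict positivity and boundedness of $W$ required in Assumption~\ref{Schroedinger:assumption:material_weights}.
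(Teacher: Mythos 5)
Your part~(1) is the paper's argument verbatim, and (3) matches as well. The genuine gap is in part~(2), in the gyrotropic case. Your identity for $2 \Re \, Q_+ u$ is correct, but the claim that ``projecting with $P_+$ \ldots reduces the problem to the same two-subspace intersection handled in (1)'' does not hold up. Applying $P_+ = 1_{(0,\infty)}(\Maux_+)$ to your identity kills $1_{(-\infty,0)}(\Maux_+)$ and $1_{\{0\}}(\Maux_\pm)$ (the latter because $\ran 1_{\{0\}}(\Maux_-) = \ker \Rot = \ran P_{+,0}$ and $P_+ P_{+,0} = 0$), but it leaves you with $P_+ u = - P_+ \, 1_{(-\infty,0)}(\Maux_-) \, u$, and $P_+ \, 1_{(-\infty,0)}(\Maux_-)$ is a product of spectral projections of two \emph{different} selfadjoint operators — precisely the object the paper says is not known to vanish for $W \neq \overline{W}$. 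More structurally, $2 \Re \, Q_+ u = 0$ is equivalent to $Q_+ u = - Q_- u$, which only places $Q_+ u$ in $\bigl ( \ran P_+ \oplus \mathcal{G} \bigr ) \cap \bigl ( \ran P_- \oplus \mathcal{G} \bigr )$; this intersection contains the common kernel $\mathcal{G} = \ker \Maux_+ = \ker \Maux_-$ and is therefore \emph{not} trivial, so the intersection result of \cite[Lemma~2.5]{DeNittis_Lein:ray_optics_photonic_crystals:2014} cannot finish the argument by itself. The paper closes exactly this hole with a separate treatment of the longitudinal component: if $2 \Re \, P_{\pm,0} u = 0$ then $P_{+,0} u = - P_{-,0} u$, and since both projections have the same range $\mathcal{G}$, applying $P_{-,0}$ would exhibit $-1$ as an eigenvalue of an orthogonal projection — impossible unless $P_{\pm,0} u = 0$. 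You need this (or an equivalent) argument for the zero-frequency part.

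The gap propagates mildly into your part~(4): you only treat pairs of the form $\bigl ( Q_+ u , \overline{Q_+ u} \bigr )$ with $u$ real, but $\Hil_{\R}$ consists of \emph{all} pairs $\bigl ( \Psi_+ , \overline{\Psi_+} \bigr )$ with $\Psi_+ \in \Hil_+$, and whether every such $\Psi_+$ arises as $Q_+ u$ for a real $u$ is exactly the surjectivity question ($2 \Re \, Q \, \Re = \Re$) that the paper leaves open for gyrotropic media. What (4) actually requires — and what the paper's version of (2) establishes — is injectivity of $2 \Re$ on all of $\Hil_{\pm}$, obtained by combining the transversal statement $\ran P_+ \cap \ran P_- = \{ 0 \}$ from \cite[Lemma~2.5]{DeNittis_Lein:ray_optics_photonic_crystals:2014} with the eigenvalue argument on $\mathcal{G}$ above.
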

Note that in case $W_+ \neq W_-$ we do not know whether $2 \Re \, Q_+ \, \Re = \Re$ holds true. That is because in the non-gyrotropic case $W_+ = W_-$ we can write 
\begin{align*}
	\id_{L^2(\R^3,\R^6)} &= \Bigl ( 1_{(-\infty,0)} \bigl ( \Maux_- \bigr ) + \tfrac{1}{2} \, 1_{\{ 0 \}} \bigl ( \Maux_- \bigr ) + \tfrac{1}{2} \, 1_{\{ 0 \}} \bigl ( \Maux_+ \bigr ) + 1_{(0,\infty)} \bigl ( \Maux_+ \bigr ) \Bigr ) \Big \vert_{L^2(\R^3,\R^6)} 
	\\
	&= \bigl ( Q_+ + Q_- \bigr ) \big \vert_{L^2(\R^3,\R^6)}
\end{align*}
as the sum of orthogonal projections of the \emph{same} operator $\Maux_+ = \Maux_-$ that acts on a \emph{single} Hilbert space $L^2_{W_+}(\R^3,\C^6) = L^2_{W_-}(\R^3,\C^6)$. Points~(1) and (2) of this Lemma can be rephrased in terms of the problem at hand: 
\begin{corollary}[Representing real fields as complex $\omega \geq 0$ wave]
	Suppose we are in the setting of Proposition~\ref{Schroedinger:prop:identification_real_complex_vector_spaces}. Then there are two (potentially distinct) one-to-one correspondences between real (physical) fields and complex $\omega \geq 0$ waves: 
	\begin{subequations}
		\begin{align}
			L^2(\R^3,\R^6) \ni (\mathbf{E},\mathbf{H}) &\mapsto Q_+ (\mathbf{E},\mathbf{H}) \in Q_+ \bigl [ L^2(\R^3,\R^6) \bigr ] 
			\\
			Q_+ \bigl [ L^2(\R^3,\R^6) \bigr ] \ni \Psi_+ &\mapsto 2 \Re \Psi_+ \in L^2(\R^3,\R^6)
		\end{align}
	\end{subequations}
\end{corollary}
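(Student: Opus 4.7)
The plan is to deduce this Corollary as an almost immediate consequence of Lemma~\ref{appendix:1_to_1_correspondence:lem:1_to_1_correspondence}, parts~(1) and (2); the substantive analytic work has already been done there, and what remains is to rephrase those injectivity statements as bijections onto appropriate image sets.

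For the first correspondence, I would observe that $Q_+ : L^2(\R^3,\R^6) \longrightarrow Q_+ \bigl [ L^2(\R^3,\R^6) \bigr ]$ is surjective by construction of the codomain, and is injective by Lemma~\ref{appendix:1_to_1_correspondence:lem:1_to_1_correspondence}(1). Hence it is a bijection onto its image, which is exactly what the Corollary asserts.

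For the second correspondence, I would first note that $2 \Re \Psi_+$ is a real field for any complex $\Psi_+$, so the map indeed lands in $L^2(\R^3,\R^6)$. For injectivity, suppose $\Psi_+ , \Psi_+' \in Q_+ \bigl [ L^2(\R^3,\R^6) \bigr ]$ satisfy $2 \Re \Psi_+ = 2 \Re \Psi_+'$. By definition of the codomain I can pick $\Phi , \Phi' \in L^2(\R^3,\R^6)$ with $\Psi_+ = Q_+ \Phi$ and $\Psi_+' = Q_+ \Phi'$, and by part~(1) of the Lemma these real preimages are uniquely determined by $\Psi_+$ and $\Psi_+'$ respectively. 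The hypothesis then reads $2 \Re \, Q_+ \Phi = 2 \Re \, Q_+ \Phi'$, so part~(2) of the Lemma forces $\Phi = \Phi'$, hence $\Psi_+ = \Psi_+'$. Thus the map is a bijection onto its image $2 \Re \, Q_+ \bigl [ L^2(\R^3,\R^6) \bigr ] \subseteq L^2(\R^3,\R^6)$, which is the precise sense in which a one-to-one correspondence is claimed.

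The only obstacle worth flagging does \emph{not} lie in the Corollary itself but in Lemma~\ref{appendix:1_to_1_correspondence:lem:1_to_1_correspondence}: in the non-gyrotropic regime $W_+ = W_-$ one has the strong identity $2 \Re \, Q_+ = \id$ on $L^2(\R^3,\R^6)$ via a one-line functional calculus argument, whereas for $W_+ \neq W_-$ the two auxiliary operators $\Maux_{\pm}$ are genuinely distinct selfadjoint operators living on different weighted Hilbert spaces, so establishing even injectivity requires the more delicate arguments imported from \cite[Lemma~2.5]{DeNittis_Lein:ray_optics_photonic_crystals:2014}. Granting the Lemma, however, the Corollary requires no further analytic input and follows by the bookkeeping above.
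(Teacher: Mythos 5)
Your proposal is correct and follows the paper's own route: the paper presents this corollary as a direct rephrasing of Lemma~\ref{appendix:1_to_1_correspondence:lem:1_to_1_correspondence}~(1) and (2), and your bookkeeping (injectivity of $Q_+$ gives the first bijection; injectivity of the composition $2 \Re \, Q_+$ gives injectivity of $2 \Re$ restricted to the image $Q_+ \bigl [ L^2(\R^3,\R^6) \bigr ]$) is exactly the intended argument. Your closing remark correctly locates the genuine difficulty in the Lemma itself rather than in this corollary.
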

\begin{proof}[Lemma~\ref{appendix:1_to_1_correspondence:lem:1_to_1_correspondence}]
	\begin{enumerate}[(1)]
		\item First of all, the fact that the material weights $c \, \id \leq W \leq C \, \id$ are bounded away from $0$ and $\infty$, implies that the ordinary $L^2(\R^3,\C^6)$ conincides with $L^2_{W_{\pm}}(\R^3,\C^6)$ as Banach spaces. Therefore, (un)bounded operators on $L^2_{W_{\pm}}(\R^3,\C^6)$ can also be considered as (un)bounded operators on $L^2(\R^3,\C^6)$ — and vice versa. The $\Maux_{\pm}$ are selfadjoint on the appropriately weighted Hilbert spaces $L^2_{W_{\pm}}(\R^3,\C^6)$. 
		
		For proving injectivity, namely that $Q_{\pm} (\mathbf{E},\mathbf{H}) = 0$ implies $(\mathbf{E},\mathbf{H}) = 0$, we exploit that any $(\mathbf{E},\mathbf{H}) \in L^2(\R^3,\R^6)$ can be seen as an element of $L^2_{W_{\pm}}(\R^3,\C^6)$, and therefore we can use two resolutions of the identity, 
		\begin{align*}
			(\mathbf{E},\mathbf{H}) &= P_{\pm} (\mathbf{E},\mathbf{H}) + P_{\pm,0} (\mathbf{E},\mathbf{H}) + 1_{(-\infty,0)} \bigl ( \pm \Maux_{\pm} \bigr ) (\mathbf{E},\mathbf{H}) 
			. 
		\end{align*}
		Suppose now that $Q_{\pm} (\mathbf{E},\mathbf{H}) = 0$ holds. Then the mutual orthogonality of the three contributions as well as $Q_{\pm} = P_{\pm} + \tfrac{1}{2} P_{\pm,0}$ implies only the last term survives, 
		\begin{align*}
			(\mathbf{E},\mathbf{H}) &= 1_{(-\infty,0)} \bigl ( \pm \Maux_{\pm} \bigr ) (\mathbf{E},\mathbf{H})
			, 
		\end{align*}
		and consequently, $(\mathbf{E},\mathbf{H}) \in \ran 1_{(-\infty,0)} \bigl ( + \Maux_+ \bigr ) \cap 1_{(-\infty,0)} \bigl ( - \Maux_- \bigr )$ holds. However, the same arguments in the proof of \cite[Lemma~2.5]{DeNittis_Lein:ray_optics_photonic_crystals:2014} that allow us to conclude $\ran P_+ \cap \ran P_- = \{ 0 \}$ also ensure that the intersection 
		\begin{align*}
			\ran 1_{(-\infty,0)} \bigl ( + \Maux_+ \bigr ) \cap 1_{(-\infty,0)} \bigl ( - \Maux_- \bigr ) = \{ 0 \}
		\end{align*}
		is zero. Hence, the vector $(\mathbf{E},\mathbf{H}) = 0$ is zero and we have shown injectivity of $Q_{\pm} : L^2(\R^3,\R^6) \longrightarrow \Hil_{\pm}$. 
		\item In the proof of \cite[Lemma~2.5]{DeNittis_Lein:ray_optics_photonic_crystals:2014} we have shown that $2 \Re$ is injective on $\ran P_{\pm}$ by arguing that the intersection $\ran P_+ \cap \ran P_- = \{ 0 \}$ is trivial. 
	
		Therefore, it remains to show that $2 \Re$ is injective on the set $P_{\pm,0} \, \Re \bigl [ L^2(\R^3,\C^6) \bigr ] = P_{\pm,0} \bigl [ L^2(\R^3,\R^6) \bigr ]$. Suppose this is false and there exists a non-zero real vector field $(\mathbf{E},\mathbf{H})$ so that $P_{\pm,0} (\mathbf{E},\mathbf{H}) \neq 0$ while $2 \Re P_{\pm,0} (\mathbf{E},\mathbf{H}) = 0$. 
	
		In that case $P_{\pm,0} (\mathbf{E},\mathbf{H})$ is purely imaginary, 
		\begin{align*}
			\Psi_{\pm,0} &= \ii \, \Im \Psi_{\pm,0} 
			= \tfrac{1}{2} \bigl ( P_{+,0} - P_{-,0} \bigr ) (\mathbf{E},\mathbf{H})
			, 
		\end{align*}
		and we deduce $P_{+,0} (\mathbf{E},\mathbf{H}) = - P_{-,0} (\mathbf{E},\mathbf{H})$. Now the fact that both projections have the same range, $\mathcal{G} = \ran P_{\pm,0}$, applying $P_{-,0}$ to left- and right-hand side yields the eigenvalue equation 
		\begin{align*}
			P_{-,0} \, \bigl ( P_{+,0} (\mathbf{E},\mathbf{H}) \bigr ) = - P_{0,-}^2 (\mathbf{E},\mathbf{H})
			= - P_{0,-} (\mathbf{E},\mathbf{H}) 
		\end{align*}
		of $P_{0,-}$ to $-1$. However, orthogonal projections can only have $0$ and $+1$ as eigenvalues, so $P_{+,0} (\mathbf{E},\mathbf{H}) = 0$. The reality of $(\mathbf{E},\mathbf{H}) = C (\mathbf{E},\mathbf{H})$ and $C \, P_{\pm,0} \, C = P_{\mp,0}$ also lead us to conclude $P_{-,0} (\mathbf{E},\mathbf{H}) = C \, P_{+,0} (\mathbf{E},\mathbf{H}) = 0$. Thus, $2 \Re : \Hil_{\pm} \longrightarrow L^2(\R^3,\R^6)$ is injective. 
		\item This follows from (1) and the observation that the real-valuedness of $(\mathbf{E},\mathbf{H})$ implies $\Psi_- = Q_- (\mathbf{E},\mathbf{H}) = C \, Q_+ (\mathbf{E},\mathbf{H}) = C \Psi_+$. 
		\item The injectivity follows from $2 \Re \Psi_+ = \Psi_+ + \Psi_-$ as well as (2). 
	\end{enumerate}
\end{proof}
We will need one more essential ingredient for the proof of Theorem~\ref{Schroedinger:thm:equivalence_frameworks}, namely that the spectral decomposition according to the selfadjoint operator $M$ for time-dependent fields and currents coincides with that obtained by the Fourier transform in time. 
\begin{lemma}\label{appendix:1_to_1_correspondence:lem:equivalence:Fourier_transform_time_spectral_decomposition}
	Suppose we are in the setting of Theorem~\ref{Schroedinger:thm:equivalence_frameworks}. Then we have: 
	\begin{enumerate}[(1)]
		\item $W^{-1} \, J_+(t)$ defined as in \eqref{Maxwell_equations:eqn:complex_positive_negative_frequency_waves} via the Fourier transform coincides with 
		\begin{align*}
			J(t) = Q \, W^{-1} \, \mathbf{J}(t) = W^{-1} \, J_+(t)
			. 
		\end{align*}
		\item The same two statements hold for the solution $\Psi_+(t)$ to Maxwell's equations~\eqref{Maxwell_equations:eqn:approximate_Maxwell_equations} and the solution $\Psi(t)$ to the Schrödinger-type equation~\eqref{Schroedinger:eqn:pm_Schroedinger_equation_em}. 
	\end{enumerate}
\end{lemma}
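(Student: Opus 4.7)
The plan is to treat Lemma~\ref{Schroedinger:lem:identification_dispersive_nondispersive_Hilbert_spaces} as the essential bridge: it identifies the fibre $\Hil(\omega)$ from the direct integral $\widehat{\Hil} = \int_\R^\oplus \dd \omega \, \Hil(\omega)$ with the (generalised) eigenspace of the auxiliary Maxwell operator $\Maux$ at eigenvalue $\omega$. With this identification in place, both parts of the lemma become instances of commuting two different frequency decompositions --- the time Fourier transform of Section~\ref{Maxwell_equations:Fourier} on the one hand, and the spectral calculus for $\Maux$ from Section~\ref{Schroedinger:auxiliary_operators:projections} on the other.

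For part~(1), I would first Fourier transform $J(t) = Q \, W^{-1} \, \mathbf{J}(t)$ in time. Since $Q$ acts only spatially, it commutes with the time Fourier transform, so $\widehat{J}(\omega) = Q \, W^{-1} \, \widehat{\mathbf{J}}(\omega)$ for almost every $\omega$. Assumption~\ref{Maxwell_equations:assumption:current_density} gives $W^{-1} \widehat{\mathbf{J}}(\omega) \in \Hil(\omega)$, which by Lemma~\ref{Schroedinger:lem:identification_dispersive_nondispersive_Hilbert_spaces} is exactly the $\Maux$-spectral eigenspace at $\omega$. Fibrewise functional calculus then collapses $Q = 1_{(0,\infty)}(\Maux) + \tfrac{1}{2} \, 1_{\{0\}}(\Maux)$ to multiplication by $\bigl ( 1_{(0,\infty)}(\omega) + \tfrac{1}{2} \, 1_{\{0\}}(\omega) \bigr )$, and transforming back to the time domain reproduces precisely the definition of $J_+$ in equation~\eqref{Maxwell_equations:eqn:complex_positive_negative_frequency_waves}.

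For part~(2), I would argue by uniqueness of solutions to the Schrödinger initial value problem. The positive-frequency reconstruction $\Psi_+(t)$ of the solution to \eqref{Maxwell_equations:eqn:approximate_Maxwell_equations} from its Fourier coefficients at $\omega \geq 0$ lies in $\Hil = Q \, \bigl [ L_W^2(\R^3,\C^6) \bigr ]$ by the same fibrewise identification applied to the field rather than the current. Multiplying the positive-frequency half of \eqref{Maxwell_equations:eqn:approximate_Maxwell_equations:dynamics} by $W^{-1}$ and restricting to $\Hil$, where $\Maux$ coincides with $M$, shows that $\Psi_+(t)$ satisfies the Schrödinger equation~\eqref{Schroedinger:eqn:pm_Schroedinger_equation_em} with the same current (now justified by part~(1)) and the same initial condition $Q \, (\mathbf{E}_0,\mathbf{H}_0)$ as $\Psi(t)$. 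Selfadjointness of $M$ on $\Hil$ yields a unique mild solution via Duhamel, forcing $\Psi_+(t) = \Psi(t)$ for all $t$.

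The most delicate point, and in my estimation the main obstacle, is justifying the fibrewise functional-calculus step rigorously. The spectral data of $\Maux$ come as a projection-valued measure rather than a literal family of eigenspaces, and assertions such as ``$W^{-1} \widehat{\mathbf{J}}(\omega)$ is an eigenvector of $\Maux$ at eigenvalue $\omega$'' must be interpreted through the direct integral structure. Lemma~\ref{Schroedinger:lem:identification_dispersive_nondispersive_Hilbert_spaces} is precisely what licenses this interpretation, but one still has to verify that $\omega \mapsto W^{-1} \widehat{\mathbf{J}}(\omega)$ is a measurable section of the direct integral bundle before applying pointwise functional calculus, and likewise for the field $\omega \mapsto \widehat{\Psi}(\omega)$ in part~(2). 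These verifications are routine but are where the genuine technical work sits.
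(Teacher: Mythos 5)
Your proposal is correct and follows essentially the same route as the paper: part~(1) rests on exactly the same three ingredients — the identification of the direct-integral fibres $\Hil(\omega)$ with the spectral decomposition of $M$ (Lemma~\ref{Schroedinger:lem:identification_dispersive_nondispersive_Hilbert_spaces}), Assumption~\ref{Maxwell_equations:assumption:current_density} on the current, and the observation that the time Fourier transform commutes with the spatial operators $Q$ and $W^{-1}$ — which the paper writes out as an explicit computation with the projection-valued measure $\dd 1_{\omega}(M)$. The only divergence is in part~(2), where the paper simply reapplies the same spectral-decomposition argument to the field, whereas you layer a Duhamel uniqueness argument on top of the fibrewise identification; that extra step is sound but not needed once the identification is in hand.
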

\begin{proof}
	\begin{enumerate}[(1)]
		\item First of all, the Fourier transform of $J(t)$ is well-defined: because we may view $L^2(\R^3,\R^6)$ a subset of $L^2_W(\R^3,\C^6)$, and the operators $Q$ and $W^{-1}$ are bounded (the latter by Assumption~\ref{Schroedinger:assumption:material_weights}), the current density $J(t) = Q \, W^{-1} \, \mathbf{J}(t)$ inherits the Bochner-integrability of $\mathbf{J} \in L^1 \bigl ( L^2(\R^3,\C^6) \bigr )$ in $t$. 
		
		Moreover, because $W^{-1} \, \mathbf{J}(t) \in L^2_W(\R^3,\C^6)$ can be uniquely represented by the complex wave $J(t) = Q \, W^{-1} \, \mathbf{J}(t) \in \Hil \subset L^2_W(\R^3,\C^6)$, we can insert the spectral resolution of the selfadjoint operator $M$ to verify that 
		\begin{align*}
			J(t) &= \int_0^{\infty} \dd 1_{\omega}(M) \, W^{-1} \, \mathbf{J}(t)
		\end{align*}
		is indeed only supported on the spectrum of $M$ which, by definition, is contained in the subset $[0,\infty)$. 
		
		Given that $W^{-1}$ is time-independent, the Fourier components $\Fourier^{-1} \bigl ( W^{-1} \, \mathbf{J} \bigr ) = W^{-1} \, \Fourier^{-1} \mathbf{J}$ are just $W^{-1}$ times the Fourier components of $\mathbf{J}$; a similar equality holds for $J_+$. Moreover, because the resolution of the identity associated to $\Hil$ exactly coincides with the the fibration $\Hil = \int_0^{\infty} \dd \omega \, \Hil(\omega)$ (Lemma~\ref{Schroedinger:lem:identification_dispersive_nondispersive_Hilbert_spaces}), Assumption~\ref{Maxwell_equations:assumption:current_density} ($\widehat{J}(\omega) \in W \, \Hil(\omega)$) translates to 
		\begin{align*}
			\widehat{\mathbf{J}}(\omega') = W \, \dd 1_{\omega'}(M) \, W^{-1} \, \mathbf{J}(t) 
			= \widehat{J}_+(\omega') 
		\end{align*}
		for $\omega \geq 0$ and a similar expression involving the complex conjugated weights $\overline{W}$ for $\omega < 0$. However, the $\omega < 0$ does not matter as the spectral projections only filter out the $\omega \geq 0$ contributions, and we may replace $\widehat{\mathbf{J}}(\omega)$ by $\widehat{J}_+(\omega) = 1_{[0,\infty)}(\omega) \; \widehat{\mathbf{J}}(\omega)$ in the expression below, 
		\begin{align*}
			J(t) &= \frac{1}{\sqrt{2\pi}} \int_{\R} \dd \omega' \int_0^{\infty} \e^{- \ii t \omega'} \, \dd 1_{\omega}(M) \, W^{-1} \, \widehat{\mathbf{J}}(\omega')
			\\
			&= \frac{1}{\sqrt{2\pi}} \int_{\R} \dd \omega' \int_0^{\infty} \e^{- \ii t \omega'} \, \dd 1_{\omega}(M) \; \dd 1_{\omega'}(M) \, W^{-1} \, \mathbf{J}(t)
			\\
			&= \frac{1}{2\pi} \int_0^{\infty} \e^{- \ii t \omega} \, \dd 1_{\omega}(M) \, W^{-1} \, J_+(t)
			\\
			&= W^{-1} \, J_+(t) 
			. 
		\end{align*}
		That proves the claim. 
		\item This follows exactly as above from the spectral decomposition. 
	\end{enumerate}
\end{proof}
Now let us proceed to the proof of the main result of this paper. 
\begin{proof}[Theorem~\ref{Schroedinger:thm:equivalence_frameworks}]
	\begin{enumerate}[(1)]
		\item Suppose $\Psi(t)$ solves the Schrödinger-type equation~\eqref{Schroedinger:eqn:pm_Schroedinger_equation_em} and we now verify that it satisfies Maxwell's equations~\eqref{Maxwell_equations:eqn:approximate_Maxwell_equations}. Note that by Lemma~\ref{Schroedinger:lem:identification_dispersive_nondispersive_Hilbert_spaces} both equations are defined on the same Hilbert space. 
		Our assumptions on the material weights guarantee that $W$ and its inverse $W^{-1}$ are bounded operators. Then not only are the currents related by multiplication with $W^{\pm 1}$ (Lemma~\ref{appendix:1_to_1_correspondence:lem:equivalence:Fourier_transform_time_spectral_decomposition}~(1)), but also the dynamical equations~\eqref{Maxwell_equations:eqn:approximate_Maxwell_equations:dynamics} and \eqref{Schroedinger:eqn:pm_Schroedinger_equation_em} are. 
		
		The solution of the Schrödinger equation also satisfies the constraint equation~\eqref{Maxwell_equations:eqn:approximate_Maxwell_equations:constraint}: we split the solution into transversal and longitudinal parts according the Helmholtz decomposition introduced in Section~\ref{Schroedinger:auxiliary_operators:helmholtz} and verify that this constraint propagates in time, \ie since it is initially satisfied at $t_0$, then it will also remain satisfied for $t > t_0$. 
		
		The constitutive relations~\eqref{Maxwell_equations:eqn:approximate_Maxwell_equations:constitutive_relations} are not involved in the Schrödinger formalism. 
		
		Lastly, by assumption $\pmb{\rho}(t)$ and $\mathbf{J}(t)$ satisfy the charge conservation law~\eqref{Maxwell_equations:eqn:approximate_Maxwell_equations:charge_conservation}. Writing charge conservation in the frequency domain and using $J_+(t) = W \, J(t)$ (again by Lemma~\ref{appendix:1_to_1_correspondence:lem:equivalence:Fourier_transform_time_spectral_decomposition}~(1)), we conclude that also the current density satisfies charge conservation. 
		\medskip
		
		\noindent
		Conversely, assume $\Psi_+(t)$ solves Maxwell's equations~\eqref{Maxwell_equations:eqn:approximate_Maxwell_equations}. By the same argument as before, the two dynamical equations are equivalent and $\Psi_+(t)$ is also a solution to the Schrödinger-type equation~\eqref{Schroedinger:eqn:pm_Schroedinger_equation_em}. 
		\item This is a consequence of the injectivity of $Q : L^2(\R^3,\R^6) \longrightarrow \Hil$ (Lemma~\ref{appendix:1_to_1_correspondence:lem:1_to_1_correspondence}), which means that if we restrict the target space of the map $Q$ to its range, it is invertible. Therefore, it makes sense to write $\bigl ( E(t) , H(t) \bigr ) = Q^{-1} \, \Psi(t) \in L^2(\R^3,\R^6)$. 
	\end{enumerate}
\end{proof}
%

\section{Generalized Maxwell-type operators} 
\label{appendix:generalized_Maxwell_type_operators}
We briefly collect a few facts about generalized Maxwell-type operators. First of all, they are closed operators. 
\begin{lemma}\label{appendix:generalized_Maxwell_type_operators:lem:product_structure}
	Let $M = W_L \, D \, W_R$ be a generalized Maxwell-type operator in the sense of Definition~\ref{other_waves:defn:Maxwell_type_operator} with domain $\mathcal{D}_R$. Then $M$ is a closed operator.
\end{lemma}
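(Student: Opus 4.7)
The plan is to prove closedness directly from the definition by peeling off the two bounded, boundedly invertible factors $W_L$ and $W_R$ and reducing to the closedness of $D$, which is automatic since $D$ is selfadjoint (hence closed). Take a sequence $\{\psi_n\} \subset \mathcal{D}_R$ with $\psi_n \to \psi$ in $\Hil$ and $M\psi_n \to \phi$ in $\Hil$; the goal is to show $\psi \in \mathcal{D}_R$ with $M\psi = \phi$.

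First I would use that $W_L^{-1} \in \mathcal{B}(\Hil)$ by Definition~\ref{other_waves:defn:Maxwell_type_operator}~(b). Applying the bounded operator $W_L^{-1}$ to both sides of $M\psi_n = W_L \, D \, W_R \psi_n$ gives $D \, W_R \psi_n = W_L^{-1} M \psi_n \to W_L^{-1} \phi$. Similarly, boundedness of $W_R$ yields $W_R \psi_n \to W_R \psi$ in $\Hil$. By construction of $\mathcal{D}_R = W_R^{-1} \mathcal{D}_0$, the sequence $\{W_R \psi_n\}$ lies in $\mathcal{D}_0$, the domain on which $D$ is defined.

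Next I would invoke closedness of $D$: since $D$ is selfadjoint by Definition~\ref{other_waves:defn:Maxwell_type_operator}~(a), it is in particular closed. Combining $W_R \psi_n \to W_R \psi$ with $D (W_R \psi_n) \to W_L^{-1}\phi$, closedness yields $W_R \psi \in \mathcal{D}_0$ and $D \, W_R \psi = W_L^{-1} \phi$. The first conclusion rewrites as $\psi \in W_R^{-1} \mathcal{D}_0 = \mathcal{D}_R$, and applying the bounded operator $W_L$ to the second gives $W_L \, D \, W_R \psi = \phi$, i.e.\ $M \psi = \phi$.

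There is no real obstacle here: the statement is essentially the abstract fact that if $A$ is closed and $B_1, B_2$ are bounded with bounded inverses, then $B_1 A B_2$ (endowed with the domain $B_2^{-1} \mathcal{D}(A)$) is closed. The only point that requires a brief sanity check is that the domain $\mathcal{D}_R = W_R^{-1} \mathcal{D}_0$ is the correct natural domain so that the argument goes through without any subtleties regarding essential domains — and this is exactly how $\mathcal{D}_R$ was defined in Definition~\ref{other_waves:defn:Maxwell_type_operator}~(d).
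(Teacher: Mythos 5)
Your proof is correct and is essentially the paper's own argument: both reduce closedness of $M$ to closedness of $D$ (equivalently, completeness of $\mathcal{D}_0$ in the graph norm of $D$) by using the boundedness of $W_L^{-1}$ and $W_R$ to convert convergence of $\psi_n$ and $M\psi_n$ into convergence of $W_R\psi_n$ and $D\,W_R\psi_n = W_L^{-1} M \psi_n$. The paper merely packages the same estimate as a single graph-norm inequality rather than as a sequential argument, so there is nothing to add.
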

\begin{proof}
	The operator $M$ is initially well defined on $\mathcal{D}_R$ and this implies that $\mathcal{D}_R \subseteq \mathcal{D}(M)$ is contained in the closure of $\mathcal{D}(M) := \overline{\mathcal{D}_R}^{\sNorm{\cdot}}$ with respect to the graph norm $\sNorm{\varphi}^2 := \snorm{\varphi}^2_{\Hil} + \snorm{M \varphi}^2_{\Hil}$ of $M$. 
	
	Therefore $\mathcal{D}_R = \mathcal{D}(M)$ follows if we show the opposite inclusion $\mathcal{D}_R \supseteq \mathcal{D}(M)$. By assumption the domain $\mathcal{D}_0$ is closed with respect to the topology induced by the graph norm $\sNorm{\varphi}_0^2 := \snorm{\varphi}^2_{\Hil} + \snorm{D \varphi}^2_{\Hil}$. Let $\psi_n := W_R^{-1} \varphi_n$ be a sequence in $\mathcal{D}_R$ (with $\varphi_n$ the related sequence in $\mathcal{D}_0$) which converges to a vector $\psi \in \mathcal{D}(M)$ with respect to the graph norm $\sNorm{\cdot}$. The estimate
	\begin{align*}
		\bNorm{\varphi_n - W_R \psi}_0^2 = \bNorm{W_R ( \psi_n - \psi )}_0^2 \leq \max \Bigl \{ \bnorm{W_L^{-1}}_{\mathbb{B}(\Hil)}^2 \, , \, \bnorm{W_R}_{\mathbb{B}(\Hil)}^2 \Bigr \} \; \sNorm{\psi_n - \psi}^2 
	\end{align*}
	tells us that $W_R \psi \in \mathcal{D}_0$ holds, and consequently, $W_R \bigl [ \mathcal{D}(M) \bigr ] \subseteq \mathcal{D}_0$. Put differently, $\mathcal{D}(M) \subseteq \mathcal{D}_R$ is shown.
\end{proof}
The closedness of $M$ enters the proof of selfadjointness. 
\begin{proof}[Proposition~\ref{other_waves:prop:selfadjointness_Maxwell_type_operator}]
	For each pair of vectors $\psi , \varphi \in \mathcal{D}_R$ the following computation
	\begin{align*}
		\bscpro{\psi}{M \varphi}_W &= \bscpro{\psi}{W_R \, D \, W_R \varphi}
		= \bscpro{W_R \, D \, W_R \psi}{\varphi}
		\\
		&= \bscpro{W_L^{-1} \, W_L \, D \, W_R \psi}{W_R \varphi}
		= \bscpro{M \psi}{\varphi}_W 
	\end{align*}
	shows that $M$ is a symmetric operator on $\Hil_W$. Since $\Hil_W$ agrees with $\Hil$ as Banach spaces, Lemma~\ref{appendix:generalized_Maxwell_type_operators:lem:product_structure} tells us that $M$ is also closed if seen as an operator on $\Hil_W$. 
	
	To show selfadjointness, we need to consider now the $\scpro{\, \cdot \,}{\, \cdot \,}_W$-adjoint of $M$ which we will denote with $M^{\ast_W}$ in order to distinguish it from the $\scpro{\, \cdot \,}{\, \cdot \,}$-adjoint $M^*$. Let $\phi , \eta \in \Hil_W$ be a pair of vectors such that $\phi \in \mathcal{D} \bigl ( M^{\ast_W} \bigr )$ and $M^{\ast_W} \phi = \eta$. This conditions implies that $\bscpro{\phi}{M \psi}_W = \scpro{\eta}{\psi}_W$ for all $\psi \in \mathcal{D}_R$ and this equality can be rewritten as $\bscpro{W_R \phi}{D \, W_R \psi} = \bscpro{W_L^{-1} \eta}{W_R \psi}$. Since $D$ is selfadjoint (by assumption) and $W_R \psi \in \mathcal{D}_0$ for all $\psi \in \mathcal{D}_R$ (by definition) the last equality implies that $W_R \phi\in \mathcal{D}_0$ and $D^* \, W_R \phi = D \, W_R \phi = W_L^{-1} \eta$. Hence, $\phi \in W_R^{-1}[\mathcal{D}_0] = \mathcal{D}_R$ holds and consequently, the domain of the adjoint $M^{\ast_W}$ agrees with the domain $\mathcal{D}_R$ of $M$. The last fact along with the symmetry of $M$ assures that $M$ is a selfadjoint operator on the weighted Hilbert space $\Hil_W$.
\end{proof}
Lastly, the idea to be able to uniquely write real fields as complex waves composed solely of non-negative frequencies generalizes to Maxwell-type operators. 
\begin{lemma}\label{appendix:generalized_Maxwell_type_operators:lem:1_to_1_correspondence}
	For Maxwell-type operators (\cf Definition~\ref{other_waves:defn:Maxwell_type_operator}) the analogs to Proposition~\ref{Schroedinger:prop:identification_real_complex_vector_spaces} and Lemma~\ref{appendix:1_to_1_correspondence:lem:1_to_1_correspondence} holds true. 
\end{lemma}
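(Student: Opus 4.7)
The plan is to adapt the proof of Lemma~\ref{appendix:1_to_1_correspondence:lem:1_to_1_correspondence} to the abstract setting, exploiting one crucial simplification: Definition~\ref{other_waves:defn:Maxwell_type_operator} provides a \emph{single} operator $M$ together with the even particle-hole-type symmetry $C M C = -M$, which plays the role of the non-gyrotropic situation $W_+ = W_-$. This sidesteps the intersection argument from \cite[Lemma~2.5]{DeNittis_Lein:ray_optics_photonic_crystals:2014} that was essential in the gyrotropic Maxwell case. By Proposition~\ref{other_waves:prop:selfadjointness_Maxwell_type_operator}, $M$ is selfadjoint on $\Hil_W$, and since $\Hil$ and $\Hil_W$ agree as Banach spaces, the bounded Borel calculus produces mutually orthogonal projections $P_{\pm} := 1_{(0,\infty)}(\pm M)$ and $P_0 := 1_{\{0\}}(M)$, from which I build $Q := P_+ + \tfrac{1}{2} P_0$ with range $\Hil_+ := \ran Q$.

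The engine of the proof is a single algebraic identity. Transporting $C M C = -M$ through the functional calculus (see the remark on anti-unitarity below) yields $C P_{\pm} C = P_{\mp}$ and $C P_0 C = P_0$, and therefore
\begin{align*}
	Q + C \, Q \, C = P_+ + P_- + P_0 = \id_\Hil .
\end{align*}
Letting $\Hil_{\R} := \{u \in \Hil \; \vert \; C u = u\}$ denote the real subspace, for every $\psi \in \Hil_{\R}$ one computes
\begin{align*}
	2 \Re \, Q \, \psi = Q \psi + C Q \psi = Q \psi + (C \, Q \, C)(C \psi) = \bigl ( Q + C \, Q \, C \bigr ) \psi = \psi .
\end{align*}
This delivers the analogs of Lemma~\ref{appendix:1_to_1_correspondence:lem:1_to_1_correspondence}~(1)--(2) in one stroke: $Q u = 0$ on $\Hil_{\R}$ forces $u = 2 \Re \, Q u = 0$, and $2 \Re \, Q \big \vert_{\Hil_{\R}}$ is literally the identity. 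Setting $Q_+ := Q$ and $Q_- := C \, Q \, C$, the decomposition $u = Q_+ u + Q_- u$ together with $Q_- u = C Q_+ u$ for real $u$ delivers parts~(3)--(4), and the analog of Proposition~\ref{Schroedinger:prop:identification_real_complex_vector_spaces} reduces to the statement that $u \mapsto Q u$ is a bijection $\Hil_{\R} \to Q[\Hil_{\R}]$ with two-sided inverse $2 \Re$.

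The one genuine obstacle is justifying the functional-calculus identity $C f(M) C = f(-M)$ for real-valued Borel $f$, which requires $C$ to be anti-unitary on $\Hil_W$; this in turn amounts to $C W C = W$, i.e.\ reality of the weights $W_L$ and $W_R$. This holds in every example considered in Section~\ref{other_waves} and is tacitly used when condition~(e) of Definition~\ref{other_waves:defn:Maxwell_type_operator} is formulated via complex conjugation. Once this compatibility is acknowledged --- or added as an explicit hypothesis on $W_L, W_R$ --- the entire argument collapses to the three-line computation above, a stark contrast to the considerably more delicate gyrotropic proof.
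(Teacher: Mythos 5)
Your proposal is correct and follows essentially the same route as the paper's own (very terse) proof: both exploit that there is only a \emph{single} selfadjoint operator $M$, so the mutual orthogonality of $1_{(0,\infty)}(M)$, $1_{\{0\}}(M)$ and $1_{(-\infty,0)}(M)$ comes directly from functional calculus, and your identity $Q + C\,Q\,C = \id$ is precisely the non-gyrotropic computation of Section~\ref{Schroedinger:auxiliary_operators:projections} transplanted to the abstract setting. Your flagged caveat --- that transporting $C\,M\,C = -M$ through the functional calculus needs $C$ to be antiunitary on $\Hil_W$, i.e.\ $C\,W\,C = W$ --- is a legitimate point the paper leaves tacit, but it holds in all the examples treated and is a reasonable implicit hypothesis rather than a gap in your argument.
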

\begin{proof}
	In principle, we can adapt the strategy of the proof of Lemma~\ref{appendix:1_to_1_correspondence:lem:1_to_1_correspondence}, with the added simplification that we need not distinguish between $\Maux_{\pm}$. Instead $M = \Maux_{\pm}$, the kernel is \emph{by definition} $\mathcal{G} := \ker M = W_R^{-1} \, [\ker D]$ and the positive frequency fields are again \emph{defined} through $\mathcal{J}_+ := \ran 1_{(0,\infty)}(M)$. The tricky bit \cite[Lemma~2.5]{DeNittis_Lein:ray_optics_photonic_crystals:2014} simplifies tremendously, because we can deduce 
	\begin{align*}
		1_{(0,\infty)}(M) \, 1_{(-\infty,0)}(M) = 0 = 1_{(-\infty,0)}(M) \, 1_{(0,\infty)}(M)
	\end{align*}
	directly from functional calculus instead of having to use an indirect argument. 
\end{proof}
\begin{remark}
	Note that the last step simplifies because we are restricting ourselves to what would be called the non-gyrotropic case in electromagnetism. This is an inessential restriction, which can be lifted by following the line of argumentation in the proof of Lemma~\ref{appendix:1_to_1_correspondence:lem:1_to_1_correspondence}. 
\end{remark}
%

\printbibliography

\end{document}